\title{A Modular Approach to Construct Signature-Free BRB Algorithms under a Message Adversary}
\titlerunning{Signature-Free BRB Algorithms under a Message Adversary}
\author{Timoth\'e Albouy}{IRISA, France}{timothe.albouy@irisa.fr}{https://orcid.org/0000-0001-9419-6646}{}
\author{Davide Frey}{Inria, France}{davide.frey@inria.fr}{https://orcid.org/0000-0002-6730-5744}{}
\author{Michel Raynal}{IRISA, France}{michel.raynal@irisa.fr}{https://orcid.org/0000-0002-3355-8719}{}
\author{Fran\c{c}ois Ta\"iani}{IRISA, France}{francois.taiani@irisa.fr}{https://orcid.org/0000-0002-9692-5678}{}
\authorrunning{T. Albouy, D. Frey, M. Raynal and F. Ta\"iani}
\keywords{Asynchronous system, Byzantine processes, Communication abstraction, Delivery predicate, Fault-Tolerance, Forwarding predicate, Message adversary, Message loss, Modularity, Quorum, Reliable broadcast, Signature-free algorithm, Two-phase commit.}
\renewcommand{\paragraph}[1]{%
  \medskip
  \noindent \textbf{#1} \hspace{.1em}
}
\newif\ifannotes
\newcommand{\ccomment}[1]{}
\newcommand{\annote}[3]{}
\renewcommand{\annote}[3]{{\color{#3}%
	\colorbox{#3}{\bfseries\sffamily\tiny\textcolor{white}{#2}}
	\color{#3}
	$\blacktriangleright$\emph{#1}$\blacktriangleleft$}
}
\newcommand{\df}[1]{\annote{#1}{DF}{green}}
\newcommand{\ft}[1]{\annote{#1}{FT}{magenta}}
\newcommand{\ta}[1]{\annote{#1}{TA}{blue}}
\newcommand{\todo}[1]{\annote{#1}{Todo}{red}}
\newcounter{linecounter}
\renewcommand{\line}[1]{\refstepcounter{linecounter}\label{#1}(\arabic{linecounter})}
\newcommand{\resetline}[1]{\setcounter{linecounter}{0}#1}
\newcommand{\mylength}{0.5em}
\newcommand{\proofinappendix}{\em (Proof in Appendix~\ref{sec:sf-klcast-liveness}.)}
\newenvironment{proofsketch}{
  \begin{proof}
}{%
  \end{proof}
}%
\newcommand{\urbroadcast}{\ensuremath{\mathsf{ur\_broadcast}}\xspace}
\newcommand{\receivedd}{\ensuremath{\mathsf{received}}\xspace}
\newcommand{\brbroadcast}{\ensuremath{\mathsf{brb\_broadcast}}\xspace}
\newcommand{\mbrbroadcast}{\ensuremath{\mathsf{mbrb\_broadcast}}\xspace}
\newcommand{\mbrdeliver}{\ensuremath{\mathsf{mbrb\_deliver}}\xspace}
\newcommand{\kl}{\ensuremath{k2\ell}}
\newcommand{\klcast}{\ensuremath{\kl\mathsf{\_cast}}\xspace}
\newcommand{\kldeliver}{\ensuremath{\kl\mathsf{\_deliver}}\xspace}
\newcommand{\kldelivered}{\ensuremath{\kl\mathsf{\_delivered}}\xspace}
\newcommand{\checkdelivery}{\ensuremath{\mathsf{check\_delivery}}\xspace}
\newcommand{\sfklcast}{\ensuremath{\mathsf{SigFreeK2LCast}}\xspace}
\newcommand{\sbklcast}{\ensuremath{\mathsf{SigBasedK2LCast}}\xspace}
\newcommand{\maxfn}{\ensuremath{\mathsf{max}}\xspace}
\newcommand{\tb}{\ensuremath{t}\xspace}
\newcommand{\tm}{\ensuremath{d}\xspace}
\newcommand{\qd}{\ensuremath{\mathit{q_d}}\xspace}
\newcommand{\qf}{\ensuremath{\mathit{q_f}}\xspace}
\newcommand{\single}{\ensuremath{\mathit{single}}\xspace}
\newcommand{\sn}{\ensuremath{\mathit{sn}}\xspace}
\newcommand{\id}{\ensuremath{\mathit{id}}\xspace}
\newcommand{\obje}{\ensuremath{\mathit{obj}_{\textsc{e}}}\xspace}
\newcommand{\objr}{\ensuremath{\mathit{obj}_{\textsc{r}}}\xspace}
\newcommand{\objw}{\ensuremath{\mathit{obj}_{\textsc{w}}}\xspace}
\newcommand{\kv}{\ensuremath{k'}\xspace}
\newcommand{\kld}{\ensuremath{k}\xspace}
\newcommand{\lgd}{\ensuremath{\ell}\xspace}
\newcommand{\nodpty}{\ensuremath{\delta}\xspace}
\newcommand{\lmbr}{\ensuremath{\ell_{\mathit{MBRB}}}\xspace}
\newcommand{\ki}{\ensuremath{k_I}\xspace}
\newcommand{\kf}{\ensuremath{k_F}\xspace}
\newcommand{\knf}{\ensuremath{k_{\mathit{NF}}}\xspace}
\newcommand{\knfp}{\ensuremath{k_{\mathit{NB}}}\xspace}
\newcommand{\lef}{\ensuremath{\ell_e}\xspace}
\newcommand{\NF}{\ensuremath{\mathit{NF}}\xspace}
\newcommand{\NB}{\ensuremath{\mathit{NB}}\xspace}
\newcommand{\kur}{\ensuremath{k_U}\xspace}
\newcommand{\wA}{\ensuremath{w_A}\xspace}
\newcommand{\wAc}{\ensuremath{w_A^c}\xspace}
\newcommand{\wBc}{\ensuremath{w_B^c}\xspace}
\newcommand{\wCc}{\ensuremath{w_C^c}\xspace}
\newcommand{\wAb}{\ensuremath{w_A^b}\xspace}
\newcommand{\sigiv}{\ensuremath{\mathit{sig}_i}\xspace}
\newcommand{\sigsiv}{\ensuremath{\mathit{sigs}_i}\xspace}
\newcommand{\sigsv}{\ensuremath{\mathit{sigs}}\xspace}
\newcommand{\msgm}{\textsc{msg}\xspace}
\newcommand{\initm}{\textsc{init}\xspace}
\newcommand{\echom}{\textsc{echo}\xspace}
\newcommand{\readym}{\textsc{ready}\xspace}
\newcommand{\witnessm}{\textsc{witness}\xspace}
\newcommand{\bundlem}{\textsc{bundle}\xspace}
\newcommand{\Endorsem}{\textsc{endorse}\xspace} 
\newcommand{\imp}{message\xspace}
\newcommand{\imps}{messages\xspace}
\newcommand{\EndorseMsg}{\Endorsem message\xspace}   
\newcommand{\EndorseMsgs}{\Endorsem messages\xspace} 
\newcommand{\app}{app-message\xspace}
\newcommand{\apps}{app-messages\xspace}
\newcommand{\ffalse}{\ensuremath{\mathtt{false}}\xspace}
\newcommand{\ttrue}{\ensuremath{\mathtt{true}}\xspace}
\newcommand{\klCtheo}{\textsc{\kl-Correctness}\xspace}
\newcommand{\klVprop}{\textsc{\kl-Validity}\xspace}
\newcommand{\klNDNprop}{\textsc{\kl-No-duplication}\xspace}
\newcommand{\klNDYprop}{\textsc{\kl-Conditional-no-duplicity}\xspace}
\newcommand{\klLDprop}{\textsc{\kl-Local-delivery}\xspace}
\newcommand{\klSGDprop}{\textsc{\kl-Strong-Global-delivery}\xspace}
\newcommand{\klWGDprop}{\textsc{\kl-Weak-Global-delivery}\xspace}
\newcommand{\mbrCtheo}{\textsc{MBRB-Correctness}\xspace}
\newcommand{\mbrVprop}{\textsc{MBRB-Validity}\xspace}
\newcommand{\mbrNDNprop}{\textsc{MBRB-No-duplication}\xspace}
\newcommand{\mbrNDYprop}{\textsc{MBRB-No-duplicity}\xspace}
\newcommand{\mbrLDprop}{\textsc{MBRB-Local-delivery}\xspace}
\newcommand{\mbrGDprop}{\textsc{MBRB-Global-delivery}\xspace}
\newcounter{sfklassumption}
\newcounter{sbklassumption}
\newcommand{\sfklassumref}[1]{\refstepcounter{sfklassumption}\label{#1}}
\newcommand{\sbklassumref}[1]{\refstepcounter{sbklassumption}\label{#1}}
\newcommand{\sfklassum}{sf-\kl-Assumption\xspace}
\newcommand{\sbklassum}{sb-\kl-Assumption\xspace}
\newcommand{\bassum}{\hyperref[assum:bassum]{B87-Assumption}\xspace}
\newcommand{\irassum}{\hyperref[assum:irassum]{IR16-Assumption}\xspace}
\newcommand{\sfklassums}{sf-\kl-Assumptions\xspace}
\newcommand{\irBound}{\ensuremath{5\tb+12\tm+\frac{2\tb\tm}{\tb+2\tm}}\xspace}
\newcommand{\irBoundN}{\ensuremath{5\tb+12\tm+\frac{2\tb\tm}{\tb+2\tm}}\xspace}
\newcommand{\brBound}{\ensuremath{2\tb+\tm + \sqrt{\tb^2 + 6\tb\tm + \tm^2}\xspace}}
\newcommand{\brBoundN}{\ensuremath{3\tb+2\tm + 2\sqrt{\tb\tm}\xspace}}
\newcommand{\irqdval}{\ensuremath{\left\lfloor \frac{ n+3\tb}{2}\right\rfloor +3\tm+ 1}\xspace}
\newcommand{\irqfval}{\ensuremath{\left\lfloor \frac{n+\tb}{2}\right\rfloor  + 1 }\xspace}
\newcommand{\kval}[1][]{\ensuremath{\left\lfloor \frac{c(#1\qf-1)}{c-\tm-#1\qd+#1\qf} \right\rfloor + 1}\xspace}
\newcommand{\kir}{\ensuremath{\left\lfloor{\frac{c \big\lfloor \frac{n+\tb}{2} \big\rfloor}{c -  \tb - 4\tm}}\right\rfloor + 1}\xspace}
\newcommand{\lir}{\ensuremath{ \left\lceil{c \left( 1 - \frac{\tm}{c - \left\lfloor \frac{ n+3\tb}{2}\right\rfloor -3\tm} \right)}\right\rceil  }\xspace}
\newcommand{\lval}[1][]{\ensuremath{\left\lceil c\left(1-\frac{\tm}{c-#1\qd+1}\right)\right\rceil}\xspace}
\begin{document}

\maketitle

\begin{abstract}
This paper explores how reliable broadcast can be implemented without signatures when facing a dual adversary that can both corrupt processes and remove messages.
More precisely, we consider an asynchronous $n$-process message-passing system in which up to \tb processes are Byzantine and where, at the network level, for each message broadcast by a correct process, an adversary can prevent up to \tm processes from receiving it (the integer \tm defines the power of the message adversary).
So, unlike previous works, this work considers that not only can computing entities be faulty (Byzantine processes), but, in addition, that the network can also lose messages.
To this end, the paper adopts a modular strategy and first introduces a new basic communication abstraction denoted \kl-cast, which simplifies quorum engineering, and studies its properties in this new adversarial context.
Then, the paper deconstructs existing signature-free Byzantine-tolerant asynchronous broadcast algorithms and, with the help of the \kl-cast communication abstraction, reconstructs versions of them that tolerate both Byzantine processes and message adversaries.
Interestingly, these reconstructed algorithms are also more efficient than the Byzantine-tolerant-only algorithms from which they originate.
\end{abstract}

\newpage

\section{Introduction} \label{sec-intro}

\paragraph{Context: reliable broadcast and message adversaries.}
Reliable broadcast (RB) is a fundamental abstraction in distributed computing that lies at the core of many higher-level constructions (including distributed memories, distributed agreement, and state machine replication).
Essentially, RB requires that non-faulty (i.e., correct) processes agree on the set of messages they deliver so that this set includes at least all the messages that correct processes have broadcast.


In a failure-free system, implementing reliable broadcast on top of an asynchronous network is relatively straightforward \cite{R13}.
If processes may fail, and in particular if failed processes may behave arbitrarily (a failure known as Byzantine~\cite{LSP82,PSL80}), implementing reliable broadcast becomes far from trivial as Byzantine processes may collude to fool correct processes~\cite{R18}.
An algorithm that solves reliable broadcast in the presence of Byzantine processes is known as implementing BRB (Byzantine reliable broadcast).


BRB in asynchronous networks (in which no bound is known over message delays) has been extensively studied over the last forty years~\cite{ANRX21,ARX21,AW04,B87,CGR11,GKKPST20,IR16,MDT15,MR98,NRSVX20,R18}.
Existing BRB algorithms typically assume they execute over a \textit{reliable} point-to-point network, i.e., a network in which sent messages are eventually received.
This is a reasonable assumption as most unreliable networks can be made reliable using re-transmissions and acknowledgments (e.g. a timeout-free version of the TCP protocol).

This work takes a drastic turn away from this usual assumption and explores how BRB might be provided when processes execute on an \textit{unreliable} network that might lose point-to-point messages.
Our motivation is threefold:
First, in volatile networks (e.g., mobile networks or networks under attack), processes might remain disconnected over long periods (e.g., weeks or months), leading in practice to considerable delays (a.k.a. tail latencies) when using re-transmissions. Because most asynchronous Byzantine-tolerant algorithms exploit intersecting quorums, these tail latencies can potentially limit the performance of BRB algorithms drastically, a well-known phenomenon in systems research~\cite{CSJRLWZCC21,DZ19,YPAKT22}.
Second, re-transmissions require that correct processes be eventually able to receive messages and cannot, therefore, model the permanent disconnection of correct processes.
Finally,  this question is interesting in its own right, as it opens up novel trade-offs between algorithm guarantees and network requirements, with potential application to the design of reactive distributed algorithms tolerant to both processes and network failures.

The impact of network faults on distributed algorithms has been studied in several works, in particular using the concept of message adversaries (MA). Message adversaries were initially introduced by N.~Santoro and P.~Widmayer in~\cite{SW89,SW07}\footnote{Where the terminology {\it communication failure model} and {\it ubiquitous faults} is used instead of MA.
While we consider only message losses, the work of Santoro and Widmayer also considers message additions and corruptions.}, and then used (sometimes implicitly) in many works (e.g.,~\cite{AFRT22,AG13,CS09,DPPU88,RS13,R15,SW07,TZKZ20}).
Initially proposed for synchronous networks, an MA may suppress point-to-point network messages according to rules that define its power.
For instance, a tree MA in a synchronous network might suppress any message except those transiting on an (unknown) spanning tree of the network, with this spanning tree possibly changing in each round.

The message losses that an MA causes differ fundamentally from Byzantine faults.
This is because an MA can affect the messages sent by any correct process, and can change the processes it targets during an execution, in contrast to Byzantine corruptions that are tied to a set of fixed processes (which is why MA faults are sometimes dubbed \emph{transient} or \emph{mobile}).
For instance, it may be tempting to think that Byzantine-fault-tolerant (BFT) algorithms inherently tolerate message losses from correct processes because they can only afford to wait for at most $n-t$ messages (where $n$ is the total number of processes, and $t$ the upper bound on Byzantine processes).
In an asynchronous network, a BFT algorithm could therefore miss up to $t$ messages from correct processes, if those are delayed by the scheduler.
This scenario only applies, however, in the particular circumstance where the $t$ faulty processes send messages that are received and accepted as valid by correct recipients.
This caveat is fundamental.
If the faulty processes remain silent or send contradicting messages (if they are Byzantine), then a BFT algorithm cannot afford to lose $t$ messages from correct processes.

\paragraph{Content of the paper.}
This paper combines a Message Adversary with Byzantine processes, and studies the signature-free implementation of Byzantine Reliable Broadcast (BRB) in an asynchronous, fully connected network subject to this MA and to at most \tb Byzantine faults.
The MA models lossy connections by preventing up to \tm point-to-point messages from reaching their recipient every time a correct process seeks to communicate with the rest of the network.\footnote{
A close but different notion was introduced by Dolev in~\cite{D82}, which considers static $\kappa$-connected networks.
If the adversary selects statically, for each correct sender, \tm correct processes that do not receive any of this sender's messages, the proposed model includes Dolev's model where $\kappa = n-\tm$.}

To limit as much as possible our working assumptions, we further assume that the computability power of the adversary is unbounded (except for the cryptography-based algorithm presented in Section~\ref{sec:sb-klcast}), which precludes the use of signatures.
(We do assume, however, that each point-to-point communication channel is authenticated.)\footnote{
Let us mention that the problem of designing an MA-tolerant BRB has been solved in~\cite{AFRT22} by leveraging digital signatures within a monolithic algorithm. Finding a signature-free counterpart remained, however, an open question, which we answer positively in this paper using a modular strategy.}

This represents a particularly challenging environment, as the MA may target different correct processes every time the network is used or focus indefinitely on the same (correct) victims.
Further, the Byzantine processes may collude with the MA for maximal impact. 

For clarity, in the remainder of the paper, we simply call \emph{\imps} the point-to-point network messages used internally by a BRB algorithm. (The MA may suppress these \imps.) We distinguish these \imps from the messages the BRB algorithm seeks to disseminate, which we call ``\textit{application messages}'' (\textit{\apps} for short).
%
%
%
%
In such a context, the paper presents the following results.
\begin{itemize}
    \item It first introduces a new modular abstraction, named \kl-cast, which appears to be a foundational building block to implement BRB abstractions (with or without the presence of an MA).
    This communication abstraction systematically dissociates the predicate used to forward (network) \imps from the predicate that triggers the delivery of an \app, and lies at the heart of the work presented in the paper.
    When proving the \kl-cast communication abstraction, the paper presents an in-depth analysis of the power of an adversary that controls at most \tb Byzantine processes and an MA of power \tm. 
    
    \item Then, the paper deconstructs two signature-free BRB algorithms (Bracha's~\cite{B87} and Imbs and Raynal's~\cite{IR16} algorithms) and reconstructs versions of them that tolerate {\it both} Byzantine processes and MA.
    Interestingly, when considering Byzantine failures only, these deconstructed versions use smaller quorum sizes and are, therefore, more efficient than their initial counterparts.
\end{itemize}

So, this paper is not only the first to present signature-free BRB algorithms in the context of asynchrony and MA but also the first to propose an intermediary communication abstraction that allows us to obtain efficient BRB algorithms.
For clarity, we give in Table~\ref{table:terminology} the list of acronyms and notations used in this paper.

\begin{table}[ht]
\begin{center}
\renewcommand{\baselinestretch}{1}
\small
\begin{tabular}{|c|c|}
\hline {\bf Acronyms} & {\bf Meaning}\\
\hline BRB & Byzantine-tolerant reliable broadcast \\
\hline MA & Message adversary \\
\hline MBRB & Message adversary- and Byzantine-tolerant reliable broadcast\\
\hline
\hline {\bf Notations} & {\bf Meaning}\\
\hline $n$ & number of processes in the network \\
\hline \tb & upper bound on the number of Byzantine processes \\
\hline \tm & power of the message adversary \\
\hline $c$ & effective number of correct processes in a run ($n-\tb \leq c \leq n$) \\
\hline \kld & minimal nb of correct processes that \kl-cast a message\\
\hline \lgd & minimal nb of correct processes that \kl-deliver a message\\
\hline \kv & minimal nb of correct \kl-casts if there is a correct \kl-delivery\\
\hline \nodpty & \ttrue iff no-duplicity is guaranteed, \ffalse otherwise\\
\hline \qd & size of the \kl-delivery quorum\\
\hline \qf & size of the forwarding quorum\\
\hline \single & \ttrue iff only a single message can be endorsed, \ffalse otherwise\\
\hline
\end{tabular}
\end{center}
\caption{Acronyms and notations}
\label{table:terminology}
\end{table}

\paragraph{Roadmap.}
The paper is composed of~\ref{sec-conclusion} sections and four appendices.
Section~\ref{sec-model} describes the underlying computing model.
Section~\ref{sec-klcast} presents the \kl-cast abstraction and its properties.
Section~\ref{sec-MBRB} defines the MA-tolerant BRB communication abstraction.
Section~\ref{sec-algos} shows that thanks to the \kl-cast abstraction, existing BRB algorithms can give rise to MA-tolerant BRB algorithms which, when $\tm = 0$, are more efficient than the BRB algorithms they originate from.
Section~\ref{sec:sb-klcast} presents a signature-based implementation of \kl-cast that possesses optimal guarantees.
Finally, Section~\ref{sec-conclusion} concludes the paper.
Due to page limitations, some contributions are presented in appendices: namely some proofs and a numerical evaluation of the \kl-cast abstraction.

\section{Computing Model} \label{sec-model}

\paragraph{Process model.}
The system is composed of $n$ asynchronous sequential processes denoted $p_1,...,p_n$.
Each process $p_i$ has a distinct identity, known to other processes.
For simplicity and without loss of generality, we assume that $i$ is the identity of $p_i$.

In terms of faults, up to $\tb \geq 0$ processes can be Byzantine, where a Byzantine process is a process whose behavior does not follow the code specified by its algorithm~\cite{LSP82,PSL80}.
Byzantine processes may collude to fool non-Byzantine processes (also called correct processes).
In this model, the premature stop (crash) of a process is a Byzantine failure.
In the following, given an execution, $c$ denotes the effective number of processes that behave correctly in that execution.
We always have $n-\tb \leq c \leq n$.
While this number remains unknown to correct processes, it is used to analyze and characterize (more precisely than using its worse value $n-\tb$) the guarantees provided by the proposed algorithms.

Finally, the processes have no access to random numbers, and their computability power is unbounded.
Hence, the algorithms presented in the paper are deterministic and signature-free (except the signature-based algorithm presented in Section~\ref{sec:sb-klcast}).

\paragraph{Communication model.}
Processes communicate by exchanging \imps through a fully connected asynchronous point-to-point network, assumed to be reliable in the sense it neither corrupts, duplicates, nor creates \imps.
As far as \imps losses are concerned, the network is under the control of an adversary (see below) that can suppress \imps.

Let \msgm be a \imp type and $v$ the associated value.
A process can invoke the best-effort broadcast macro-operation denoted $\urbroadcast(\msgm(v))$, which is a shorthand for ``{\bf for all} $j \in \{1,\cdots, n\}$ {\bf do} ${\sf send}$ $\msgm(v)$ {\bf to} $p_j$ {\bf end for}''.
Correct processes are assumed to invoke \urbroadcast to send \imps. When they do, we say that the \imps are {\it ur-broadcast} and {\it received}.
The operation $\urbroadcast(\msgm(v))$ is not reliable.
For example, if the invoking process crashes during its invocation, an arbitrary subset of processes receive the \imp $\msgm(v)$.
Moreover, due to its very nature, a Byzantine process can send fake \imps without using the macro-operation \urbroadcast.


\paragraph{Message adversary.}
Let \tm be an integer constant such that $0 \leq \tm < n-t$.
The communication network is controlled by an MA (as defined in Section~\ref{sec-intro}), which eliminates \imps ur-broadcast by correct processes, so these \imps are lost.
More precisely, when a correct process invokes $\urbroadcast(\msgm(v))$, the MA is allowed to arbitrarily suppress up to \tm copies of the \imp $\msgm(v)$ intended to correct processes\footnote{%
Note that this message adversary is not limited to algorithms that use the \urbroadcast macro-operation. The same adversary can be equivalently defined for an operation $\mathsf{ur\_multicast}$ that sends a \imp to a dynamically defined 
subset of processes (be it multiple recipients or only one in the case of unicast), by stipulating that the MA can still suppress up to \tm copies of this \imp.
In this case, the most robust way for correct processes to disseminate a \imp is to send it to all processes, i.e. to fall back on a \urbroadcast operation.}.
This means that, although the sender is correct, up to \tm correct processes may miss the \imp $\msgm(v)$.
The extreme case $\tm=0$ corresponds to the case where no \imp is lost.

As an example, let us consider a set $D$ of correct processes, where $1 \leq |D| \leq \tm$, such that during some period of time, the MA suppresses all the \imps sent to them.
It follows that, during this period of time, this set of processes appears as being input-disconnected from the other correct processes.
Note that the size and the content of $D$ can vary with time and are never known by the correct processes.


\section{\texorpdfstring{\kl}{k2l}-Cast Abstraction}
\label{sec-klcast}

Signature-free BRB algorithms~\cite{ART19,B87,IR16} often rely on successive waves of internal messages (e.g. the \echom or \readym messages of Bracha's algorithm~\cite{B87}) to provide safety and liveness.
Each wave is characterized by a threshold-based predicate that triggers the algorithm's next phase when fulfilled (e.g. enough \echom messages for the same \app $m$).

In this section, we introduce, implement, and prove a new modular abstraction, called \kl-cast, that encapsulates a wave/thresholding mechanism that is both Byzantine- and MA-tolerant.
As previously announced, we then use this abstraction to reconstruct MA-tolerant BRB algorithms in Section~\ref{sec-algos} from two existing BRB algorithms~\cite{B87,IR16}.

\subsection{Definition}
\kl-cast (for $k$-to-$\ell$-cast) is a many-to-many communication abstraction\footnote{
An example of this family is the binary reliable broadcast introduced in~\cite{MMR14}, which is defined by specific delivery properties---not including MA-tolerance---allowing binary consensus to be solved efficiently with the help of a common coin.}.
Intuitively, it relates the number \kld of correct processes that send a message $m$ (we say that these processes \emph{\kl-cast} $m$) with the number \lgd of correct processes that deliver $m$ (we say that they \emph{\kl-deliver} $m$).
Both \kld and \lgd are subject to thresholding constraints: enough correct processes must \kl-cast a message for it to be \kl-delivered at least once; and as soon as one (correct) \kl-delivery occurs, some minimal number of correct processes are guaranteed to \kl-deliver as well.

More formally, \kl-cast is a multi-shot abstraction, i.e. each \app $m$ that is \kl-cast or \kl-delivered is associated with an identity \id.
(Typically, such an identity is a pair consisting of a process identity and a sequence number.)
It provides two operations, \klcast and \kldeliver, whose behavior is defined by the values of four parameters: three integers \kv, \kld, \lgd, and a Boolean \nodpty.
This behavior is captured by the following six properties:
\begin{itemize} \label{def-kl-parameters}
\item Safety:
\begin{itemize}
    \item \klVprop. 
    If a correct process $p_i$ \kl-delivers an \app $m$ with identity \id, 
    then at least \kv correct processes \kl-cast $m$ with
    identity \id.
    
    \item \klNDNprop.
    A correct process \kl-delivers at most one \app $m$ with identity~\id.
    
    \item \klNDYprop.
    If the Boolean \nodpty is \ttrue, then no two different correct processes \kl-deliver different \apps with the same identity~\id.
\end{itemize}
\item Liveness\footnote{
The liveness properties comprise a \emph{local} delivery property that provides a necessary condition for the \kl-delivery of an \app by at least \emph{one} correct process, and two \emph{global} delivery properties that consider the collective behavior of correct processes.
}:
\begin{itemize}
    \item \klLDprop.
    If at least \kld correct processes
    \kl-cast an \app $m$ with identity \id and no correct
    process \kl-casts an \app $m' \neq m$ with identity~\id, then
    at least one correct process \kl-delivers the \app $m$ with
    identity~\id.

  \item \klWGDprop. 
    If a correct process \kl-delivers an \app $m$ with identity \id, then at least \lgd correct processes \kl-deliver an \app $m'$ with identity \id (each of them possibly different from $m$).
    
  \item \klSGDprop.
    If a correct process \kl-delivers an \app $m$ with identity \id, and no correct process \kl-casts an \app $m' \neq m$ with identity \id, then at least \lgd correct processes \kl-deliver the \app $m$ with identity \id.
\end{itemize} 
\end{itemize} 

This specification is \emph{parameterized} in the sense that each tuple $(\kv, \kld, \lgd, \nodpty)$ defines a specific communication abstraction with different guarantees.
This versatility explains why the \kl-cast abstraction can be used to produce highly compact reconstructions of existing BRB algorithms, rendering them MA-tolerant in the process (using four and three lines of pseudo-code respectively, see Section~\ref{sec-algos}).
Despite this versatility, however, we will see in Section~\ref{sec-klcast-sf} that \kl-cast can be implemented using a single (parameterized) algorithm, underscoring the fundamental commonalities of MA-tolerant BRB algorithms.

Intuitively, the parameters \kv, \kld, and \lgd hobble the disruption power of the Byzantine/MA adversary by setting limits on the number of correct processes that are either required or guaranteed to be involved in one communication ``wave'' (corresponding to one identity \id).
\kv sets the minimal number of correct processes that must \kl-cast for any \kl-delivery to occur: it thus limits the ability of the Byzantine/MA adversary to trigger spurious \kl-deliveries. The role of \kld is symmetrical.
It guarantees that some \kl-delivery will necessarily occur if \kld correct processes \kl-cast some message.
It thus prevents the adversary from silencing correct processes as soon as some critical mass of them participates.
Finally, \lgd captures a ``quite-a-few-or-nothing'' guarantee that mirrors the traditional ``all-or-nothing'' delivery  guarantee of traditional BRB.
As soon as one correct \kl-delivery occurs (for some identity \id), then \lgd correct processes must also \kl-deliver (with the same identity).

The fourth parameter, \nodpty, is a flag that when \ttrue enforces agreement between \kl-deliveries. When $\nodpty=\ttrue$, the \klNDYprop property implies that all the \apps $m'$ involved in the \klWGDprop property are equal to $m$.


\ta{In DMT, we need an additional property: guarantee that a correct process delivers if it is not a victim of the MA for sufficiently long (reuse formulation of SSS)}


\subsection{A Signature-Free Implementation of \texorpdfstring{\kl}{k2l}-Cast} \label{sec-klcast-sf}

Among the many possible ways of implementing  \kl-cast, this section 
presents a quorum-based\footnote{
In this paper, a quorum is a set of processes that (at the implementation level) ur-broadcast the same \imp.
This definition takes quorums in their ordinary sense. In a deliberative assembly, a quorum is the minimum number of members that must vote the same way for an irrevocable decision to be taken.
Let us notice that this definition does not require quorum intersection.
However, if quorums have a size greater than $\frac{n+\tb}{2}$, the intersection of any two quorums contains, despite Byzantine processes, at least one correct process~\cite{B87,R18}.}
signature-free implementation\footnote{Another \kl-cast implementation, which 
uses digital signatures and allows to reach optimal values for \kld and \lgd,
is presented in Section~\ref{sec:sb-klcast}.}
of the abstraction.
To overcome the disruption caused by Byzantine processes and message losses from the MA, our algorithm uses the ur-broadcast primitive (cf. our communication model in Sec.~\ref{sec-model}) to accumulate and forward \EndorseMsgs before deciding whether to deliver.
Forwarding and delivery are triggered by \emph{two thresholds} (a pattern also found, for instance, in Bracha's BRB algorithm~\cite{B87}): 
\begin{itemize}
    \item A first threshold, \qd, triggers the delivery of an \app $m$ when enough \EndorseMsgs supporting $m$ have been received.
    \item A second threshold, \qf, which is lower than \qd, controls how \EndorseMsgs are forwarded during the algorithm's execution.
\end{itemize}

\sloppy
Forwarding, which is controlled by \qf, amplifies how correct processes react to \EndorseMsgs, and is instrumental to ensure the algorithm's liveness. As soon as some critical ``mass'' of agreeing \EndorseMsgs accumulates within the system, forwarding triggers a chain reaction which guarantees that a minimum number of correct processes eventually \kl-deliver the corresponding \app.

\begin{figure}[tb]
\begin{minipage}[b]{.68\textwidth}
    \input{algorithms/sf-klcast}
    \renewcommand{\figurename}{Algorithm}
    \captionof{algorithm}{Signature-free \kl-cast (code~for~$p_i$)}
    \label{alg:sf-klcast}
\end{minipage}%
\hfil
\begin{minipage}[b]{.27\textwidth}
    \centering
    \resizebox{1\textwidth}{!}{
      \centering
\begin{tikzpicture}
    \def\totalHeight{5.5}

    \draw[-{>[scale=2.5,length=2,width=3]}] (0,\totalHeight) -- (0,\totalHeight/2+.6);
    \draw[-{>[scale=2.5,length=2,width=3]}] (0,\totalHeight/2) -- (0,.6);

    \node[draw,align=center,fill=white,text width=8.5em] at (0,\totalHeight) {
    Underlying system\\
    \smallskip
    $\langle n,\tb,\tm,c \rangle$};
    
    \node[draw,fill=white,rounded corners=.5em] at (0,3*\totalHeight/4) {\sfklassums~\ref{assum:base}-\ref{assum:r0}};
    
    \node[draw,align=center,fill=white,text width=8.5em] at (0,\totalHeight/2) {
    Implementation\\
    \smallskip
    $\langle \qd,\qf,\single \rangle$};
    
    \node[draw,fill=white,rounded corners=.5em] at (0,\totalHeight/4) {Theorem~\ref{theo:sf-kl-correctness}};
    
    \node[draw,align=center,fill=white,text width=8.5em] at (0,0) {
    \kl-cast object\\
    \smallskip
    $\langle \kv,\kld,\lgd,\nodpty \rangle$};
\end{tikzpicture}
    }
    \vspace{-.5em}
    \captionof{figure}{From the system parameters to a \kl-cast implementation}
    \label{fig-global-view} 
\end{minipage}
\end{figure}

More concretely, our algorithm provides an object (\sfklcast, Alg.~\ref{alg:sf-klcast}), instantiated using the function 
$\sfklcast(\qd,\qf,\single)$, using three input parameters:
\begin{itemize}
    \item \qd: the number of matching \EndorseMsgs that must be received from distinct processes in order to \kl-deliver an \app.
    
    \item \qf: the number of matching \EndorseMsgs that must be received from distinct processes for the local $p_i$ to endorse the corresponding \app (if it has not yet).
    
    \item \single: 
      a Boolean that controls whether a given correct process can endorse different \apps for the same identity \id ($\single=\ffalse$), or not ($\single=\ttrue$).
\end{itemize}

The algorithm provides the operations \klcast and \kldeliver.
Given an \app $m$ with identity \id, the operation $\klcast(m,\id)$ ur-broadcasts $\Endorsem{}(m,\id)$ provided $p_i$ has not yet endorsed any different \app for the same identity \id (lines~\ref{SFKL-cond-bcast}-\ref{SFKL-end-cond-bcast}). 
When $p_i$ receives a message $\Endorsem{}(m,\id)$, its executes two steps.
If the forwarding quorum \qf has been reached, $p_i$ first retransmits $\Endorsem{}(m,\id)$ (Forwarding step, lines~\ref{SFKL-cond-fwd}-\ref{SFKL-end-cond-fwd}). Then, if the \kl-delivery quorum \qd is attained, $p_i$ \kl-delivers $m$ (Delivery step, lines~\ref{SFKL-cond-dlv}-\ref{SFKL-end-cond-dlv}).


For brevity, we define $\alpha = n + \qf -\tb -\tm - 1$.
Given an execution defined by the system parameters $n$, \tb, \tm, and $c$, Alg.~\ref{alg:sf-klcast} requires the following assumptions to hold for the input parameters \qf and \qd of a \kl-cast instance (a global picture linking all parameters is presented in Fig.~\ref{fig-global-view}).
The prefix ``sf'' stands for signature-free.

\begin{itemize}
    \item \sfklassum~\ref{assum:base}: 
    $c-\tm \geq \qd \geq \qf+\tb \geq 2\tb+1$, \sfklassumref{assum:base}
    
    \item \sfklassum~\ref{assum:disc}: $\alpha^{2} - 4 (\qf - 1)(n - \tb) \geq 0$, \sfklassumref{assum:disc}

    \item \sfklassum~\ref{assum:r1}: $\alpha (\qd - 1) - (\qf-1) (n - \tb) - (\qd - 1)^{2} > 0$, \sfklassumref{assum:r1}
    
    \item \sfklassum~\ref{assum:r0}: $\alpha(\qd -1 -\tb) - (\qf -1)(n - \tb)  - (\qd -1 - \tb)^2 \geq 0$. \sfklassumref{assum:r0}
\end{itemize}

In particular, the safety of Alg.~\ref{alg:sf-klcast} algorithm relies solely on~\sfklassum~\ref{assum:base}, while its liveness relies on all four of them.
\sfklassum~\ref{assum:disc} through~\ref{assum:r0} constrain the solutions of a second-degree inequality resulting from the combined action of the MA, the Byzantine processes, and the message-forwarding behavior of Alg.~\ref{alg:sf-klcast}.
We show in Appendix~\ref{sec:proofsMBRB} that, in practical cases, these assumptions can be satisfied by a bound of the form $n > \lambda \tb+ \xi \tm + f(\tb,\tm)$, where $\lambda, \xi \in \mathbb{N}$ and $f(\tb,0)=f(0,\tm)=0$.
Together, the assumptions allow Alg.~\ref{alg:sf-klcast} to provide a \kl-cast abstraction (with values of the parameters \kv, \kld, \lgd, and \nodpty defining a specific \kl-cast instance) as stated by the following theorem.

\begin{theorem}[\klCtheo] \label{theo:sf-kl-correctness}
If \sfklassums~{\em\ref{assum:base}}--{\em \ref{assum:r0}} are verified, Alg.~{\em \ref{alg:sf-klcast}} implements \kl-cast with the following guarantees:
\begin{itemize}
    \item {\em \klVprop} with $\kv = \qf-n+c$,
    \item {\em \klNDNprop},
    \item {\em \klNDYprop} with $\begin{aligned}
    \nodpty = \left(\qf > \frac{n+\tb}{2}\right) \lor \left(\single \land \qd > \frac{n+\tb}{2}\right)
    \end{aligned}$,
    \item {\em \klLDprop} with $\begin{aligned}\textstyle
    \kld = \left\lfloor \frac{c(\qf-1)}{c-\tm-\qd+\qf} \right\rfloor + 1,
    \end{aligned}$
    
    \item 
    $\left.\begin{aligned}\begin{cases}
    \text{if } \single = \ffalse, & {\text{\em \klWGDprop}}\\
    \text{if } \single = \ttrue,& {\text{\em \klSGDprop}}
    \end{cases}\end{aligned}\right\}$
    with 
    $\begin{aligned}
      \lgd = 
        {\textstyle \left\lceil c\left(1-\frac{\tm}{c-\qd+1}\right)\right\rceil} 
    \end{aligned}.$

\end{itemize}
\end{theorem}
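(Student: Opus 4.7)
I will verify each of the five guarantees of Theorem~\ref{theo:sf-kl-correctness} separately. The three safety properties (\klVprop, \klNDNprop, \klNDYprop) will rely only on \sfklassum~\ref{assum:base}; the two liveness properties (\klLDprop and, depending on $\single$, either \klWGDprop or \klSGDprop) combine a two-sided counting argument on correct-to-correct \imp deliveries with the three remaining \sfklassums.

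The safety proofs start with a ``first-forwarder lemma'': since \sfklassum~\ref{assum:base} gives $\qf \geq 2\tb + 1 > n - c$, no cascade of \msgm$(m,\id)$ forwards can be triggered by Byzantine processes alone, so the first correct process (in temporal order) to ur-broadcast \msgm$(m,\id)$ via line~\ref{SFKL-fwd} must have received at least $\qf - (n - c) = \qf - n + c$ of its $\qf$ copies from correct processes that had already kl-cast $m$ with \id; this yields \klVprop with $\kv = \qf - n + c$. \klNDNprop follows directly from the guard at line~\ref{SFKL-cond-dlv}. For \klNDYprop, I split on the two disjuncts of $\nodpty$: when $\qf > (n+\tb)/2$, two distinct kl-deliveries with identity \id would force (via \klVprop) two disjoint sets of $\qf - n + c$ correct kl-casters, a total exceeding $c$ since $c \geq n - \tb$; when $\single = \ttrue$ and $\qd > (n+\tb)/2$, the intersection of two $\qd$-sized delivery quorums contains a correct process, which under $\single = \ttrue$ cannot ur-broadcast two distinct \imps for the same \id.

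For liveness, let $N := |A^\star|$ count the correct processes that ever ur-broadcast \msgm$(m,\id)$, and $|D|$ the correct processes that end up receiving \msgm$(m,\id)$ from at least $\qd$ distinct senders; let $r_p$ count the correct senders of \msgm$(m,\id)$ received by a correct $p$. Under the hypotheses of \klLDprop or \klSGDprop, a second application of the first-forwarder lemma rules out any correct ur-broadcast of \msgm$(m',\id)$ with $m' \neq m$; for \klWGDprop (where $\single = \ffalse$) this is not needed since forwarding decisions are per-message independent. In every case a correct $p \notin A^\star$ satisfies $r_p \leq \qf - 1$. The message adversary gives $\sum_p r_p \geq N(c - \tm)$, while partitioning the correct receivers into $p \in D$, $p \in A^\star \setminus D$, and $p \notin A^\star$ upper-bounds this sum by $|D|\cdot N + (N - |D|)(\qd - 1) + (c - N)(\qf - 1)$. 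Rearranging yields
\[
  |D|(N - \qd + 1) \;\geq\; R(N) \;:=\; N(c - \tm - \qd + \qf) - c(\qf - 1),
\]
while the stability of $A^\star$ (no further forwards are possible) gives $h(N) := N^2 - (c - \tm + \qf - 1)N + c(\qf - 1) \geq 0$. For \klLDprop, the definition of $\kld$ makes $R(\kld) > 0$; were $N \leq \qd - 1$ the left-hand side would be non-positive against a strictly positive right-hand side, a contradiction; hence $N \geq \qd$ and $|D| \geq R(N)/(N - \qd + 1) \geq 1$. For \klWGDprop/\klSGDprop, \sfklassum~\ref{assum:disc} makes $h$ have real roots $N_- \leq N_+$; \sfklassum~\ref{assum:r1} (resp.\ \ref{assum:r0}) is equivalent to $h(\qd - 1) < 0$ (resp.\ $h(\qd - 1 - \tb) \leq 0$), placing $\qd - 1 - \tb$ in $[N_-, N_+]$; since $p_i$'s own delivery gives $N \geq \qd - \tb > \qd - 1 - \tb \geq N_-$, stability forces $N \geq N_+ > \qd - 1$. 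A final algebraic identity,
\[
  R(N) - c \Bigl(1 - \tfrac{\tm}{c - \qd + 1}\Bigr)(N - \qd + 1) \;=\; \frac{(N - c)\bigl[\tm(\qd - 1) - (\qd - \qf)(c - \qd + 1)\bigr]}{c - \qd + 1},
\]
is non-negative for $N \leq c$ thanks to the reformulation $(\qd - \qf)(c - \qd + 1) > \tm(\qd - 1)$ of \sfklassum~\ref{assum:r1}; combined with the key inequality this gives $|D| \geq c\bigl(1 - \tm/(c - \qd + 1)\bigr)$, hence $|D| \geq \lgd$ after taking ceilings. The $\single = \ttrue$ case gives \klSGDprop (no \app other than $m$ is ever kl-delivered with \id); the $\single = \ffalse$ case, applied to the specific $m$ kl-delivered by $p_i$, gives \klWGDprop.

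The main obstacle is the algebraic bookkeeping linking the four \sfklassums to the closed forms $\kld$ and $\lgd$. I expect the most delicate steps to be the two reformulations of \sfklassums~\ref{assum:r1} and~\ref{assum:r0} in terms of the roots of $h$, together with the monotonicity arguments that extend the \sfklassums from the worst-case $c = n - \tb$ to arbitrary $c \geq n - \tb$ (both $h(\qd - 1)$ and $h(\qd - 1 - \tb)$ turn out to be non-increasing in $c$ thanks to \sfklassum~\ref{assum:base}). Keeping floors, ceilings, and strict versus non-strict inequalities consistent throughout the bridge between the real-valued roots $N_\pm$ and the integer thresholds $\kld, \lgd$ will also require care.
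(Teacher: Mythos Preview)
Your proof is correct and follows the same overall strategy as the paper: a first-forwarder argument for safety, and for liveness a two-sided count of correct-to-correct \imp deliveries yielding both the key inequality $|D|(N-\qd+1)\ge R(N)$ and the quadratic constraint $h(N)\ge 0$, followed by a root analysis of $h$ driven by the four \sfklassums. Your presentation is somewhat more streamlined than the paper's in three places: you work directly with the two sets $A^\star$ and $D$ rather than the paper's finer partition $A,B,C,U,F,\NF,\NB$ (which lets you avoid the intermediate step $\knf\le\kur$); you extend \sfklassums~\ref{assum:r1}--\ref{assum:r0} from $c=n-\tb$ to the actual $c$ via the monotonicity of $h(\qd-1)$ and $h(\qd-1-\tb)$ in $c$, whereas the paper fixes $c=n-\tb$ and then invokes a simulation argument (Byzantine processes mimicking correct ones) to cover larger $c$; and you obtain the final bound $|D|\ge\lgd$ through an explicit algebraic identity, where the paper instead argues that $R(N)/(N-\qd+1)$ is a decreasing hyperbola minimized at $N=c$. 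These are genuine simplifications, but the underlying skeleton is the same.
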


\subsection{Proof of Algorithm~\ref{alg:sf-klcast}}

The proofs of the \kl-cast safety properties stated in Theorem~\ref{theo:sf-kl-correctness} (\klVprop, \klNDNprop, and \klNDYprop) are fairly straightforward.
To save space, these proofs (Lemmas~\ref{lemma:n-bcast-if-kldv}-\ref{lemma:sf-kl-conditional-no-duplicity}) are provided in Appendix~\ref{sec:sf-klcast-safety}.

The proofs of the \kl-cast liveness properties (\klLDprop, \klWGDprop, \klSGDprop) are sketched informally below (Lemmas~\ref{lem:l-fact-min}-\ref{lemma:sf-strong-kl-global-delivery}). Their full development can be found in Appendix~\ref{sec:sf-klcast-liveness}.

When seeking to violate the liveness properties of \kl-cast, the attacker can use the MA to control in part how many \EndorseMsgs are received by each correct process, thus interfering with the quorum mechanisms defined by \qd and \qf.
To analyze the joint effect of this interference with Byzantine faults, our proofs consider seven well-chosen subsets of correct processes ($A$, $B$, $C$, $U$, $F$, \NF, and \NB, depicted in Fig.~\ref{fig:taxonomy}).

These subsets are defined for an execution of Alg.~\ref{alg:sf-klcast} in which \ki correct processes \kl-cast $(m,\id)$ (the $I$ in \ki is for ``Initial''), and \lef correct processes receive at least \qd message $\Endorsem{}(m,\id)$.
The first three subsets, $A$, $B$, and $C$, partition correct processes based on the number of $\Endorsem{}(m,\id)$ messages they receive. 
\begin{itemize}
    \item $A$ contains the \lef correct processes that receive at least \qd $\Endorsem{}(m,\id)$ messages (be it from correct or from Byzantine processes), and thus \kl-deliver some message.\footnote{
Because of the condition at line~\ref{SFKL-cond-dlv}, these processes do not necessarily \kl-deliver $(m,\id)$, but all do \kl-deliver an \app for identity \id.
}
    
    \item $B$ contains the correct processes that receive at least \qf but less than \qd $\Endorsem{}(m,\id)$ messages and thus do not \kl-deliver $(m,\id)$.
    
    \item $C$ contains the remaining correct processes that receive less than \qf $\Endorsem{}(m,\id)$ messages. They neither forward nor deliver any message for identity \id (since $\qf \leq \qd$).
\end{itemize}

In our proofs, we count how many messages $\Endorsem{}(m,\id)$ ur-broadcast by correct processes are received by the processes of $A$ (resp. $B$ and $C$). We note these quantities \wAc, \wBc, and \wCc, and use them to bootstrap our proofs using bounds on messages (see below).

The last four subsets intersect with $A$, $B$ and $C$, and distinguish correct processes based on the ur-broadcast operations they
perform. 
\begin{itemize}
    \item $U$ consists of the correct processes that ur-broadcast $\Endorsem{}(m,\id)$ at line~\ref{SFKL-bcast}.
    
    \item $F$ denotes the correct processes of $A \cup B$ that ur-broadcast $\Endorsem{}(m,\id)$ at line~\ref{SFKL-fwd} (i.e., they perform  forwarding).
    
    \item $\NF$ denotes the correct processes of $A \cup B$ that ur-broadcast $\Endorsem{}(m,\id)$ at line~\ref{SFKL-bcast}.
    
    \item \NB denotes the correct processes of $A \cup B$ that never ur-broadcast $\Endorsem{}(m,\id)$, be it at line~\ref{SFKL-bcast} or at line~\ref{SFKL-fwd}.
    These processes have received at least \qf messages $\Endorsem{}(m,\id)$, but do not forward $\Endorsem{}(m,\id)$, because they have already ur-broadcast $\Endorsem{}(m',\id)$ at line~\ref{SFKL-bcast} or at line~\ref{SFKL-fwd} for an \app $m' \neq m$. 
\end{itemize}

\paragraph{Proof strategy.}
We note $\kur=|U|$, $\kf=|F|$, $\knf=|\NF|$, $\knfp=|\NB|$.
Observe that $\kur \leq \ki$ and $\knf \leq \kur$, since all (correct) processes in $U$ and \NF invoke \klcast.
Also, $(\kur+\kf)$ represents the total number of correct processes that ur-broadcast a message $\Endorsem{}(m,\id)$.
Fig.~\ref{fig:msg-dist} illustrates how these quantities constrain the distribution of \EndorseMsgs across $A$, $B$ and $C$.
Our core proof strategy consists in bounding the areas shown in Fig.~\ref{fig:msg-dist}. (For instance, observe that $\wAc \leq |A|\times (\kur+\kf)$, since each of the \lef correct processes in $A$ can receive at most one \EndorseMsg from each of the $(\kur+\kf)$ correct processes that send them.) This reasoning on bounds yields a polynomial involving $\lef=|A|$, $\ki$, and $\kur$, whose roots can then be constrained to yield the liveness guarantees required by the \kl-cast specification.

\begin{figure}[t]
\vspace{-.5cm}
\begin{center}
\hspace{-1cm}
\begin{subfigure}[b]{0.40\textwidth}
  \centering
  \resizebox{1\textwidth}{!}{
    \centering
\begin{tikzpicture}
    \def\widthS{2}
    \def\heightS{4}
    \def\interS{.15}
    \def\marginText{.4}
    
    \def\colorA{black}
    \def\colorB{black}
    \def\colorC{black}
    \def\colorU{blue}
    \def\colorNF{orange}
    \def\colorF{blue}
    \def\colorNB{red}
    \def\colorR{red}
    
    \def\thirdW{\widthS-2*\interS}
    \def\thirdH{(\heightS-4*\interS)/3}
    
    \draw[rounded corners,very thick,\colorA] (0,0) rectangle (\widthS,\heightS);
    \node[\colorA] at (\widthS/2,-\marginText) {$A$};
    
    \draw[rounded corners,very thick,\colorB] (\widthS+\interS,0) rectangle (2*\widthS+\interS,\heightS);
    \node[\colorB] at (3*\widthS/2+\interS,-\marginText) {$B$};
    
    \draw[rounded corners,very thick,\colorC] (2*\widthS+2*\interS,0) rectangle (3*\widthS+2*\interS,\heightS);
    \node[\colorC] at (5*\widthS/2+2*\interS,-\marginText) {$C$};
    
    \draw[rounded corners,very thick,\colorU] (\interS,\interS) rectangle (3*\widthS+\interS,{\thirdH+\interS});
    \node[\colorU] at (3*\widthS+\interS-\marginText,{\thirdH+\interS-\marginText}) {$U$};
    
    \def\heightNF{\thirdH-2*\interS}
    \draw[rounded corners,very thick,\colorNF] (2*\interS,2*\interS) rectangle (2*\widthS,{\heightNF+2*\interS});
    \node[\colorNF] at (2*\interS+\marginText,{\heightNF+2*\interS-\marginText}) {\NF};
    
    \draw[rounded corners,very thick,\colorF] (\interS,{\thirdH+2*\interS}) rectangle (2*\widthS,{2*\thirdH+2*\interS});
    \node[\colorF] at (\interS+\marginText,{2*\thirdH+2*\interS-\marginText}) {$F$};
    
    \draw[rounded corners,very thick,\colorNB] (\interS,{2*\thirdH+3*\interS}) rectangle (2*\widthS,{3*\thirdH+3*\interS});
    \node[\colorNB] at (\interS+\marginText,{3*\thirdH+3*\interS-\marginText}) {\NB};
    
    \draw[rounded corners,very thick,\colorR] ({2*\thirdW+5*\interS},{\thirdH+2*\interS}) rectangle (3*\widthS+\interS,\heightS-\interS);
    \node[align=center,\colorR] at ({(5*\thirdW+6*\interS)/2},{(4*\thirdH+5*\interS)/2}) {remaining\\correct\\processes};
    
\end{tikzpicture}
    }
\caption{Subsets of correct processes based on the number of received \EndorseMsgs ($A$, $B$ and $C$) and based on their ur-broadcast actions ($U$, $F$, $\NF$, and $\NB$)}
\label{fig:taxonomy}
\end{subfigure}
\hspace{.2cm}
\begin{subfigure}[b]{0.50\textwidth}
  \centering
  \resizebox{1.07\textwidth}{!}{
    \centering
\begin{tikzpicture}
    \def\heightA{3}
    \def\heightB{2.25}
    \def\heightC{1.3}
    
    \def\widthA{1.5}
    \def\widthAB{3}
    \def\widthABC{4.5}
    
    \def\colorA{black}
    \def\colorB{black}
    \def\colorC{black}

    \draw[-{>[scale=2.5,length=2,width=3]}] (0,-.25) -- (0,\heightA+.5);
    \node[align=center] at (-.8,\heightA+.7) {\# received\\msgs.};
    \draw[-{>[scale=2.5,length=2,width=3]}] (-.25,0) -- (\widthABC+.5,0);
    \node[align=center] at (\widthABC+1,-.6) {\# correct\\processes};
    
    \node at (-.8,\heightA) {$\kur{+}\kf$};
    \draw (-.1,\heightA) -- (\widthA,\heightA) -- (\widthA,-.1);
    \node at (\widthA,-.35) {\lef};
    \draw[\colorA,decorate,decoration={brace}] (\widthA-.1,-.1) -- (.1,-.1);
    \node[\colorA] at (\widthA/2,-.4) {$A$};
    \node[\colorA] at (\widthA/2,\heightA/2) {\wAc};
    
    \node at (-.6,\heightB) {$\qd{-}1$};
    \draw (0,\heightB) -- (-.1,\heightB);
    \draw[dashed] (0,\heightB) -- (\widthA,\heightB);
    \draw (\widthA,\heightB) -- (\widthAB,\heightB) -- (\widthAB,-.1);
    \draw[dashed] (\widthAB, 0) -- (\widthAB, -0.7);
    \node at (\widthAB,-.8) {$\kf{+}\knf{+}\knfp$}; 
    \draw[\colorB,decorate,decoration={brace}] (\widthAB-.1,-.1) -- (\widthA+.1,-.1);
    \node[\colorB] at ({(\widthAB+\widthA)/2},-.4) {$B$};
    \node[\colorB] at ({(\widthAB+\widthA)/2},\heightB/2) {\wBc};
    
    \node at (-.6,\heightC) {$\qf{-}1$};
    \draw (0,\heightC) -- (-.1,\heightC);
    \draw[dashed] (0,\heightC) -- (\widthAB,\heightC);
    \draw (\widthAB,\heightC) -- (\widthABC,\heightC) -- (\widthABC,-.1);
    \node at (\widthABC,-.35) {$c$};
    \draw[\colorC,decorate,decoration={brace}] (\widthABC-.1,-.1) -- (\widthAB+.1,-.1);
    \node[\colorC] at ({(\widthABC+\widthAB)/2},-.4) {$C$};
    \node[\colorC] at ({(\widthABC+\widthAB)/2},\heightC/2) {\wCc};
\end{tikzpicture}
  }
\caption{Distribution of \EndorseMsgs among correct processes of $A$, $B$, and $C$, sorted by decreasing number of \EndorseMsgs received}
\label{fig:msg-dist}
\end{subfigure}
\end{center}
\caption{Subsets of correct processes and distribution of \EndorseMsgs among them}
\label{fig:msgs}
\vspace{-0.5cm}
\end{figure}

\paragraph{Observation.}
In the same way we have bounded \wAc, we can also bound \wBc by observing that there are $(\knf+\knfp+\kf-\lef)$ processes in $B$ and that each can receive at most $\qd-1$ \EndorseMsgs.
Similarly, we can bound \wCc by observing that the $(c-\knf-\knfp-\kf)$ processes of $C$ can receive at most $\qf-1$ \EndorseMsgs.
Thus:
\begin{align}
  \wAc &\leq (\kur+\kf)\lef, \label{eq:sup:on:wAc}\\
  \wBc &\leq (\qd-1)(\knf+\knfp+\kf-\lef), \label{eq:sup:on:wBc}\\
  \wCc &\leq (\qf-1)(c-\knf-\knfp-\kf). \label{eq:sup:on:wCc}
\end{align}

Moreover, the MA cannot suppress more than \tm copies of each individual \EndorseMsg ur-broadcast to the $c$ correct processes.
Thus, the total number of \EndorseMsgs received by correct processes $(\wAc+\wBc+\wCc)$ is such that: 
\begin{align}
  \wAc+\wBc+\wCc \geq (\kur+\kf)(c-\tm). \label{eq:sup:all:witness}
\end{align}

\begin{restatable}{rlemma}{lfactmin}
\label{lem:l-fact-min}
$\lef\times(\kur+\kf-\qd+1) \geq (\kur+\kf)(c-\tm-\qd+\qf) - c(\qf-1) - \knfp(\qd-\qf)$.
\end{restatable}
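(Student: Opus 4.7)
The plan is to combine the three upper bounds on $\wAc$, $\wBc$, $\wCc$ (inequalities~(\ref{eq:sup:on:wAc})--(\ref{eq:sup:on:wCc})) with the global lower bound~(\ref{eq:sup:all:witness}) on their sum, and then massage the resulting inequality into the desired form using the structural fact that $\knf \leq \kur$.

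Concretely, I would first add~(\ref{eq:sup:on:wAc}), (\ref{eq:sup:on:wBc}), (\ref{eq:sup:on:wCc}) and chain the resulting upper bound on $\wAc+\wBc+\wCc$ with~(\ref{eq:sup:all:witness}), obtaining
$$(\kur+\kf)\lef + (\qd-1)(\knf+\knfp+\kf-\lef) + (\qf-1)(c-\knf-\knfp-\kf) \geq (\kur+\kf)(c-\tm).$$
Collecting the $\lef$ terms on the left and moving everything else to the right yields
$$\lef(\kur+\kf-\qd+1) \geq (\kur+\kf)(c-\tm) - c(\qf-1) - (\knf+\knfp+\kf)(\qd-\qf).$$

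Next, I would compare this with the target inequality. Subtracting the target from what I just derived reduces, after the cancellation of the $-c(\qf-1)$ terms and regrouping the $(\kur+\kf)$ contributions, to the single claim
$$(\qd-\qf)\bigl(\kur+\kf+\knfp - (\knf+\knfp+\kf)\bigr) = (\qd-\qf)(\kur-\knf) \geq 0.$$
This holds because $\qd \geq \qf$ (an immediate consequence of \sfklassum~\ref{assum:base}, which gives $\qd \geq \qf + \tb \geq \qf$) and $\kur \geq \knf$ (since $\NF \subseteq U$ by the very definitions of $U$ and $\NF$: every process in $\NF$ ur-broadcasts $\msgm(m,\id)$ at line~\ref{SFKL-bcast} and hence belongs to $U$).

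The main obstacle I anticipate is purely bookkeeping: keeping track of the signs when rearranging, and making sure that the $(\qd-\qf)$ factor is correctly isolated so that one can invoke the two simple inequalities $\qd \geq \qf$ and $\knf \leq \kur$ to conclude. No deeper combinatorial argument appears necessary beyond the linear relationship between the \imp counts $\wAc,\wBc,\wCc$ and the set cardinalities, together with the observation above that the set $U$ contains $\NF$.
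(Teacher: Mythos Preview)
Your proposal is correct and follows essentially the same approach as the paper: combine (\ref{eq:sup:on:wAc})--(\ref{eq:sup:all:witness}), isolate the $\lef$-factor, and then use $\qd\geq\qf$ (from \sfklassum~\ref{assum:base}) together with $\knf\leq\kur$ (from $\NF\subseteq U$) to pass from the intermediate bound to the stated one. The only cosmetic difference is that the paper replaces $\knf$ by $\kur$ inside the $(\qd-\qf)$-term and then expands, whereas you subtract the target and factor out $(\qd-\qf)(\kur-\knf)$; the underlying argument is identical.
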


\begin{proofsketch}
We get this result by combining (\ref{eq:sup:on:wAc}), (\ref{eq:sup:on:wBc}), (\ref{eq:sup:on:wCc}) and (\ref{eq:sup:all:witness}), and using \sfklassum~\ref{assum:base} with the fact that $\knf \leq \kur$.
(Full derivations in Appendix~\ref{sec:sf-klcast-liveness}.)
\end{proofsketch}

\begin{restatable}{rlemma}{klcastiffwd}
\label{lem:klcast-if-fwd}
If no correct process \kl-casts $(m',\id)$ with $m' \neq m$, then no correct process forwards $\Endorsem{}(m',\id)$ at line~{\em{\ref{SFKL-fwd}}} (and then $\knfp = 0$).
\proofinappendix
\end{restatable}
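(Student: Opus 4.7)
The plan is to argue by contradiction on the first correct process that forwards a ``rogue'' copy $\msgm(m',\id)$ with $m'\neq m$, and to show that such a process cannot exist under the hypothesis. The key observation is that a correct process only ur-broadcasts $\msgm(m',\id)$ at one of two places in Algorithm~\ref{alg:sf-klcast}: at line~\ref{SFKL-bcast} (when it \kl-casts $(m',\id)$) or at line~\ref{SFKL-fwd} (after receiving \qf copies of $\msgm(m',\id)$). The hypothesis of the lemma rules out line~\ref{SFKL-bcast} outright, so the only remaining possibility for a correct process to put a copy of $\msgm(m',\id)$ on the wire is the forwarding rule itself.

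Concretely, I would assume for contradiction that at least one correct process forwards $\msgm(m',\id)$ at line~\ref{SFKL-fwd}, and let $p_i$ be the first such process (in real time, which is well-defined even in an asynchronous execution since each forwarding event occurs at some physical instant). To reach line~\ref{SFKL-fwd} for $m'$, process $p_i$ must have received $\msgm(m',\id)$ from at least $\qf$ distinct senders. At most $\tb$ of those senders are Byzantine, so at least $\qf-\tb$ of them are correct; by \sfklassum~\ref{assum:base} we have $\qf\geq \tb+1$, hence $\qf-\tb\geq 1$, so at least one correct process $p_j$ ur-broadcast $\msgm(m',\id)$ strictly before $p_i$ did. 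Since $p_j$ is correct, it either executed line~\ref{SFKL-bcast} (impossible by hypothesis, because that would require $p_j$ to have \kl-cast $(m',\id)$), or it executed line~\ref{SFKL-fwd} earlier than $p_i$ (impossible by the minimality of $p_i$). This contradiction establishes that no correct process ever forwards $\msgm(m',\id)$.

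For the second clause ($\knfp=0$), I would simply unfold the definition of \NB: a process belongs to \NB only if it never ur-broadcasts $\msgm(m,\id)$ because it has already ur-broadcast some $\msgm(m',\id)$ with $m'\neq m$, either at line~\ref{SFKL-bcast} or at line~\ref{SFKL-fwd}. The hypothesis excludes the first case and the contradiction above excludes the second, so \NB is empty.

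The only mildly delicate point I anticipate is making the ``first correct process to forward'' argument airtight: I must justify that reception events and ur-broadcast events can be totally ordered in real time, so that picking a minimum is legitimate, and that the $\qf$ \imps counted at the condition of line~\ref{SFKL-cond-fwd} were received strictly before $p_i$'s own forwarding of $\msgm(m',\id)$. Both follow from the reactive semantics of the algorithm and the assumption that distinct channels are authenticated (so the $\qf$ senders counted are genuinely distinct and either correct or Byzantine, with at most $\tb$ of the latter).
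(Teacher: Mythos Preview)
Your proposal is correct and mirrors the paper's own proof: both argue by contradiction on the first correct process to forward $\msgm(m',\id)$, use $\qf>\tb$ (\sfklassum~\ref{assum:base}) to exhibit a correct sender $p_j$ whose \imp must then have come from line~\ref{SFKL-bcast}, and conclude from the hypothesis that $p_j$ could not have \kl-cast $(m',\id)$. Your unfolding of the definition of $\NB$ to obtain $\knfp=0$ is likewise the paper's reasoning, and the care you take about ordering events is sound but not something the paper dwells on.
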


\begin{restatable}[\klLDprop]{rlemma}{sfkllocaldelivery}
\label{lemma:sf-kl-local-delivery}
If at least $\kld=\Big\lfloor \frac{c(\qf-1)}{c-\tm-\qd+\qf} \Big\rfloor + 1$ correct processes \kl-cast an \app $m$ with identity \id and no correct process \kl-casts any \app $m'$ with identity \id such that $m \neq m'$, then at least one correct process $p_i$ \kl-delivers $m$ with identity \id.
\end{restatable}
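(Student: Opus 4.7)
The plan is to argue by contradiction using the quantitative bound of Lemma~\ref{lem:l-fact-min}. I assume that under the hypotheses of the lemma no correct process \kl-delivers any \app with identity \id (i.e.\ $\lef = 0$) and derive an arithmetic contradiction with the lower bound $\ki \geq \kld$, then separately argue that the \kl-delivered \app is necessarily $m$ and not some $m'' \neq m$.

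The first key step is to simplify the inequality of Lemma~\ref{lem:l-fact-min} using the uniqueness assumption on correct \kl-casts. Since no correct process \kl-casts $(m', \id)$ for $m' \neq m$, Lemma~\ref{lem:klcast-if-fwd} immediately yields $\knfp = 0$. Moreover, the same hypothesis together with \sfklassum~\ref{assum:base} (which gives $\qf \geq \tb+1$) implies that only the at most \tb Byzantine processes ever ur-broadcast any $\msgm(m'', \id)$ with $m'' \neq m$, so no correct process can collect \qf distinct senders of such an \imp and therefore no correct process ever ur-broadcasts an \imp about \id other than $\msgm(m, \id)$. Consequently, every correct process that invokes $\klcast(m, \id)$ belongs either to $U$ (it ur-broadcasts at line~\ref{SFKL-bcast}) or to $F$ (it has previously forwarded $\msgm(m, \id)$ at line~\ref{SFKL-fwd}); since $U$ and $F$ are disjoint by construction of the guard at line~\ref{SFKL-cond-bcast}, this gives $\kur + \kf \geq \ki \geq \kld$.

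Substituting $\lef = 0$ and $\knfp = 0$ into Lemma~\ref{lem:l-fact-min} yields $(\kur+\kf)(c-\tm-\qd+\qf) \leq c(\qf-1)$. The factor $c - \tm - \qd + \qf$ is strictly positive by \sfklassum~\ref{assum:base} (since $c - \tm \geq \qd$ and $\qf \geq \tb+1 \geq 1$), so dividing gives $\kur + \kf \leq \frac{c(\qf-1)}{c-\tm-\qd+\qf}$. This contradicts $\kur + \kf \geq \kld = \lfloor \frac{c(\qf-1)}{c-\tm-\qd+\qf} \rfloor + 1 > \frac{c(\qf-1)}{c-\tm-\qd+\qf}$. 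Hence $\lef \geq 1$, i.e.\ at least one correct process accumulates \qd copies of $\msgm(m, \id)$. Because $\qd \geq 2\tb+1 > \tb$ and only Byzantine processes can ever send $\msgm(m'', \id)$ for $m'' \neq m$, no correct process can ever reach the \kl-delivery threshold for any $m'' \neq m$; in particular the process in $A$ has never \kl-delivered anything for \id, so the guard at line~\ref{SFKL-cond-dlv} fires and it \kl-delivers $m$, as required.

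The main obstacle is the bookkeeping in the first step that justifies $\kur + \kf \geq \ki$, because it requires simultaneously ruling out double-counting (handled by the disjointness of $U$ and $F$ enforced at line~\ref{SFKL-cond-bcast}) and the possibility that a correct \kl-caster might already have forwarded a Byzantine-influenced $\msgm(m'', \id)$ with $m'' \neq m$ (ruled out by the $\qf > \tb$ counting argument). Once this reduction is secured, both the contradiction and the final identification of the \kl-delivered \app with $m$ reduce to routine algebra on Lemma~\ref{lem:l-fact-min} and direct use of \sfklassum~\ref{assum:base}.
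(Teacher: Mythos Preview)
Your proof is correct and follows essentially the same approach as the paper's own proof: invoke Lemma~\ref{lem:klcast-if-fwd} to obtain $\knfp=0$, argue that no correct process can \kl-deliver $(m'',\id)$ with $m''\neq m$ since at most $\tb<\qd$ such \imps circulate, set $\lef=0$ under the contrapositive assumption, feed this into Lemma~\ref{lem:l-fact-min}, and close with the bookkeeping that each correct \kl-caster lands in $U\cup F$ so that $\ki\leq\kur+\kf$. The paper orders the steps slightly differently (it establishes $\ki\leq\kur+\kf$ after deriving the bound from Lemma~\ref{lem:l-fact-min} rather than before), but the logical content is identical; your remark on the disjointness of $U$ and $F$ is harmless though not actually needed for the inequality $\kur+\kf\geq|U\cup F|\geq\ki$.
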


\begin{proofsketch}
From the hypotheses, Lemma~\ref{lem:klcast-if-fwd} helps us determine that $\knfp=0$.
Then, the property is proved by contraposition, by assuming that no correct process \kl-delivers $(m,\id)$, which leads us to $\lef=0$.
Using prior information and \sfklassum~\ref{assum:base}, we can rewrite the inequality of Lemma~\ref{lem:l-fact-min} to get the threshold of \kl-casts above which there is at least one \kl-delivery. 
(Full derivations in Appendix~\ref{sec:sf-klcast-liveness}.)
\end{proofsketch}

\begin{restatable}{rlemma}{singleifknfp}
\label{lem:single-if-knfp}
$(\single = \ffalse) \implies (\knfp = 0)$.
\proofinappendix
\end{restatable}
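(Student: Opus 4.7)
The plan is to prove the contrapositive-style statement directly by contradiction: assume $\single = \ffalse$ and that some correct process $p_i \in \NB$ exists, then show that $p_i$ must have forwarded $\msgm(m,\id)$ at line~\ref{SFKL-fwd}, contradicting the definition of $\NB$.

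\textbf{First step.} Unfold the definition of $\NB$. Any $p_i \in \NB$ is a correct process of $A \cup B$, so by the bounds defining $A$ and $B$ it has received at least $\qf$ \imps of the form $\msgm(m,\id)$ from distinct processes. Moreover, by definition $p_i$ never ur-broadcasts $\msgm(m,\id)$, neither at line~\ref{SFKL-bcast} nor at line~\ref{SFKL-fwd}.

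\textbf{Second step.} Consider the instant at which $p_i$ processes (at line~\ref{SFKL-rcv-msg}) the $\qf$-th distinct \imp $\msgm(m,\id)$ it receives. At this instant the first conjunct of the guard at line~\ref{SFKL-cond-fwd} holds. Because $\single = \ffalse$, the literal $\neg \single$ is $\ttrue$, and because $p_i \in \NB$ the predicate ``$\msgm(m,\id)$ not already ur-broadcast'' is also $\ttrue$. Hence the sub-disjunct $\bigl(\neg\single \land \msgm(m,\id)\ \text{not already ur-broadcast}\bigr)$ is satisfied, so the whole disjunction in the guard is satisfied.

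\textbf{Third step.} Therefore the guard of line~\ref{SFKL-cond-fwd} evaluates to $\ttrue$ at $p_i$, and $p_i$ executes line~\ref{SFKL-fwd}, ur-broadcasting $\msgm(m,\id)$. This contradicts the assumption that $p_i$ never ur-broadcasts $\msgm(m,\id)$. Consequently $\NB = \emptyset$, i.e.\ $\knfp = 0$, as required.

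\textbf{Anticipated obstacle.} There is no real technical obstacle here; the statement follows immediately from a careful reading of the compound guard at line~\ref{SFKL-cond-fwd}. The only subtlety is to be precise about what ``never ur-broadcast $\msgm(m,\id)$'' entails at the moment the $\qf$-th matching \imp is received: it rules out any prior line~\ref{SFKL-bcast} or line~\ref{SFKL-fwd} firing for $\msgm(m,\id)$, which is exactly the predicate appearing in the disjunct activated by $\neg\single$. This makes the proof a short one-paragraph argument.
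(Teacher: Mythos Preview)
Your proof is correct and follows essentially the same approach as the paper: both argue that any correct process in $A\cup B$ that has not already ur-broadcast $\msgm(m,\id)$ must, upon receiving its $\qf$-th matching \imp, satisfy the disjunct $(\neg\single \land \text{``}\msgm(m,\id)\text{ not already ur-broadcast''})$ and hence forward at line~\ref{SFKL-fwd}. The paper phrases this as showing $A\cup B\setminus F\subseteq \NF$, whereas you phrase it as a direct contradiction from $p_i\in\NB$; the logical content is identical.
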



\begin{restatable}{rlemma}{polynomifone}
\label{lem:polynom-if-one}
If at least one correct process \kl-delivers $(m,\id)$ and $x=\kur+\kf$ (the number of correct processes that ur-broadcast $\Endorsem{}(m,\id)$ at line~{\em \ref{SFKL-bcast}} or~{\em \ref{SFKL-fwd}}), then $x \geq \qd-\tb$ and $x^2 - x(c-\tm+\qf-1-\knfp) \geq -(c-\knfp)(\qf-1)$.
\end{restatable}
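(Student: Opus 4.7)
The plan is to establish the two conjuncts separately. For the bound $x \geq \qd-\tb$, I would invoke the delivery guard at line~\ref{SFKL-cond-dlv}: the correct process that \kl-delivers $(m,\id)$ must have received $\msgm(m,\id)$ from at least $\qd$ distinct senders, of which at most $\tb$ are Byzantine, leaving at least $\qd-\tb$ correct senders. A correct process only ur-broadcasts $\msgm(m,\id)$ via line~\ref{SFKL-bcast} (placing it in $U$) or via line~\ref{SFKL-fwd} (placing it in $F$), and the guards at lines~\ref{SFKL-cond-bcast} and~\ref{SFKL-cond-fwd} forbid it from doing so twice; so $U$ and $F$ are disjoint and $x = \kur + \kf$ counts exactly those correct senders, which gives $x \geq \qd-\tb$.

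For the quadratic inequality, my plan is to reuse the MA-based counting argument behind Lemma~\ref{lem:l-fact-min}, but to combine it with a structural upper bound on $|A\cup B|$ rather than on $\lef$ alone. First, I would observe that $A\cup B$ decomposes as the pairwise disjoint union of $\NF$, $F$ and $\NB$: any correct process in $A\cup B$ has received at least $\qf$ copies of $\msgm(m,\id)$, so either it ur-broadcast $\msgm(m,\id)$ at line~\ref{SFKL-bcast} (joining $\NF$), or at line~\ref{SFKL-fwd} (joining $F$), or it never did, in which case the guard at line~\ref{SFKL-cond-fwd} forces it to have already ur-broadcast some $\msgm(m',\id)$ with $m' \neq m$ (joining $\NB$). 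Together with $\knf \leq \kur$, this yields $|A\cup B| = \knf + \kf + \knfp \leq x + \knfp$.

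Next I would carry out the counting. Each process in $A\cup B$ receives at most one copy of $\msgm(m,\id)$ from each of the $x$ correct ur-broadcasters, so $\wAc + \wBc \leq x \cdot |A\cup B|$; and $\wCc \leq (\qf-1)(c - |A\cup B|)$ by the definition of $C$. Combining these with the MA lower bound $\wAc + \wBc + \wCc \geq x(c-\tm)$ gives $(x - \qf + 1)\, |A\cup B| \geq x(c-\tm) - c(\qf-1)$. By \sfklassum~\ref{assum:base} together with the first part of the lemma, $x \geq \qd - \tb \geq \qf$, so $x - \qf + 1 > 0$; substituting the upper bound $|A\cup B| \leq x + \knfp$ then preserves the direction, and expanding $(x+\knfp)(x-\qf+1) \geq x(c-\tm) - c(\qf-1)$ reorganises into the claimed polynomial inequality.

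The hardest step will be pinning down the decomposition $A\cup B = \NF \cup F \cup \NB$ with the three subsets pairwise disjoint. This reduces to checking that the guards at lines~\ref{SFKL-cond-bcast} and~\ref{SFKL-cond-fwd} jointly preclude a correct process from ur-broadcasting $\msgm(m,\id)$ more than once, and that the boolean \single correctly separates processes that forward a second \app (joining $F$) from those that abstain (remaining in $\NB$). Once this structural identity is in place, the remainder of the argument is a short counting step plus routine algebraic expansion.
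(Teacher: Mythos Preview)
Your proposal is correct and follows essentially the same route as the paper's proof: the paper also derives $x\geq\qd-\tb$ from the delivery threshold (via an averaging argument over $A$ that collapses to your single-process count), and obtains the quadratic inequality by bounding $\wAc+\wBc\leq(\knf+\kf+\knfp)x$ and $\wCc\leq(\qf-1)(c-\knf-\kf-\knfp)$, combining with the MA lower bound, using $x-\qf+1>0$ from \sfklassum~\ref{assum:base}, and then replacing $\knf$ by $\kur$. Your explicit identification of $|A\cup B|=\knf+\kf+\knfp$ as the intermediate quantity is a clean way to phrase the same computation; note that the partition $A\cup B=\NF\cup F\cup\NB$ is simply the paper's definition of these sets, so the only real check is that a correct process ur-broadcasts $\msgm(m,\id)$ at most once, which the guards at lines~\ref{SFKL-cond-bcast} and~\ref{SFKL-cond-fwd} guarantee regardless of the value of \single.
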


\begin{proofsketch}
We prove this lemma by counting the total number of messages (sent by Byzantine or correct processes) that are received by the processes of $A$, and by using (\ref{eq:sup:on:wAc}), (\ref{eq:sup:on:wCc}) (\ref{eq:sup:all:witness}), and \sfklassum~\ref{assum:base}.
(Full derivations in Appendix~\ref{sec:sf-klcast-liveness}.)
\end{proofsketch}

\begin{restatable}{rlemma}{enoughifone}
\label{lemma:ifone:then:enough:internal}
If $\knfp=0$, and at least one correct process \kl-delivers $(m,\id)$, then $\kur+\kf \geq \qd$.
\end{restatable}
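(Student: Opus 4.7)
My plan is to invoke Lemma~\ref{lem:polynom-if-one} (with $\knfp = 0$), to reinterpret \sfklassums~\ref{assum:r1} and~\ref{assum:r0} as facts about the convex parabola
\[
Q_c(y) \;:=\; y^2 - (c - \tm + \qf - 1)\,y + c(\qf - 1),
\]
and then to rule out, by an interval argument, every integer value of $x := \kur + \kf$ in the range $[\qd - \tb,\; \qd - 1]$.

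Executing this plan, Lemma~\ref{lem:polynom-if-one} with $\knfp = 0$ first gives $x \geq \qd - \tb$ and $Q_c(x) \geq 0$. Next, expanding $\alpha = n + \qf - \tb - \tm - 1$ shows that \sfklassums~\ref{assum:r1} and~\ref{assum:r0} are literally the statements $Q_{n-\tb}(\qd - 1) < 0$ and $Q_{n-\tb}(\qd - 1 - \tb) \leq 0$. Because $\partial Q_c / \partial c = (\qf - 1) - y \leq 0$ whenever $y \geq \qf - 1$, and \sfklassum~\ref{assum:base} forces $\qd \geq \qf + \tb$ (so that both $\qd - 1$ and $\qd - 1 - \tb$ sit in this regime), the two bounds transfer to any $c \geq n - \tb$, yielding $Q_c(\qd - 1) < 0$ and $Q_c(\qd - 1 - \tb) \leq 0$. \sfklassum~\ref{assum:disc} (plus the same monotonicity argument, applied now to the discriminant) ensures that $Q_c$ has two real roots $r_1 \leq r_2$, so the zero set $\{y : Q_c(y) \leq 0\}$ is the closed interval $[r_1, r_2]$, and the two bounds rewrite as $r_1 \leq \qd - 1 - \tb$ and $\qd - 1 < r_2$. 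Assuming for contradiction that $x \leq \qd - 1$ and combining with $x \geq \qd - \tb$ then yields
\[
r_1 \;\leq\; \qd - 1 - \tb \;<\; \qd - \tb \;\leq\; x \;\leq\; \qd - 1 \;<\; r_2,
\]
placing $x$ strictly inside $(r_1, r_2)$ where $Q_c$ is strictly negative --- contradicting $Q_c(x) \geq 0$. Hence $x \geq \qd$.

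The main obstacle is aligning the strictness of the two assumptions with the parabola's geometry: \sfklassum~\ref{assum:r0} gives only the non-strict bound $Q_{n-\tb}(\qd - 1 - \tb) \leq 0$, so on its own it would allow $\qd - 1 - \tb$ to coincide with the left root $r_1$. What rescues the argument is that every integer $x$ to be excluded satisfies $x \geq \qd - \tb > \qd - 1 - \tb$, so even the non-strict left bound is enough; symmetrically, the strict inequality in \sfklassum~\ref{assum:r1} is exactly what pushes $r_2$ strictly past $\qd - 1$, covering the right endpoint. A secondary subtlety is that the monotonicity in $c$ must apply at both tested $y$-values, which reduces to verifying $\qd - \tb \geq \qf$ via \sfklassum~\ref{assum:base}.
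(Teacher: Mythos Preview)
Your proof is correct and shares the paper's core structure---invoke Lemma~\ref{lem:polynom-if-one}, reinterpret \sfklassums~\ref{assum:r1} and~\ref{assum:r0} as sign conditions on the parabola at $\qd-1$ and $\qd-1-\tb$, and then use the bound $x\geq\qd-\tb$ to rule out the left root---but it diverges from the paper in how it handles the dependence on $c$. The paper's proof specializes immediately to $c=n-\tb$, carries out the root analysis there, and then closes with an operational \emph{simulation} argument (Byzantine processes can mimic correct ones, so any execution with $c>n-\tb$ is also an execution with $c=n-\tb$) to lift the conclusion to arbitrary $c$. You instead keep $c$ general throughout and transfer the two sign conditions from $c=n-\tb$ to any $c\geq n-\tb$ via the analytic monotonicity $\partial Q_c/\partial c=(\qf-1)-y\leq 0$ on the relevant range $y\geq\qf-1$ (which you correctly justify from \sfklassum~\ref{assum:base}).

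This buys you a purely analytic argument that avoids the conceptual detour through indistinguishability; the paper's route, conversely, keeps the algebra slightly simpler (fixed $c$) at the cost of that extra reasoning step. Two minor remarks: (i) once you have established $Q_c(\qd-1)<0$, the upward-opening parabola already takes a negative value, so the existence of two real roots follows without invoking \sfklassum~\ref{assum:disc} at all---your discriminant step is correct but redundant; (ii) the monotonicity of the discriminant in $c$ is not literally ``the same'' as that of $Q_c(y)$ (it requires $c\geq\qf-1+\tm$, which does follow from \sfklassum~\ref{assum:base}), so the phrasing there is a touch loose, though the conclusion is right.
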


\begin{proofsketch}
Given that $\knfp=0$, we can rewrite the inequality of Lemma~\ref{lem:polynom-if-one}, which gives us a second-degree polynomial (where $x=\kur+\kf$ is the unknown variable). We compute its roots and show that  the smaller one contradicts Lemma~\ref{lem:polynom-if-one}, and that the larger one is greater than or equal to \qd. The fact that $x$ must be greater than or equal to the larger root to satisfy Lemma~\ref{lem:polynom-if-one} proves the lemma.
(Full derivations in Appendix~\ref{sec:sf-klcast-liveness}.)
\end{proofsketch}

\begin{restatable}{rlemma}{ellifenough}
\label{lemma:if:enough:internal:then:enough:ell}
If $\knfp = 0$ and $\kur+\kf \geq \qd$, then at least \lval correct processes \kl-deliver some \app with identity \id (not necessarily $m$).
\end{restatable}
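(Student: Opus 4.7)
The plan is to feed $\knfp = 0$ into Lemma~\ref{lem:l-fact-min}, isolate $\lef$ as a function of $x := \kur + \kf$, and then show that, over the range $x \in [\qd, c]$, this lower bound is minimized at $x = c$, where it equals $\lval$. First, substituting $\knfp = 0$ and using the hypothesis $x \geq \qd$ (so $x - \qd + 1 \geq 1 > 0$) gives
\[
\lef \;\geq\; g(x) \;:=\; \frac{x(c - \tm - \qd + \qf) - c(\qf - 1)}{x - \qd + 1}.
\]
A direct computation yields $g(c) = c(c - \tm - \qd + 1)/(c - \qd + 1)$, which is exactly $c(1 - \tm/(c - \qd + 1))$, i.e.\ $\lval$ before the ceiling.

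Next I would prove $g(x) \geq g(c)$ for every $x \in [\qd, c]$. After clearing denominators, the difference $g(x) - g(c)$ factors as $A(c - x)/[(x - \qd + 1)(c - \qd + 1)]$, where $A := (\qd - \qf)(c - \qd + 1) - (\qd - 1)\tm$. Since $c - x \geq 0$ on the relevant range and the denominators are positive, monotonicity reduces to showing $A \geq 0$. Expanding \sfklassum~\ref{assum:r1} (which reads $\alpha(\qd - 1) - (\qf - 1)(n - \tb) - (\qd - 1)^2 > 0$ with $\alpha = n + \qf - \tb - \tm - 1$) under the worst case $c = n - \tb$ should rearrange exactly to $A > 0$; since $A$ is non-decreasing in $c$ (its derivative is $\qd - \qf \geq 0$), this extends to every feasible $c$. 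As $\lef$ is an integer, I then conclude $\lef \geq \lceil g(c) \rceil = \lval$.

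Finally, to transfer this to the lemma's conclusion, I would observe that under $\knfp = 0$ no correct process can accumulate $\qd$ copies of any $\msgm(m', \id)$ with $m' \neq m$, because only Byzantine senders contribute such copies and $\qd > \tb$ by \sfklassum~\ref{assum:base}. Hence each of the $\lef$ correct processes that receives $\qd$ copies of $\msgm(m, \id)$ \kl-delivers some \app with identity \id (in fact, $m$ itself). The main obstacle I foresee is the algebraic matching between \sfklassum~\ref{assum:r1} and $A \geq 0$: the two expressions look structurally different, and identifying them cleanly requires careful expansion of $\alpha$ and rearrangement of the resulting quadratic terms.
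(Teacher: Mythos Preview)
Your bound on $\lef$ is correct and essentially matches the paper: both rewrite Lemma~\ref{lem:l-fact-min} with $\knfp=0$, observe that the resulting lower bound $g(x)$ is monotone on $[\qd,c]$ (the paper phrases this as a ``decreasing hyperbole'' with $\phi\gamma-\beta>0$, which is exactly your quantity $A$), and evaluate at $x=c$. Your verification that $A>0$ via \sfklassum~\ref{assum:r1} at $c=n-\tb$ together with monotonicity in $c$ is a clean alternative to the paper's simulation argument for general $c$; the algebra you worry about does in fact match line by line once you substitute $\alpha = c+\qf-\tm-1$.

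The final step, however, has a real gap. The claim ``under $\knfp=0$ no correct process can accumulate $\qd$ copies of any $\msgm(m',\id)$ with $m'\neq m$, because only Byzantine senders contribute such copies'' is false. The condition $\knfp=0$ only says that every correct process in $A\cup B$ eventually ur-broadcasts $\msgm(m,\id)$; it says nothing about what \emph{other} messages correct processes may ur-broadcast. In particular, when $\single=\ffalse$ (the setting of Lemma~\ref{lemma:sf-weak-kl-global-delivery}, which invokes the present lemma), a correct process may forward both $\msgm(m,\id)$ and $\msgm(m',\id)$ for $m'\neq m$, so correct senders \emph{do} contribute copies of $m'$ and a correct process may well \kl-deliver $m'$. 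Fortunately the lemma only asks that the $\lef$ processes \kl-deliver \emph{some} \app with identity \id, and this follows without your claim: any correct process that accumulates $\qd$ copies of $\msgm(m,\id)$ either has already \kl-delivered some $(-,\id)$, or passes the condition at line~\ref{SFKL-cond-dlv} and \kl-delivers $(m,\id)$. Drop the parenthetical ``(in fact, $m$ itself)'' and its flawed justification, and your argument goes through.
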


\begin{proofsketch}
From the hypotheses, we can rewrite the inequality of Lemma~\ref{lem:l-fact-min} to get a lower bound on \lef.
Using \sfklassum~\ref{assum:r1}, we can determine that this lower bound is decreasing with the number of ur-broadcasts by correct processes ($x=\kur+\kf$).
Hence, this lower bound is minimum when $x$ is maximum, that is, when $x=c$.
This gives us the minimum number of correct processes that \kl-deliver under the given hypotheses.
(Full derivations in Appendix~\ref{sec:sf-klcast-liveness}.)
\end{proofsketch}

\begin{restatable}[\klWGDprop]{rlemma}{sfklweakglobaldelivery}
\label{lemma:sf-weak-kl-global-delivery}
If $\single = \ffalse$, and a correct process \kl-delivers an \app $m$ with identity \id, then at least $\lgd = \lval$ correct processes \kl-deliver an \app $m'$ with identity \id (each possibly different from $m$).
\end{restatable}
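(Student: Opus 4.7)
The plan is to derive \klWGDprop by chaining three of the lemmas that precede it, since each of them has been stated precisely so that its conclusion feeds the hypothesis of the next. No new combinatorial or arithmetic work is needed at this level; the contribution of the statement is to observe that when $\single = \ffalse$ the hypotheses of the quantitative lemmas line up.

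First, I would use the hypothesis $\single = \ffalse$ and invoke Lemma~\ref{lem:single-if-knfp} to conclude $\knfp = 0$. Intuitively, when \single is \ffalse the forwarding guard at line~\ref{SFKL-cond-fwd} tolerates a correct process that has already ur-broadcast an \imp $\msgm(m',\id)$ to still forward $\msgm(m,\id)$, so no correct process that has accumulated \qf copies of $\msgm(m,\id)$ can remain stuck in the $\NB$ category.

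Next, combining $\knfp = 0$ with the hypothesis that some correct process \kl-delivers $(m,\id)$, I would apply Lemma~\ref{lemma:ifone:then:enough:internal} to obtain $\kur + \kf \geq \qd$, i.e.\ at least \qd correct processes ur-broadcast $\msgm(m,\id)$ at line~\ref{SFKL-bcast} or line~\ref{SFKL-fwd}. Then, still under $\knfp = 0$, I would feed $\kur + \kf \geq \qd$ into Lemma~\ref{lemma:if:enough:internal:then:enough:ell}, which directly yields that at least $\lval$ correct processes \kl-deliver some \app with identity \id. Since the theorem sets $\lgd = \lval$ in the $\single = \ffalse$ case, this gives exactly \klWGDprop.

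The only real obstacle is bookkeeping: one must verify that $\knfp = 0$ is indeed the single missing ingredient that turns the conditional Lemmas~\ref{lemma:ifone:then:enough:internal} and~\ref{lemma:if:enough:internal:then:enough:ell} into an unconditional chain when \single is false. All of the heavier analysis---the message-counting inequality of Lemma~\ref{lem:l-fact-min}, the second-degree polynomial whose roots lead to $\kur+\kf \geq \qd$ in Lemma~\ref{lemma:ifone:then:enough:internal}, and the monotonicity argument bounding $\lef$ from below in Lemma~\ref{lemma:if:enough:internal:then:enough:ell}---has already been discharged earlier, so the proof here should be a short three-line derivation.
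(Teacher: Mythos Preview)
Your proposal is correct and mirrors the paper's own proof essentially line for line: invoke Lemma~\ref{lem:single-if-knfp} from $\single=\ffalse$ to get $\knfp=0$, then Lemma~\ref{lemma:ifone:then:enough:internal} to get $\kur+\kf\geq\qd$, then Lemma~\ref{lemma:if:enough:internal:then:enough:ell} to conclude. There is nothing to add.
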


\begin{proofsketch}
As $\single = \ffalse$ and one correct process \kl-delivers $(m,\id)$, Lemmas~\ref{lem:single-if-knfp} and~\ref{lemma:ifone:then:enough:internal} apply, and we have $\knfp = 0$ and $\kur+\kf \geq \qd$.
This provides the prerequisites for Lemma~\ref{lemma:if:enough:internal:then:enough:ell}, which concludes the proof.
(Full derivations in Appendix~\ref{sec:sf-klcast-liveness}.)
\end{proofsketch}

\begin{restatable}[\klSGDprop]{rlemma}{sfklstrongglobaldelivery}
\label{lemma:sf-strong-kl-global-delivery}
If $\single = \ttrue$, and a correct process \kl-delivers an \app $m$ with identity \id, and no correct process \kl-casts an \app $m' \neq m$  with identity \id, then at least $\lgd = \left\lceil c\left(1-\frac{\tm}{c-\qd+1}\right)\right\rceil$ correct processes \kl-deliver $m$ with identity \id.
\end{restatable}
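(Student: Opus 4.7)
The plan is straightforward: chain the three preceding liveness lemmas to populate \lval correct processes with \kl-deliveries of \emph{some} \app with identity \id, then upgrade ``some \app'' to ``exactly $m$'' by invoking the extra hypothesis together with the quorum-size lower bound supplied by \sfklassum~\ref{assum:base}. So the quantitative bound \lval is inherited directly from Lemma~\ref{lemma:if:enough:internal:then:enough:ell}, and the only genuinely new work is to exclude $(m',\id)$ deliveries for $m' \neq m$.

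Concretely, the hypothesis that no correct process \kl-casts an \app $m' \neq m$ with identity \id is exactly what activates Lemma~\ref{lem:klcast-if-fwd}, which yields $\knfp = 0$. Combined with the assumption that some correct process \kl-delivers $(m,\id)$, Lemma~\ref{lemma:ifone:then:enough:internal} then gives $\kur + \kf \geq \qd$. Feeding these two facts into Lemma~\ref{lemma:if:enough:internal:then:enough:ell} produces at least $\lval$ correct processes that \kl-deliver some \app with identity \id, as required on the cardinality side.

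To finish, I would argue that every one of those \kl-deliveries is in fact of $m$. Suppose, for contradiction, that a correct process \kl-delivers $(m',\id)$ with $m' \neq m$. By the guard at line~\ref{SFKL-cond-dlv}, this requires \qd distinct ur-broadcasts of $\msgm(m',\id)$ received from distinct processes. Yet no correct process ever ur-broadcasts such an \imp: line~\ref{SFKL-bcast} is never executed for $m' \neq m$ because the hypothesis forbids any correct \klcast of $m'$, and Lemma~\ref{lem:klcast-if-fwd} has already ruled out line~\ref{SFKL-fwd}. Hence all \qd copies would have to originate from the at most \tb Byzantine processes, which is impossible because \sfklassum~\ref{assum:base} enforces $\qd \geq 2\tb+1 > \tb$. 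Thus every identity \id delivery is of $m$, completing the proof.

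The main obstacle is really only bookkeeping: the heavy analytic work (counting \imps across $A$, $B$, $C$, solving the second-degree polynomial, and extracting the \lval bound) is already discharged by Lemmas~\ref{lem:l-fact-min}--\ref{lemma:if:enough:internal:then:enough:ell}. The one point that must be handled with care is closing \emph{both} ur-broadcast pathways, i.e. lines~\ref{SFKL-bcast} and~\ref{SFKL-fwd}, for correct processes when arguing that only Byzantine \imps could support a spurious $(m',\id)$ delivery. Note that the $\single = \ttrue$ flag is not used directly in this argument; it appears in the statement only because the theorem pairs $\single = \ttrue$ with \klSGDprop (while $\single = \ffalse$ relies on Lemma~\ref{lem:single-if-knfp} to obtain $\knfp = 0$, a route unavailable here).
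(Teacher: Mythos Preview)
Your proof is correct and follows essentially the same route as the paper: invoke Lemma~\ref{lem:klcast-if-fwd} to get $\knfp=0$, chain Lemmas~\ref{lemma:ifone:then:enough:internal} and~\ref{lemma:if:enough:internal:then:enough:ell} for the \lval bound, then rule out $(m',\id)$ deliveries by observing that only Byzantine senders remain and $\qd>\tb$. Your closing remark that $\single=\ttrue$ is not actually used in the argument is also accurate---the paper lists it as a hypothesis but the proof relies only on the ``no correct process \kl-casts $m'\neq m$'' assumption to reach $\knfp=0$ via Lemma~\ref{lem:klcast-if-fwd}.
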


\begin{proofsketch}
As $\single = \ttrue$, Lemma~\ref{lem:klcast-if-fwd} holds and implies that $\knfp = 0$. As  above, Lemma~\ref{lemma:ifone:then:enough:internal} and Lemma~\ref{lemma:if:enough:internal:then:enough:ell} hold, yielding the lemma.
(Full derivations in Appendix~\ref{sec:sf-klcast-liveness}.)
\end{proofsketch}


\section{BRB in the Presence of Message Adversary (MBRB): Definition}
\label{sec-MBRB}

Before using the \kl-cast abstraction to reconstruct MA-tolerant BRB algorithms, we first specify what a Byzantine- and MA-tolerant broadcast should precisely achieve.
We call such a broadcast an MBR-broadcast (for Message-adversarial Byzantine Reliable Broadcast), or MBRB for short.
The MBRB abstraction provides two matching operations, $\mbrbroadcast$ and $\mbrdeliver$.
It is a multishot abstraction, i.e, it associates an identity $\langle \sn, i \rangle$ (sequence number, sender identity) with each \app, and assumes that correct processes never reuse the same sequence number for different $\mbrbroadcast$ invocations.

When, at the application level, a process $p_i$ invokes $\mbrbroadcast(m,\sn)$, where $m$ is the \app, we say it ``mbrb-broadcasts $(m,\sn)$''.
Similarly, when the invocation of $\mbrdeliver$ by $p_i$ returns the tuple $(m,\sn,j)$ to the client application (where $p_j$ is the sender process), we say it ``mbrb-delivers $(m,\sn,j)$''.
So, the \app are {\it mbrb-broadcast} and {\it mbrb-delivered}.
Because of the MA, we cannot always guarantee that an \app mbrb-delivered by a correct process is eventually received by all correct processes.
Hence, in the MBR-broadcast specification, we introduce a variable \lmbr (reminiscent of the \lgd of \kl-cast) which indicates the strength of the global delivery guarantee of the primitive: if one correct process mbrb-delivers an \app, then \lmbr correct processes eventually mbrb-deliver this \app\footnote{
If there is no MA (i.e. $\tm = 0$), we should have $\lmbr=c \geq n-\tb$.}.
MBRB is defined by the following properties:
\begin{itemize}
\item Safety:
\begin{itemize}
\item \mbrVprop.
  If a correct process $p_i$
  mbrb-delivers an \app $m$ from a correct process $p_j$ with
  sequence number \sn, then $p_j$ mbrb-broadcast $m$ with sequence
  number~\sn.
\item \mbrNDNprop.
  A correct process $p_i$ mbrb-delivers at most one \app from a process $p_j$ with sequence number \sn.
\item \mbrNDYprop.
  No two distinct correct processes mbrb-deliver different \apps from a process $p_i$ with the same sequence number~\sn.
 \end{itemize}
\item Liveness:
\begin{itemize}

\item \mbrLDprop.  If a correct process $p_i$ mbrb-broadcasts
  an \app $m$ with sequence number~\sn, then at least one
  correct process $p_j$ eventually mbrb-delivers $m$ from $p_i$ with
  sequence number~\sn.

\item \mbrGDprop.  If a correct process $p_i$ mbrb-delivers an \app $m$ from a process $p_j$ with sequence number~\sn, then at least \lmbr correct processes mbrb-deliver $m$ from
  $p_j$ with sequence number~\sn.
\end{itemize}
\end{itemize}

It is implicitly assumed that a correct process does not use the same sequence number twice.
Let us observe that, as at the implementation level, the MA can always suppress all the \imps sent to a fixed set $D$ of \tm processes, these mbrb-delivery properties are the strongest that can be implemented.
More generally, the best-guaranteed value for \lmbr is $c-\tm$.
So, the  previous specification boils down to Bracha's specification~\cite{B87} for $\lmbr=c$.

\section{\texorpdfstring{\kl}{k2l}-Cast in Action: From Classical BRB to MA-Tolerant BRB (MBRB) Algorithms}
\label{sec-algos}
This section uses \kl-cast to reconstruct two signature-free BRB algorithms~\cite{B87,IR16} initially introduced in a pure Byzantine context (i.e., without any MA). This reconstruction produces Byzantine-MA-tolerant versions of the initial algorithms that implement the MBRB specification of Section~\ref{sec-MBRB}.
Moreover, when $\tm = 0$, our two reconstructed BRB algorithms are strictly more efficient than the original algorithms that gave rise to them (they terminate earlier).

More precisely, the original and reconstructed versions of Bracha's BRB are identical in terms of communication cost, time complexity, and \tb-resilience (when $\tm=0$).
The same comparison holds for the original and reconstructed versions of Imbs and Raynal's BRB.
However, both reconstructed BRB algorithms use smaller quorums than their original versions, and therefore require fewer messages to progress. 
In an actual network, this means a lower latency in practice, as practical networks typically exhibit a long tail distribution of latencies (a phenomenon well-studied by system and networking researchers~\cite{CSJRLWZCC21,DZ19,YPAKT22}).

%
To help readers familiar with the initial algorithms, we use the same \imp types (\initm, \echom, \readym, \witnessm) as in the original publications. 
It has been shown in~\cite{AFRT22} that the MBRB problem can be solved if and only if $n > 3\tb+2\tm$.

\subsection{Bracha's BRB algorithm reconstructed}
\paragraph{Reconstructed version.}
Bracha's BRB algorithm comprises three phases. 
When a process invokes $\brbroadcast(m,\sn)$, it disseminates the \app $m$ an \initm \imp (first phase).
The reception of this \imp by a correct process triggers its participation in a second phase implemented by the exchange of \imps tagged \echom.
Finally, when a process has received \echom \imps from ``enough'' processes, it enters the third phase, in which \readym messages are exchanged, at the end of which it brb-delivers the \app $m$.
Alg.~\ref{alg:b-mbrb} is a reconstructed version of the Bracha's BRB, which assumes $n > \brBoundN$.

\begin{algorithm}[tb]
\centering
\fbox{
  \begin{minipage}[t]{150mm}
\footnotesize
\resetline
\begin{tabbing}
{\bf init:} \=  $\obje \gets \sfklcast(\qd{=}\lfloor\frac{n+\tb}{2}\rfloor{+}1, \qf{=}\tb{+}1, \single{=}\ttrue)$;\\
\> $\objr \gets \sfklcast(\qd{=}2\tb{+}\tm{+}1, \qf{=}\tb{+}1, \single{=}\ttrue)$.\\[\mylength]

\line{BMBRB-mbrb}
{\bf operation} $\mbrbroadcast(m,\sn)$ {\bf is} $\urbroadcast(\initm(m,\sn))$.\\[\mylength]

\line{BMBRB-klc-echo}
{\bf when} $\initm(m,\sn)$ {\bf is} $\receivedd$ {\bf from} $p_j$ {\bf do} $\obje.\klcast(\echom(m),(\sn,j))$.\\[\mylength]

\line{BMBRB-klc-ready}
{\bf when} $(\echom(m),(\sn,j))$ {\bf is} $\obje.\kldelivered$ {\bf do} $\objr.\klcast(\readym(m),(\sn,i))$.\\[\mylength]

\line{BMBRB-mbrb-dlv}
{\bf when} $(\readym(m),(\sn,j))$ {\bf is} $\objr.\kldelivered$ {\bf do} $\mbrdeliver(m,\sn,j)$.

\end{tabbing}
\normalsize
\end{minipage}
}

\caption{\kl-cast-based reconstruction of Bracha's BRB algorithm (code of $p_i$)}
\label{alg:b-mbrb}
\end{algorithm}

The algorithm requires two instances of \kl-cast, denoted \obje and \objr, associated with the \echom \imps and the \readym \imps, respectively.
For both these objects, the Boolean \single is set to \ttrue.
For the quorums, we have the following:\\
\centerline{
$\bullet$ \obje:~~$\qf = \tb+1$ and $\qd= \lfloor \frac{n+\tb}{2} \rfloor + 1,$ \hspace{0.5cm} $\bullet$ 
\objr:~~$\qf= \tb+1$ and $\qd = 2\tb+\tm+1.$
}
The integer~\sn is the sequence number of the \app $m$ mbrb-broadcast by $p_i$.
The identity of $m$ is consequently the pair $\langle \sn,i \rangle$.

\noindent Alg.~\ref{alg:b-mbrb} provides $\lmbr = \left\lceil c\left(1-\frac{\tm}{c-2\tb-\tm}\right) \right\rceil$ under:
\label{page-algo-1}
\newcommand{\assumBrachaTshld}{\brBound} 
\begin{itemize}
    \item \bassum: $n > \brBoundN$; \label{assum:bassum}
\end{itemize}
its proof of correctness can be found in Appendix~\ref{sec-proof-b-mbrb} (B87 stands for Bracha 1987).

\paragraph{Comparison (Table~\ref{table:comparison-B87}).}
When $\tm=0$, both Bracha's algorithm and its reconstruction use the same quorum size for the \readym phase.
The quorums of the \echom phase are however different (Table~\ref{table:comparison-B87}).
As the algorithm requires $n>3\tb$, we define $\Delta=n-3\tb$ as the slack between the lower bound on $n$ and the actual value of $n$. 
When considering the forwarding threshold \qf, we have $\big\lfloor \frac{n+\tb}{2}\big\rfloor +1 =2\tb + \big\lfloor \frac{\Delta}{2} \big\rfloor +1>\tb+1$.
As a result, the reconstruction of Bracha's algorithm always uses a lower forwarding threshold for \echom messages than the original. It therefore forwards messages more rapidly and reaches the delivery quorum faster.


\begin{table}[ht]
\begin{center}
\small
\begin{tabular}{|c|c|c|}
\hline
\rule[-.75em]{0pt}{2em}
{\bf Threshold} & {\bf Original version} (\echom phase) & {\bf \kl-cast-based version} (\obje) \\
\hline
\rule[-1em]{0pt}{2.5em}
{\bf Forwarding  \qf} & $\displaystyle \Big\lfloor \frac{n+\tb}{2} \Big\rfloor+1$ & $\tb+1$ \\
\hline
\rule[-1em]{0pt}{2.5em}
{\bf Delivery \qd} & $\displaystyle \Big\lfloor \frac{n+\tb}{2} \Big\rfloor + 1$ & $\displaystyle \Big\lfloor \frac{n+\tb}{2} \Big\rfloor + 1$  \\
\hline
\end{tabular}
\end{center}
\caption{Bracha's original version vs. \kl-cast-based reconstruction  when $\tm = 0$}
\label{table:comparison-B87}
\end{table}

\subsection{Imbs and Raynal's BRB algorithm reconstructed}
\paragraph{Reconstructed version.}
Imbs and Raynal's BRB is another BRB implementation, which achieves an optimal good-case latency (only two communication steps) at the cost of a non-optimal \tb-resilience.
Its reconstructed version requires $n > \irBoundN$.

\begin{algorithm}[tb]
\centering
\fbox{
\begin{minipage}[t]{150mm}
\footnotesize
\resetline
\begin{tabbing}
{\bf init:} $\objw \gets \sfklcast(\qd{=}\irqdval, \qf{=}\irqfval, \single{=}\ffalse)$.\\[\mylength]

\line{IRMBRB-mbrb}
{\bf operation} $\mbrbroadcast(m,\sn)$ {\bf is} $\urbroadcast(\initm(m, \sn))$.\\[\mylength]

\line{IRMBRB-klcast-witness}
{\bf when} $\initm(m,\sn)$ {\bf is} $\receivedd$ {\bf from} $p_j$ {\bf do} $\objw.\klcast(\witnessm(m),(\sn,j))$.\\[\mylength]

\line{IRMBRB-mbrb-dlv}
{\bf when} $(\witnessm(m),(\sn,j))$ {\bf is} $\objw.\kldelivered$ {\bf do} $\mbrdeliver(m,\sn,j)$.

\end{tabbing}
\normalsize
\end{minipage}
}

\caption{\kl-cast-based reconstruction of Imbs and Raynal's BRB algorithm (code of $p_i$)}
\label{alg:ir-mbrb}
\end{algorithm}

The algorithm requires a single \kl-cast object, denoted \objw, associated with the \witnessm \imp, and which is instantiated with $\qf = \irqfval$ and $\qd = \irqdval$, and the Boolean $\single=\ffalse$.
Similarly to Bracha's reconstructed BRB, an identity of \app in this algorithm is a pair $\langle \sn,i \rangle$ containing a sequence number \sn and a process identity $i$.

\noindent Alg.~\ref{alg:ir-mbrb} provides $\lmbr = \lir$ under:
\begin{itemize}
    \item \irassum: $n > \irBoundN$; (where $\tb+\tm>0$) \label{assum:irassum}
\end{itemize}
its proof of correctness can be found in Appendix~\ref{sec-proof-ir-mbrb} (IR16 stands for Imbs-Raynal 2016).


\paragraph{Comparison (Table~\ref{table:comparison-IR}).}
Table~\ref{table:comparison-IR} compares Imbs and Raynal's original algorithm against its \kl-cast reconstruction for $\tm=0$.
Recall that this algorithm saves one communication step with respect to Bracha's at the cost of a weaker \tb-tolerance, i.e., it requires $n > 5\tb$.
As for Bracha, let us define the slack between $n$ and its minimum as $\Delta = n - 5\tb$, we have $\Delta \geq 1$.
\begin{itemize}
    \item Let us first consider the size of the forwarding quorum (first line of the table). 
    We have $n-2\tb = 3\tb+\Delta$ and $\lfloor \frac{n+\tb}{2} \rfloor+1 = 3\tb +\lfloor \frac{\Delta}{2} \rfloor+1$.
    When $\Delta>2$, we always have $\Delta > \lfloor \frac{\Delta}{2} \rfloor+1$, it follows that the forwarding predicate of the reconstructed version is equal or weaker than the one of the original version.
  
    \item The same occurs for the size of the delivery quorum (second line of the table).
    We have $n-\tb= 4\tb+\Delta$ and $\lfloor \frac{n+3\tb}{2} \rfloor+1 = 4\tb+\lfloor \frac{\Delta}{2} \rfloor+1$.
    So both reconstructed quorums are lower than those of the original version when $\Delta>2$, making the reconstructed algorithm quicker as soon as $n \geq 5\tb+3$.
    The two versions behave identically for $5\tb+3\geq n\geq 5\tb+2 \; (\Delta \in \{1,2\})$. 
\end{itemize}

\begin{table}[ht]
\begin{center}
\renewcommand{\baselinestretch}{1}
\small
\begin{tabular}{|c|c|c|}
\hline
\rule[-.75em]{0pt}{2em}
{\bf Threshold} & {\bf Original version} (\witnessm phase) & {\bf \kl-cast-based version} (\objw) \\
\hline
\rule[-1em]{0pt}{2.5em}
{\bf Forwarding \qf} & $\displaystyle  n-2\tb$ &
$\displaystyle \Big\lfloor \frac{n+\tb}{2} \Big\rfloor + 1$ \\
\hline
\rule[-1em]{0pt}{2.5em}
{\bf Delivery \qd} & $n-\tb$ & $\displaystyle \Big\lfloor \frac{n+3\tb}{2} \Big\rfloor + 1$  \\
\hline
\end{tabular}
\end{center}
\caption{Imbs and Raynal's original version vs. \kl-cast-based reconstruction when $\tm = 0$}
\label{table:comparison-IR}
\end{table}

\subsection{Numerical evaluation of the MBRB algorithms}
\label{sec:numer-eval-mbrb}

Fig.~\ref{fig:Bracha-IR-full-heatmap} provides a numerical evaluation of the delivery guarantees of both \kl-cast-based MBRB algorithms (Algs.~\ref{alg:b-mbrb} and~\ref{alg:ir-mbrb}) in the presence of Byzantine processes and an MA.
Results were obtained for $n=100$ and $c=n-t$, and show the values of \lmbr for different values of \tb and \tm.
For instance, Fig.~\ref{fig:Bracha-full-heatmap} shows that with $6$ Byzantine processes and an MA suppressing up to $9$ ur-broadcast \imps, Alg.~\ref{alg:b-mbrb} ensures the \mbrGDprop property with $\lmbr=83$.
The figures illustrate that the reconstructed Bracha algorithm performs in a broader range of parameter values, mirroring the bounds on $n$, \tb, and \tm captured by \bassum and \irassum.
Nonetheless, both algorithms exhibit values of \lmbr that can support real-world applications in the presence of an MA.

\begin{figure}[tb]
\centering{
\begin{subfigure}[t]{0.45\textwidth}\centering
\includegraphics[scale=0.35]{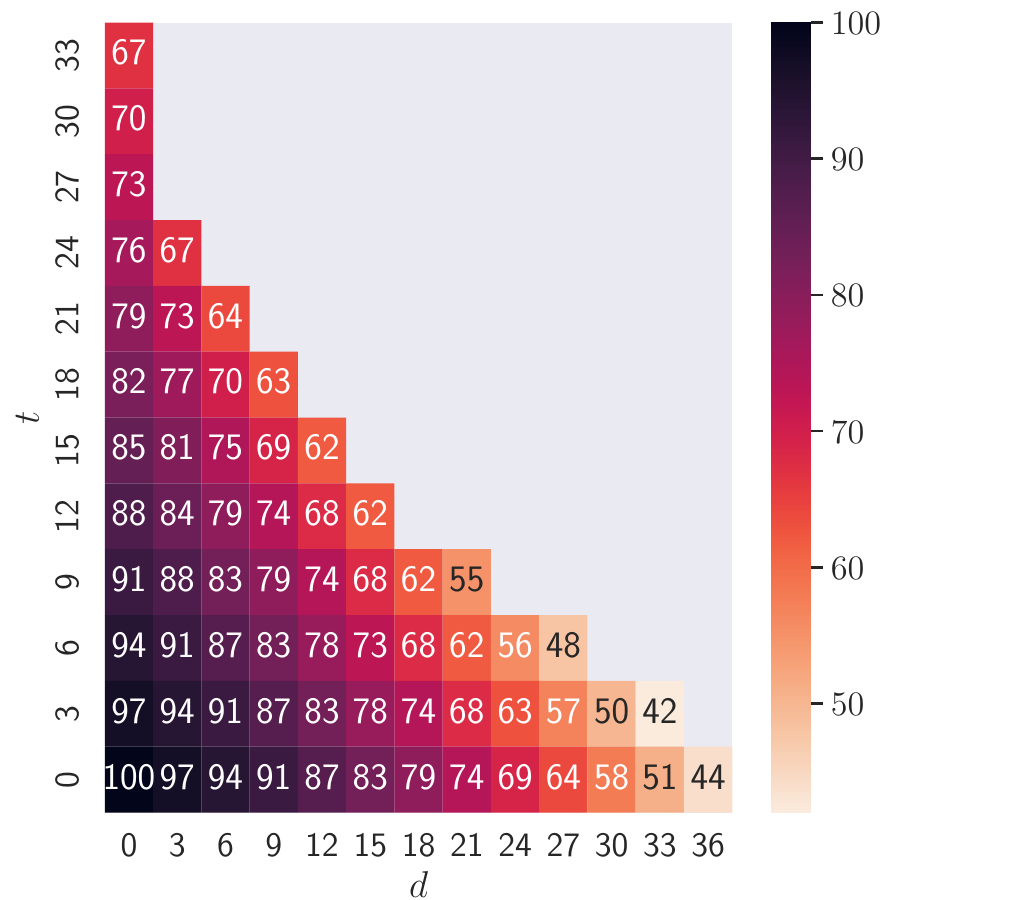}
\caption{Reconstructed Bracha MBRB (Alg.~\ref{alg:b-mbrb})}
\label{fig:Bracha-full-heatmap}
\end{subfigure}
\begin{subfigure}[t]{0.45\textwidth}\centering
\includegraphics[scale=0.35]{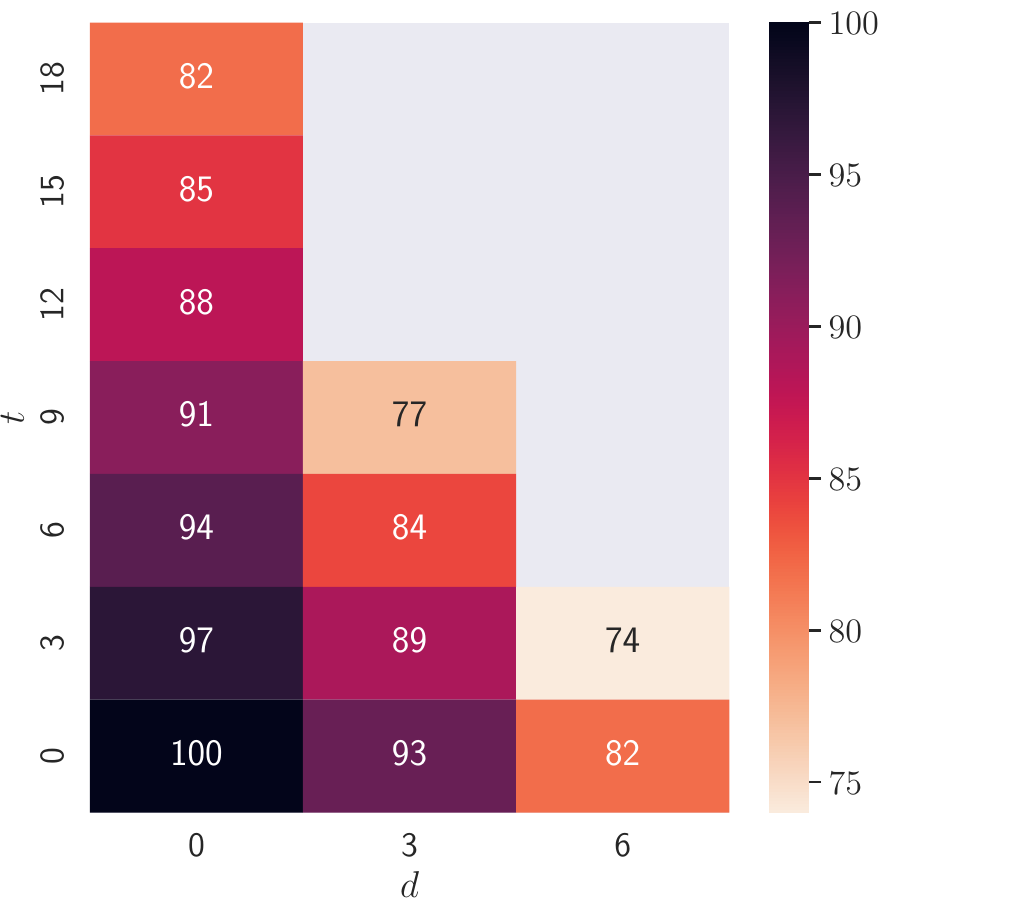}
\caption{Reconstructed Imbs-Raynal MBRB (Alg.~\ref{alg:ir-mbrb})}
\label{fig:IR-full-heatmap}
\end{subfigure}
}%
\caption{Values of \lmbr for the reconstructed BRB algorithms when varying \tb and \tm ($n=100$ and $c=n-t$) within the ranges that satisfy \bassum and \irassum} 
\label{fig:Bracha-IR-full-heatmap}
\end{figure}

\section{A Signature-Based Implementation of \texorpdfstring{\kl}{k2l}-Cast} \label{sec:sb-klcast}
This section presents an implementation of \kl-cast based on digital signatures.
The underlying model is the same as that of Section~\ref{sec-model} (page~\pageref{sec-model}), except that the computing power of the attacker is now bounded, which allows us to leverage asymmetric cryptography.

\subsection{Algorithm}
The signature-based algorithm is described in Alg.~\ref{alg:sb-klcast}.
It uses an asymmetric cryptosystem to sign \imps and verify their authenticity.
Every process has a public/private key pair.
Public keys are known to everyone, but private keys are only known to their owner. (Byzantine processes may exchange their private keys.)
Each process also knows the mapping between process indexes and associated public keys, and each process can produce a unique, valid signature for a given \imp, and check if a signature is valid.

\begin{algorithm}[tb]
\centering
\fbox{
\begin{minipage}[t]{150mm}
\footnotesize
\resetline
\begin{tabbing}
aaaA\=aA\=aA\=aA\=aA\=aaaaaaaaaaaaaaaaaaaaaaaaaaaaaaaaaaaaaaaaaaaaaaaaaaaaaA\=\kill

{\bf object} $\sbklcast(\qd)$ {\bf is}\\[\mylength]

\line{SBKL-klc}
\>{\bf operation} $\klcast(m,\id)$ {\bf is}\\

\line{SBKL-klc-cond}
\>\>{\bf if} \big($(-,\id)$ not already signed by $p_i$\big) {\bf then}\\

\line{SBKL-klc-sign}
\>\>\>$\sigiv \gets \text{signature of }(m,\id)\text{ by }p_i$;\\

\line{SBKL-klc-gather-sigs}
\>\>\>$\sigsiv \gets \{\text{all valid signatures for }(m,\id)\text{ ur-broadcast by }p_i\} \cup \{\sigiv\}$;\\

\line{SBKL-klc-bcast}
\>\>\>$\urbroadcast(\bundlem(m,\id,\sigsiv))$;\\

\line{SBKL-klc-chk-dlv}
\>\>\>$\checkdelivery()$\\

\line{SBKL-klc-end-cond}
\>\>{\bf end if}.\\[\mylength]

\line{SBKL-rcv}
\>{\bf when} $\bundlem(m,\id,\sigsv)$ {\bf is} $\receivedd$ {\bf do}\\

\line{SBKL-rcv-cond}
\>\>{\bf if} \big(\sigsv contains valid signatures for $(m,\id)$ not already ur-broadcast by $p_i$\big) {\bf then}\\

\line{SBKL-rcv-gather-sigs}
\>\>\>$\sigsiv \gets \{\text{all valid signatures for }(m,\id)\text{ ur-broadcast by }p_i\}$\\
\>\>\>\>$\cup\ \{\text{all valid signatures for }(m,\id)\text{ in }\sigsv\}$;\\

\line{SBKL-rcv-bcast}
\>\>\>$\urbroadcast(\bundlem(m,\id,\sigsiv))$;\\

\line{SBKL-rcv-chk-dlv}
\>\>\>$\checkdelivery()$\\

\line{SBKL-rcv-end-cond}
\>\>{\bf end if}.\\[\mylength]

\line{SBKL-chk-dlv}
\>{\bf internal operation} $\checkdelivery()$ {\bf is}\\

\line{SBKL-cond-dlv}
\>\>{\bf if} \big($p_i$ ur-broadcast at least \qd valid signatures for $(m,\id)$\\
\>\>\>$\land$ $(-,\id)$ not already \kl-delivered\big)\\

\line{SBKL-dlv}
\>\>\>{\bf then} $\kldeliver(m,\id)$\\

\line{SBKL-end-cond-dlv}
\>\>{\bf end if}.\\[\mylength]

{\bf end object}.
\end{tabbing}
\normalsize
\end{minipage}
}

\caption{\kl-cast implementation with signatures (code for $p_i$)}
\label{alg:sb-klcast}
\end{algorithm}

It is a simple algorithm that ensures that an \app must be \kl-cast by at least \kld correct processes to be \kl-delivered by at least \lgd correct processes.
For the sake of simplicity, we say that a correct process $p_i$ ``ur-broadcasts a set of signatures'' if it ur-broadcasts a $\bundlem(m,\id,\sigsiv)$ in which \sigsiv contains the signatures at hand.
A correct process $p_i$ ur-broadcasts an \app $m$ with identity \id
at line~\ref{SBKL-klc-bcast} or line~\ref{SBKL-rcv-bcast}.
\begin{itemize}
\item
  If this occurs at line~\ref{SBKL-klc-bcast}, $p_i$ includes in the \imp it ur-broadcasts all the signatures it has already received  for $(m,\id)$ plus its own signature.
\item
  If this occurs at line~\ref{SBKL-rcv-bcast}, $p_i$  has just received a \imp containing a set of signatures \sigsv for the pair $(m,\id)$.
  The process $p_i$ then aggregates in \sigsiv the valid signatures it just received with the ones it did know about beforehand (line~\ref{SBKL-rcv-gather-sigs}).
\end{itemize}
This algorithm simply assumes: (the prefix ``sb'' stands for signature-based)
\begin{itemize}
    \item \sbklassum~\ref{assum:no-partition}: $c > 2\tm$, \sbklassumref{assum:no-partition}
    
    \item \sbklassum~\ref{assum:dlv-tshld}: 
    $c-\tm \geq \qd \geq \tb+1$. \sbklassumref{assum:dlv-tshld}
\end{itemize}

Thanks to digital signatures, processes can relay the \imps of other processes in Alg.~\ref{alg:sb-klcast}. The algorithm, however, does not use forwarding in the same way Alg.~\ref{alg:sf-klcast} did: there is no equivalent of \qf here, that is, the only way to ``endorse'' an \app (which, in this case, is equivalent to signing this \app) is to invoke the \klcast operation.
Furthermore, only one \app can be endorsed by a correct process for a given identity (which is the equivalent of $\single = \ttrue$ in the signature-free version).

Although this implementation of \kl-cast provides better guarantees than Alg.~\ref{alg:sf-klcast}, using it to reconstruct signature-free BRB algorithms would be counter-productive.
This is because signatures allow for MA-tolerant BRB algorithms that are more efficient in terms of round and message complexity than those that can be constructed using \kl-cast~\cite{AFRT22}.

However, a signature-based \kl-cast does make sense in contexts in which many-to-many communication patterns are required~\cite{ART19}, and, we believe, opens the path to novel ways to handle local state resynchronization resilient to Byzantine failures and message adversaries.
For instance, we are using the following algorithm in our own work to design churn-tolerant money transfer systems tolerating Byzantine failures and temporary disconnections.

\subsection{Guarantees}
The proof of the following theorem can be found in Appendix~\ref{sec:sb-klcast-proofs}.
\begin{theorem}[\klCtheo] \label{theo:sb-kl-correctness}
If \sbklassum~{\em \ref{assum:no-partition}} and {\em \ref{assum:dlv-tshld}} are verified, Alg.~{\em \ref{alg:sb-klcast}} implements \kl-cast with the following guarantees:
(i) $\kv = \qd-n+c$, (ii) $\kld = \qd$, (iii) $\lgd = c-\tm$, and (iv) $\nodpty = \qd > \frac{n+\tb}{2}$.
\end{theorem}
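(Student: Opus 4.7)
The plan is to establish each of the five properties of \kl-cast in turn. The three safety properties (\klVprop, \klNDNprop, \klNDYprop) reduce to combinatorial arguments about the signatures gathered at the delivery check (line~\ref{SBKL-cond-dlv}), while the two liveness properties (\klLDprop, \klWGDprop/\klSGDprop) require a more delicate analysis of signature propagation under the MA.

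For \klVprop with $\kv = \qd - n + c$, since $p_i$ only \kl-delivers $(m,\id)$ after ur-broadcasting \qd valid signatures for $(m,\id)$ and at most $n - c$ of them can originate from faulty processes, at least $\kv$ must come from correct signers, each of whom must have signed at line~\ref{SBKL-klc-sign} by invoking $\klcast(m,\id)$. \klNDNprop is enforced directly by the ``$(-,\id)$ not already \kl-delivered'' check at line~\ref{SBKL-cond-dlv}. For \klNDYprop with $\nodpty = (\qd > (n+\tb)/2)$, I would argue by contradiction: if two correct processes \kl-delivered distinct \apps $m \neq m'$ with the same id, each would have collected $\geq \qd$ distinct valid signatures; since line~\ref{SBKL-klc-cond} prevents any correct process from signing more than one \app per id, only the at most $\tb$ Byzantine signers can contribute to both sets, forcing $n \geq 2\qd - \tb$, which contradicts the hypothesis.

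The main obstacle is \klLDprop with $\kld = \qd$, as the MA can prevent any single correct process from directly receiving all \qd correct signatures. My approach is an argument by contradiction leveraging the key assumption $c > 2\tm$ (\sbklassum~\ref{assum:no-partition}). Since each correct process ur-broadcasts at most once per newly acquired signature and the overall set of signatures is finite, the algorithm eventually quiesces; let $S_k^\infty$ denote the final set of valid signatures for $(m,\id)$ held by correct process $p_k$. Assume $M^* := \max_k |S_k^\infty| < \qd$ and let $p^*$ realize this maximum, with $S^* = S_{p^*}^\infty$. Since the \qd correct \kl-casters each sign $(m,\id)$ themselves, the union $\bigcup_k S_k^\infty$ contains at least \qd distinct signatures, so some $\sigma_\ell \notin S^*$ is held by a correct $p' \neq p^*$. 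The last ur-broadcasts of $p^*$ and of $p'$ each reach $\geq c - \tm$ correct receivers (by the MA's per-ur-broadcast budget). By maximality, every receiver of $p^*$'s last ur-broadcast ends with exactly $S^*$ (its set must contain $S^*$ yet cannot exceed size $M^*$), and so lacks $\sigma_\ell$; yet every receiver of $p'$'s last ur-broadcast acquires $\sigma_\ell$. The two receiver sets must therefore be disjoint, forcing $2(c - \tm) \leq c$, i.e., $c \leq 2\tm$, which contradicts \sbklassum~\ref{assum:no-partition}.

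Finally, for \klWGDprop/\klSGDprop with $\lgd = c - \tm$: whenever a correct $p_i$ \kl-delivers $(m,\id)$, the ur-broadcast that preceded its successful \checkdelivery call carried a bundle of $\geq \qd$ valid signatures, and hence reaches $\geq c - \tm$ correct processes. Each such receiver $p_j$ either (i) still sees some new signature in the bundle, whereupon line~\ref{SBKL-rcv-cond} triggers a forwarding of its own $\sigsiv$ of size $\geq \qd$ and the ensuing \checkdelivery causes $p_j$ to \kl-deliver; or (ii) had already ur-broadcast all those signatures, in which case an earlier \checkdelivery call had already triggered its delivery. Under the strong-global hypothesis, no correct process signs any $m' \neq m$, and $\qd \geq \tb + 1$ (\sbklassum~\ref{assum:dlv-tshld}) prevents Byzantine signers alone from reaching \qd for any $(m',\id)$, so each such $p_j$ necessarily delivers exactly $(m,\id)$, establishing the strong variant; the weak variant follows immediately.
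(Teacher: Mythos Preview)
Your proof is correct. The safety arguments (\klVprop, \klNDNprop, \klNDYprop) and the \klWGDprop/\klSGDprop arguments coincide with the paper's: both exploit that the bundle ur-broadcast immediately before a successful \checkdelivery already carries $\geq\qd$ signatures, reaches $\geq c-\tm$ correct processes, and that \sbklassum~\ref{assum:dlv-tshld} rules out delivery of any competing $(m',\id)$ when no correct process signed it.

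Where you genuinely diverge is in \klLDprop. The paper proceeds constructively through an intermediate lemma (Lemma~\ref{lemma:rcv-all-sigs}): by induction on the set $\{s_1,\dots,s_z\}$ of correct signatures, it shows that all of them eventually reach at least $c-\tm$ correct processes, the inductive step using that the receiver sets of two ur-broadcasts intersect (size $\geq c-2\tm>0$) so that some correct process aggregates and re-disseminates the enlarged set. Your argument is instead a single-shot contradiction at quiescence: taking $p^*$ with a maximum final signature set $S^*$ of size $<\qd$ and some $p'$ holding a missing signature $\sigma_\ell$, you observe that the (disjoint) correct receiver sets of the final ur-broadcasts of $p^*$ and $p'$ would each have size $\geq c-\tm$, forcing $c\leq 2\tm$. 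Both routes pivot on \sbklassum~\ref{assum:no-partition} in the same way, but the paper's induction yields the stronger intermediate fact that \emph{all} correct signatures reach $\geq c-\tm$ correct processes (so its \klLDprop proof already produces $c-\tm$ deliverers, not just one), whereas your extremal argument is shorter and self-contained for the bare ``at least one delivers'' claim. One small point worth making explicit in your write-up: after deriving $M^*\geq\qd$, you should state that $p^*$'s subsequent \checkdelivery then triggers \kl-delivery of $(m,\id)$ (using, as in your strong-global argument, that no competing $(m',\id)$ can accumulate $\qd$ signatures).
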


\section{Conclusion} \label{sec-conclusion}
This paper discussed reliable broadcast in asynchronous systems where an adversary can control some Byzantine processes and can suppress \imps. 
Its starting point was the design of generic reliable broadcast abstractions suited to applications that do not require total order on the delivery of application messages (distributed money transfers are such applications~\cite{AFRT20,BDS20,GKMPS19}).
However, the ability to thwart an adversary controlling Byzantine processes and a message adversary is new.
This approach can be applied to the design of a wide range of quorum-based distributed algorithms other than reliable broadcast.
For instance, we conjecture that \kl-cast could benefit self-stabilizing and self-healing distributed systems~\cite{ADDP19}, where a critical mass of messages from other processes is needed in order to re-synchronize the local state of a given process.



\bibliography{main}


\appendix

\renewcommand{\proofinappendix}{}

\section{Proof of the Signature-Free \texorpdfstring{\kl}{k2l}-cast Implementation (Algorithm~\ref{alg:sf-klcast})}
\label{sec:sf-klcast-proofs}

\subsection{Safety Proof}
\label{sec:sf-klcast-safety}

\begin{lemma} \label{lemma:n-bcast-if-kldv}
If a correct process $p_i$ \kl-delivers $(m,\id)$, then at least $(\qf-n+c)$ correct processes have ur-broadcast {\em \Endorsem{}($m$,\id)} at line~{\em \ref{SFKL-bcast}}.
\end{lemma}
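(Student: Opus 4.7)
My plan is to distinguish two cases based on whether any correct process ever executes the forwarding ur-broadcast at line~\ref{SFKL-fwd} for $(m,\id)$. The easy case is when no correct process forwards: then every correct process that ever ur-broadcasts $\msgm(m,\id)$ does so at line~\ref{SFKL-bcast}. Since $p_i$ \kl-delivers $(m,\id)$, the guard at line~\ref{SFKL-cond-dlv} ensures it received $\msgm(m,\id)$ from at least $\qd$ distinct senders; at most $n-c$ of these are Byzantine, so at least $\qd - (n-c) \geq \qf - n + c$ of them are correct (using $\qd \geq \qf$, which follows from \sfklassum~\ref{assum:base}). All these correct senders ur-broadcast at line~\ref{SFKL-bcast}, which gives the desired count.

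The interesting case is when at least one correct process forwards $\msgm(m,\id)$ at line~\ref{SFKL-fwd}. I will then pick $p_k$, the earliest (in real time) correct process to execute such a forwarding ur-broadcast. The guard at line~\ref{SFKL-cond-fwd} implies that $p_k$ received $\msgm(m,\id)$ from at least $\qf$ distinct senders before forwarding; at most $n-c$ of those senders are Byzantine, so at least $\qf - (n-c) = \qf - n + c$ are correct. Each such correct sender must have ur-broadcast $\msgm(m,\id)$ strictly before $p_k$ did. By the minimality of $p_k$ among correct forwarders, none of these senders could have ur-broadcast at line~\ref{SFKL-fwd}, so each of them must have ur-broadcast at line~\ref{SFKL-bcast}, again yielding at least $\qf - n + c$ correct line~\ref{SFKL-bcast} ur-broadcasters.

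I do not expect any serious obstacle: the argument is essentially a minimum-counterexample over time, and it avoids the quantitative juggling needed for the liveness lemmas later. The only points to state cleanly are (i) that the set of correct forwarders is linearly ordered in real time because each ur-broadcast is a local event at a single correct process, (ii) the authenticity of channels ensures $p_k$ really received $\msgm(m,\id)$ from the purported senders, and (iii) the observation that $n-c$ (not $\tb$) is the exact number of Byzantine processes in this execution, which is precisely what makes the tighter bound $\qf - n + c$ (rather than $\qf - \tb$) attainable.
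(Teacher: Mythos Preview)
Your proof is correct and follows essentially the same approach as the paper: a case split on whether any correct process ever forwards $\msgm(m,\id)$, with the forwarding case handled by looking at the first correct forwarder and counting the correct senders among the $\qf$ messages it received. The extra remarks you make about the real-time ordering of forwarding events, channel authenticity, and the use of $n-c$ rather than $\tb$ are valid clarifications, but the core argument is identical to the paper's.
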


\begin{proof}
If $p_i$ \kl-delivers $(m,\id)$ at line~\ref{SFKL-dlv}, then it received \qd copies of $\Endorsem{}(m,\id)$ (because of the predicate at line~\ref{SFKL-cond-dlv}).
The effective number of Byzantine processes in the system is $n-c$, such that $0 \leq n-c \leq \tb$.
Therefore, $p_i$ must have received at least $\qd-n+c$ (which is strictly positive because $\qd \geq \qf > \tb \geq n-c$ by \sfklassum~\ref{assum:base}) messages $\Endorsem{}(m,\id)$ that correct processes ur-broadcast, either during a $\klcast(m,\id)$ invocation at line~\ref{SFKL-bcast}, or during a forwarding step at line~\ref{SFKL-fwd}.
There are two cases.
\begin{itemize}
    \item If no correct process has forwarded $\Endorsem{}(m,\id)$ at line~\ref{SFKL-fwd}, then at least $\qd-n+c \geq \qf-n+c$ (as $\qd \geq \qf$ by \sfklassum~\ref{assum:base}) correct processes have ur-broadcast $\Endorsem{}(m,\id)$ at line~\ref{SFKL-bcast}.
    
    \item If at least one correct process forwarded $\Endorsem{}(m,\id)$, then let us consider $p_j$, the first correct process that forwards $\Endorsem{}(m,\id)$.
    Because of the predicate at line~\ref{SFKL-cond-fwd}, $p_j$ must have received at least \qf distinct copies of the $\Endorsem{}(m,\id)$ message, out of which at most $n-c$ have been ur-broadcast by Byzantine processes, and at least $\qf-n+c$ (which is strictly positive because $\qf > \tb \geq n-c$ by \sfklassum~\ref{assum:base}) have been sent by correct processes.
    Moreover, as $p_j$ is the first correct process that forwards $\Endorsem{}(m,\id)$, all of the $\qf-n+c$ \EndorseMsgs it receives from correct processes must have been sent at line~\ref{SFKL-bcast}. \qedhere
\end{itemize}
\end{proof}

\begin{lemma}[\klVprop] \label{lemma:sf-kl-validity}
If a correct process $p_i$ \kl-delivers an \app $m$ with identity \id, then at least $\kv=\qf-n+c$ correct processes have \kl-cast $m$ with \id.
\end{lemma}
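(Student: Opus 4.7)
The plan is to deduce \klVprop almost immediately from the preceding Lemma~\ref{lemma:n-bcast-if-kldv}, which already establishes that at least $\qf-n+c$ correct processes have ur-broadcast $\msgm(m,\id)$ at line~\ref{SFKL-bcast}. The only gap to close is the translation from ``ur-broadcast at line~\ref{SFKL-bcast}'' to ``\kl-cast $(m,\id)$''.

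I would start the proof by invoking Lemma~\ref{lemma:n-bcast-if-kldv}: since a correct process $p_i$ has \kl-delivered $(m,\id)$, there exists a set $S$ of correct processes with $|S| \geq \qf-n+c$ such that every $p_j \in S$ has executed the ur-broadcast of $\msgm(m,\id)$ at line~\ref{SFKL-bcast}. Next I would observe, by inspection of Algorithm~\ref{alg:sf-klcast}, that line~\ref{SFKL-bcast} sits inside the body of the operation $\klcast$ (lines~\ref{SFKL-klc}--\ref{SFKL-end-cond-bcast}), and is only reached after the local invocation $\klcast(m,\id)$ passes the guard at line~\ref{SFKL-cond-bcast}. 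A correct process follows its code, so the execution of line~\ref{SFKL-bcast} with argument $\msgm(m,\id)$ by $p_j$ entails that $p_j$ has locally invoked $\klcast(m,\id)$, i.e., $p_j$ has \kl-cast $m$ with identity \id.

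Combining these two observations yields that at least $\qf-n+c$ correct processes have \kl-cast $m$ with identity \id, which is exactly $\kv = \qf-n+c$ as claimed. No additional counting against the adversary is needed, since Lemma~\ref{lemma:n-bcast-if-kldv} has already absorbed both the $\tb$ potentially Byzantine senders and the forwarding-versus-initial-broadcast distinction (by tracing back to the first correct forwarder, whose trigger must have come from initial ur-broadcasts at line~\ref{SFKL-bcast}).

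There is no real obstacle here: the lemma is a clean repackaging of Lemma~\ref{lemma:n-bcast-if-kldv} at the abstraction boundary of $\klcast$. The only subtlety worth stating explicitly in the write-up is the implication ``line~\ref{SFKL-bcast} executed with $\msgm(m,\id)$ $\Rightarrow$ $\klcast(m,\id)$ was invoked'', which relies on the fact that correct processes respect the control flow of the algorithm (a standard assumption in this model, since Byzantine deviation is already accounted for separately).
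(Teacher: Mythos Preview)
Your proof is correct and follows essentially the same approach as the paper: invoke Lemma~\ref{lemma:n-bcast-if-kldv} to obtain at least $\qf-n+c$ correct processes that ur-broadcast $\msgm(m,\id)$ at line~\ref{SFKL-bcast}, then observe that line~\ref{SFKL-bcast} lies inside the body of $\klcast$, so each such process must have invoked $\klcast(m,\id)$. The paper phrases this second step as ``the correct processes that ur-broadcast $\msgm(m,\id)$ at line~\ref{SFKL-bcast} constitute a subset of those that \kl-cast $(m,\id)$'', which is exactly your control-flow observation.
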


\begin{proof}
The condition at line~\ref{SFKL-cond-bcast} implies that the correct processes that ur-broadcast $\Endorsem{}(m,\id)$ at line~\ref{SFKL-bcast} constitute a subset of those that \kl-cast $(m,\id)$. Thus, by Lemma~\ref{lemma:n-bcast-if-kldv}, their number is at least $\kv=\qf-n+c$.
\end{proof}

\begin{lemma}[\klNDNprop] \label{lemma:sf-kl-no-duplication}
A correct process $p_i$ \kl-delivers an \app $m$ with identity \id at most once.
\end{lemma}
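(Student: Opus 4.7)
The lemma follows directly by inspection of the delivery guard in Algorithm~\ref{alg:sf-klcast}. My plan is to argue by contradiction: suppose that some correct process $p_i$ \kl-delivers $(m,\id)$ twice. Each such \kl-delivery happens at line~\ref{SFKL-dlv}, and line~\ref{SFKL-dlv} is guarded by the conjunction at line~\ref{SFKL-cond-dlv}, whose second conjunct requires that no pair $(-,\id)$ has been \kl-delivered before. Hence at the moment of the second \kl-delivery, this conjunct would be false, contradicting the fact that line~\ref{SFKL-dlv} was reached.

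The only subtlety worth pointing out is that the guard at line~\ref{SFKL-cond-dlv} is expressed on the identity $\id$ alone (written $(-,\id)$), not on the pair $(m,\id)$. Consequently it actually yields the stronger conclusion that $p_i$ \kl-delivers at most one \app \emph{of any content} $m$ for a given identity \id, from which the statement of the lemma is an immediate weakening. I would mention this explicitly since it is also what will be used later when establishing \klNDYprop.

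There is no real obstacle here; the argument is a single-line observation about the algorithm, and no reference to \sfklassums~\ref{assum:base}--\ref{assum:r0}, to the MA, nor to Byzantine processes is needed, because $p_i$ is assumed correct and its local state is therefore updated as specified by the algorithm.
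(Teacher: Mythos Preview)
Your proposal is correct and matches the paper's own proof, which simply states that the property ``derives trivially from the predicate at line~\ref{SFKL-cond-dlv}.'' Your additional remark that the guard depends only on $\id$ (and hence actually prevents delivery of any second \app with the same identity) is accurate and indeed useful downstream.
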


\begin{proof}
This property derives trivially from the predicate at line~\ref{SFKL-cond-dlv}.
\end{proof}

\begin{lemma}[\klNDYprop] \label{lemma:sf-kl-conditional-no-duplicity}
If the Boolean $\nodpty = \big((\qf > \frac{n+\tb}{2}) \lor (\single \land \qd > \frac{n+\tb}{2})\big)$ is \ttrue, then no two different correct processes \kl-deliver different \apps with the same identity \id.
\end{lemma}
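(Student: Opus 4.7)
The plan is to argue by contradiction: assume two correct processes $p_i$ and $p_j$ \kl-deliver two distinct \apps $m \neq m'$ carrying the same identity \id, and derive a contradiction from each of the two disjuncts that can make \nodpty true. Both sub-arguments reduce to the same counting pattern: identify two disjoint sets of correct senders and show that their combined size would exceed $c$.

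First I would handle the case $\qf > (n+\tb)/2$. Applying Lemma~\ref{lemma:n-bcast-if-kldv} to each of $p_i$ and $p_j$ yields two sets $S_m$ and $S_{m'}$ of correct processes, each of cardinality at least $\qf - n + c$, whose members ur-broadcast $\msgm(m,\id)$ (resp.\ $\msgm(m',\id)$) at line~\ref{SFKL-bcast}. The guard at line~\ref{SFKL-cond-bcast} allows a correct process to ur-broadcast there only when no $\msgm(-,\id)$ has yet been ur-broadcast by itself, so $S_m \cap S_{m'} = \emptyset$. Hence $|S_m| + |S_{m'}| \leq c$, yielding $2(\qf - n + c) \leq c$, i.e.\ $\qf \leq n - c/2$. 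Combined with $c \geq n - \tb$, this gives $\qf \leq (n+\tb)/2$, contradicting the hypothesis.

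Second I would handle the case $\single \land \qd > (n+\tb)/2$. Since $p_i$ \kl-delivers $(m,\id)$, the guard at line~\ref{SFKL-cond-dlv} implies it received \qd copies of $\msgm(m,\id)$; at most $n - c \leq \tb$ of these copies originate from Byzantine processes, so at least $\qd - n + c$ correct processes ur-broadcast $\msgm(m,\id)$ (at line~\ref{SFKL-bcast} or~\ref{SFKL-fwd}). The symmetric bound holds for $\msgm(m',\id)$. When $\single = \ttrue$, both the initial guard at line~\ref{SFKL-cond-bcast} and the forwarding guard at line~\ref{SFKL-cond-fwd} collapse to ``$\msgm(-,\id)$ not already ur-broadcast'', so every correct process ur-broadcasts at most one \imp of the form $\msgm(-,\id)$. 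The two sets of correct senders are therefore disjoint, and the same counting step yields $\qd \leq (n+\tb)/2$, again a contradiction.

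The main obstacle I anticipate is correctly parsing the compound disjunction inside the forwarding guard at line~\ref{SFKL-cond-fwd}: one must observe that its first disjunct is vacuous when $\single = \ttrue$, so the second disjunct takes over and enforces the ``at most one ur-broadcast per \id per correct process'' invariant that underlies the quorum-disjointness count used in case~(b). Once this invariant is isolated, the remainder is a routine ballot-counting argument relying only on $c \geq n - \tb$ and the strict inequality $\qf$ (or $\qd$) $> (n+\tb)/2$.
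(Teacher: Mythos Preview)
Your proof is correct and follows essentially the same approach as the paper: both split on the two disjuncts of \nodpty, invoke Lemma~\ref{lemma:n-bcast-if-kldv} in the first case, and use the single-broadcast invariant guaranteed by the guards at lines~\ref{SFKL-cond-bcast} and~\ref{SFKL-cond-fwd} (with $\single=\ttrue$) in the second. The only cosmetic difference is that the paper phrases the counting step as a quorum-intersection argument (showing $|A\cap B|>0$ in case~1 and $|A\cap B|>\tb$ in case~2), whereas you phrase the dual: assuming $m\neq m'$, the two sender sets of correct processes are disjoint, so their sizes sum to at most $c$, contradicting the threshold; both formulations are equivalent.
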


\begin{proof}
Let $p_i$ and $p_j$ be two correct processes that respectively \kl-deliver $(m,\id)$ and $(m',\id)$. We want to prove that, if the predicate $\big((\qf > \frac{n+\tb}{2}) \lor (\single \land \qd > \frac{n+\tb}{2})\big)$ is satisfied, then $m=m'$. There are two cases. 

\begin{itemize}
    \item Case $\big( \qf > \frac{n+\tb}{2} \big)$.
    
    We denote by $A$ and $B$ the sets of correct processes that have respectively ur-broadcast $\Endorsem{}(m,\id)$ and $\Endorsem{}(m',\id)$ at line~\ref{SFKL-bcast}. By Lemma~\ref{lemma:n-bcast-if-kldv}, we know that $|A| \geq \qf-n+c > \frac{n+\tb}{2}-n+c$ and $|B| \geq \qf-n+c > \frac{n+\tb}{2}-n+c$. As $A$ and $B$ contain only correct processes, we have $|A \cap B| > 2(\frac{n+\tb}{2}-n+c)-c = \tb-n+c \geq \tb-\tb = 0$. Hence, at least one correct process $p_x$ has ur-broadcast both $\Endorsem{}(m,\id)$ and $\Endorsem{}(m',\id)$ at line~\ref{SFKL-bcast}. But because of the predicate at line~\ref{SFKL-cond-bcast}, $p_x$ ur-broadcasts at most one message $\Endorsem{}(-,\id)$ at line~\ref{SFKL-bcast}. We conclude that $m$ is necessarily equal to $m'$.

    \item Case $\big( \single \land \qd > \frac{n+\tb}{2} \big)$.
    
    Thanks to the predicate at line~\ref{SFKL-cond-dlv}, we can assert that $p_i$ and $p_j$ must have respectively received at least \qd distinct copies of $\Endorsem{}(m,\id)$ and $\Endorsem{}(m',\id)$, from two sets of processes, that we respectively denote $A$ and $B$, such that $|A| \geq \qd > \frac{n+\tb}{2}$ and $|B| \geq \qd > \frac{n+\tb}{2}$. We have $|A \cap B| > 2\frac{n+\tb}{2}-n = \tb$. Hence, at least one correct process $p_x$ has ur-broadcast both $\Endorsem{}(m,\id)$ and $\Endorsem{}(m',\id)$. But because of the predicates at lines~\ref{SFKL-cond-bcast} and \ref{SFKL-cond-fwd}, and as $\single = \ttrue$, $p_x$ ur-broadcasts at most one message $\Endorsem{}(-,\id)$, either during a $\klcast(m,\id)$ invocation at line~\ref{SFKL-bcast} or during a forwarding step at line~\ref{SFKL-fwd}. We conclude that $m$ is necessarily equal to $m'$. \qedhere
\end{itemize}
\end{proof}

\subsection{Liveness Proof}
\label{sec:sf-klcast-liveness}

\lfactmin*

\begin{proof}
Combining (\ref{eq:sup:on:wAc}), (\ref{eq:sup:on:wBc}), (\ref{eq:sup:on:wCc}) and (\ref{eq:sup:all:witness}) yields:
\begin{align*}
    (\kur+\kf)\lef + (\qd-1)&(\knf+\knfp+\kf-\lef) \,+\\
    &\hspace{5em} (\qf-1)(c-\knf-\knfp-\kf) \geq (\kur+\kf)(c-\tm),\\
    \lef\times(\kur+\kf-\qd+1) &\geq (\kur+\kf)(c-\tm) - (\qd-1)(\knf+\knfp+\kf) \,-\\
    &\hspace{14em} (\qf-1)(c-\knf-\knfp-\kf),\\
    &\geq (\kur+\kf)(c-\tm) - (\qd-\qf)(\knf+\knfp+\kf) - c(\qf-1).
\intertext{Using \sfklassum~\ref{assum:base}, we have $\qd-\qf \geq 0$.
By definition, we also have $\knf \leq \kur$, which yields:}
    \lef\times(\kur+\kf-\qd+1) &\geq (\kur+\kf)(c-\tm) - (\qd-\qf)(\kur+\kf+\knfp) - c(\qf-1),\\
    &\geq (\kur+\kf)(c-\tm-\qd+\qf) - c(\qf-1) - \knfp(\qd-\qf). \qedhere
\end{align*}
\end{proof}

\klcastiffwd*

\begin{proof}
Assume there is a correct process that ur-broadcasts $\Endorsem{}(m',\id)$ at line~\ref{SFKL-fwd} with $m' \neq m$.
Let us consider the first such process $p_i$.
To execute line~\ref{SFKL-fwd}, $p_i$ must first receive \qf messages $\Endorsem{}(m',\id)$ from distinct processes.
Since $\qf > \tb$ (\sfklassum~\ref{assum:base}), at least one of these processes, $p_j$, is correct.
Since $p_i$ is the first correct process to forward $\Endorsem{}(m',\id)$ at line~\ref{SFKL-fwd}, the $\Endorsem{}(m',\id)$ message of $p_j$ must come from line~\ref{SFKL-bcast}, and $p_j$ must have \kl-cast $(m',\id)$.
We have assumed that no correct process \kl-cast $m'\neq m$, therefore $m'=m$. Contradiction.

We conclude that, under these assumptions, no correct process ur-broadcasts $\Endorsem{}(m',\id)$ with $m' \neq m$, be it at line~\ref{SFKL-bcast} (by assumption) or at line~\ref{SFKL-fwd} (shown by this proof). As a result, $\knfp = 0$.
\end{proof}

\sfkllocaldelivery*

\begin{proof}
Let us assume that no correct process \kl-casts $(m',\id)$ with $m' \neq m$.
No correct process therefore ur-broadcasts $\Endorsem{}(m',\id)$ with $m'\neq m$ at line~\ref{SFKL-bcast}.
Lemma~\ref{lem:klcast-if-fwd} also applies and no correct process forwards $\Endorsem{}(m',\id)$ with $m'\neq m$ at line~\ref{SFKL-fwd} either, so $\knfp = 0$.
Because no correct process ur-broadcasts $\Endorsem{}(m',\id)$  with $m'\neq m$ whether at line~\ref{SFKL-bcast} or~\ref{SFKL-fwd}, a correct process receives at most $\tb$ messages $\Endorsem{}(m',\id)$ (all coming from Byzantine processes).
As by \sfklassum~\ref{assum:base}, $\tb < \qd$, no correct process \kl-delivers $(m',\id)$ with $m' \neq m$ at line~\ref{SFKL-dlv}.

We now prove the contraposition of the Lemma.
Let us assume no correct process \kl-delivers $(m,\id)$.
Since, by our earlier observations, no correct process \kl-delivers $(m',\id)$ with $m'\neq m$ either, the condition at line~\ref{SFKL-cond-dlv} implies that no correct process ever receives at least $\qd$ $\Endorsem{}(m,\id)$, and therefore $\lef = 0$.
By Lemma~\ref{lem:l-fact-min} we have $c(\qf-1) \geq (\kur+\kf)(c-\tm-\qd+\qf)$. 
\sfklassum~\ref{assum:base} implies that $c-\tm-\qd \geq 0 \iff c-\tm-\qd+\qf > 0$ (as $\qf \geq \tb+1 \geq 1$), leading to $\kur+\kf \leq \frac{c(\qf-1)}{c-\tm-\qd+\qf}$.
Because of the condition at line~\ref{SFKL-cond-bcast}, a correct process $p_j$ that has \kl-cast $(m,\id)$ but has not ur-broadcast $\Endorsem{}(m,\id)$ at line~\ref{SFKL-bcast} has necessarily ur-broadcast $\Endorsem{}(m,\id)$ at line~\ref{SFKL-fwd}.
We therefore have $\ki \leq \kur + \kf$, which gives $\ki \leq \frac{c(\qf-1)}{c-\tm-\qd+\qf}$.
\ta{maybe we lose some tightness here: by minoring $\kur+\kf$ by \ki, we only consider the case where the number of \kl-casts alone is sufficient, not the case where the number of \kl-casts PLUS the number of forwards is sufficient}
By contraposition, if $\ki > \frac{c(\qf-1)}{c-\tm-\qd+\qf}$, then at least one correct process must \kl-deliver $(m,\id)$.
Hence, we have $\kld = \left\lfloor \frac{c(\qf-1)}{c-\tm-\qd+\qf} \right\rfloor + 1$.
\end{proof}

\singleifknfp*

\begin{proof}
Let us consider a correct process $p_i \in A \cup B$. If we assume $p_i \not\in F$, $p_i$ never executes line~\ref{SFKL-fwd} by definition. Because $p_i \in A \cup B$, $p_i$ has received at least \qf messages $\Endorsem{}(m,\id)$, and therefore did not fulfill the condition at line~\ref{SFKL-cond-fwd} when it received its $\qf$\textsuperscript{th} message $\Endorsem{}(m,\id)$.
As $\single = \ffalse$ by Lemma assumption, to falsify this condition, $p_i$ must have had already ur-broadcast $\Endorsem{}(m,\id)$ when this happened. Because $p_i$ never executes line~\ref{SFKL-fwd}, this implies that $p_i$ ur-broadcasts $\Endorsem{}(m,\id)$ at line~\ref{SFKL-bcast}, and therefore $p_i \in \NF$. This reasoning proves that $A \cup B \setminus F \subseteq \NF$. As the sets $F$, $\NF$ and $\NB$ partition $A \cup B$, this shows that $\NB=\emptyset$, and $\knfp=|\emptyset|=0$.
\end{proof}

\polynomifone*

\begin{proof}
Let us write \wAb the total number of $\Endorsem{}(m,\id)$ messages from Byzantine processes received by the processes of $A$, and $\wA=\wAc+\wAb$ the total of number $\Endorsem{}(m,\id)$ messages received by the processes of $A$, whether these \EndorseMsgs originated from correct or Byzantine senders. By definition, 
$\wAb \leq \tb\lef$ and $\wA \geq \qd \lef$.
By combining these two inequalities with (\ref{eq:sup:on:wAc}) on \wAc we obtain:
\begin{align}
    \qd\lef \leq \wA = \wAc + \wAb &\leq (\kur+\kf)\lef + \tb\lef = (\tb+\kur+\kf)\lef, \nonumber\\
    \qd &\leq \tb+\kur+\kf
    \tag{as $\lef>0$}, \nonumber\\
    \qd-\tb &\leq \kur+\kf = x. \label{eq:kur:lambda:qmt}
\end{align}

This proves the first inequality of the lemma.
The processes in $A \cup B$ each receive at most $\kur+\kf$ distinct $\Endorsem{}(m,\id)$ messages from correct processes, so we have $\wAc + \wBc \leq (\knf+\kf+\knfp)(\kur+\kf)$. \ta{tight?}
Combined with the inequalities~(\ref{eq:sup:on:wCc}) on \wCc and (\ref{eq:sup:all:witness}) on $\wAc+\wBc+\wCc$ that remain valid in this case, we now have:
\begin{align}
    & (\knf+\kf+\knfp)(\kur+\kf) + (\qf-1)(c-\knf-\knfp-\kf) \geq (\kur+\kf)(c-\tm), \nonumber\\
    & (\knf+\kf+\knfp)(\kur+\kf-\qf+1) \geq (\kur+\kf)(c-\tm) - c(\qf-1). \label{eq:kappaLambda:geq}
\end{align}
Let us determine the sign of $(\kur+\kf-\qf+1)$.
We derive from (\ref{eq:kur:lambda:qmt}):
\begin{align}
    \kur+\kf-\qf+1 &\geq \qd-\tb-\qf+1 \nonumber\\
    &\geq 1 > 0. \tag{as $\qd-\qf \geq \tb$ by \sfklassum~\ref{assum:base}}
\end{align}
As $(\kur+\kf-\qf+1)$ is positive and we have $\kur \geq \knf$ by definition, we can transform (\ref{eq:kappaLambda:geq}) into: \ta{maybe tightness loss here}
\begin{align*}
    (\kur+\kf+\knfp)(\kur+\kf-\qf+1) &\geq (\kur+\kf)(c-\tm) - c(\qf-1),\\
    (x+\knfp)(x-\qf+1)&\geq x(c-\tm) - c(\qf-1), \tag{as $x=\kur+\kf$}\\
    x^2 - x(c-\tm+\qf-1-\knfp) &\geq -(c-\knfp)(\qf-1). \qedhere
\end{align*}
\end{proof}

\enoughifone*

\begin{proof}
By Lemma~\ref{lem:polynom-if-one} we have:
\begin{align}
    x^2& - x(c-\tm+\qf-1-\knfp) \geq -(c-\knfp)(\qf-1), \label{eq:polynom-if-one:start}
\end{align}
As (\ref{eq:polynom-if-one:start}) holds for all, values of $c \in [n-\tb,n]$, we can in particular consider $c=n-\tb$. Moreover, as by hypothesis, $\knfp=0$, we have. 
\begin{align}
  x^2& - x(n-\tb -\tm+ \qf - 1 ) + (\qf -1) (n-\tb) \geq 0, \nonumber\\
    x^2& - \alpha x + (\qf -1)(n-\tb) \geq 0. & \text{(by definition of $\alpha$)} \label{eq:ineq:polynomial:no:knfp} 
\end{align}

Let us first observe that the discriminant of the second-degree polynomial in (\ref{eq:ineq:polynomial:no:knfp}) is non negative, i.e. $\alpha^2-4(\qf-1)(n-\tb) \geq 0$ by \sfklassum~\ref{assum:disc}.
This allows us to compute the two real-valued roots as follows:
\begin{align*}
    r_0 &= \frac{\alpha}{2} - \frac{\sqrt{\alpha^{2} - 4 (\qf -1)(n - \tb)}}{2} \text{~~~~~~and~~~~~~}
    r_{1} = \frac{\alpha}{2} + \frac{\sqrt{\alpha^{2} - 4 (\qf -1)(n - \tb)}}{2}.
\end{align*}
Thus (\ref{eq:ineq:polynomial:no:knfp}) is satisfied if and only if $x \leq r_0 \lor x \geq r_1$.

\begin{itemize}
    \item Let us prove $r_0 \leq \qd-1-\tb$.
    We need to show that: 
    \begin{align*}
        \frac{\alpha}{2} - \frac{\sqrt{\alpha^{2} - 4 (\qf -1)(n - \tb)}}{2} & \leq \qd -1 -\tb \\
        \frac{\alpha}{2}-(\qd-1)+\tb &\leq \frac{\sqrt{\alpha^2-4(\qf-1)(n-\tb)}}{2} \\
        \frac{\sqrt{\alpha^2-4(\qf-1)(n-\tb)}}{2} &\geq \frac{\alpha}{2}-(\qd-1)+\tb \\
        \sqrt{\alpha^2-4(\qf-1)(n-\tb)} &\geq \alpha-2(\qd-1)+2\tb.
    \end{align*}
    
    The inequality is trivially satisfied if $\alpha-2(\qd-1)+2\tb < 0$.
    For all other cases, we need to verify that: 
    \begin{align*}
        \alpha^{2} - 4 (\qf -1)(n - \tb)  & \geq (\alpha - 2(\qd -1) + 2\tb)^2, \\
        \alpha^{2} - 4 (\qf -1)(n - \tb)  & \geq \alpha^2 + 4(\qd -1)^2 + 4\tb^2 -4 \alpha(\qd -1) +4\alpha\tb-8\tb(\qd-1), \\
        - 4 (\qf -1)(n - \tb)  & \geq  4(\qd -1)^2 + 4\tb^2 -4 \alpha(\qd -1) +4\alpha\tb-8\tb(\qd-1), \\
        - (\qf -1)(n - \tb)  & \geq  (\qd -1)^2 + \tb^2 - \alpha(\qd -1) +\alpha\tb-2\tb(\qd-1), \\
        - (\qf -1)(n - \tb)  & \geq  (\qd -1 - \tb)^2 - \alpha(\qd -1 -\tb),
    \end{align*}
    and thus $\alpha(\qd-1-\tb)-(\qf-1)(n-\tb)-(\qd-1-\tb)^2 \geq 0$, which is true by \sfklassum~\ref{assum:r0}.

    \item Let us prove $r_1 > \qd-1$.
    We want to show that: 
    \begin{align}
        \frac{\alpha}{2} + \frac{\sqrt{\alpha^{2} - 4 (\qf -1)(n - \tb)}}{2} & > \qd -1 \nonumber
    \end{align}
    
    Let us rewrite the inequality as follows: 
    \begin{align}
        \alpha + \sqrt{\alpha^{2} - 4 (\qf -1)(n - \tb)} & > 2(\qd -1) \nonumber\\ 
        \sqrt{\alpha^{2} - 4 (\qf -1)(n - \tb)} & > 2(\qd-1)-  \alpha   \nonumber
    \end{align}
    The inequality is trivially satisfied if $2(\qd-1) -\alpha < 0$. 
    For all other cases, we can take the squares as follows:
    \begin{align*}
        \alpha^{2} - 4 (\qf -1)(n - \tb) & > (2(\qd-1)- \alpha)^2, \\
        \alpha^{2} - 4 (\qf -1)(n - \tb) & > 4(\qd-1)^2 + \alpha^2 -4\alpha(\qd -1), \\
        - 4 (\qf -1)(n - \tb) & > 4(\qd-1)^2  -4\alpha(\qd -1), \\
        4\alpha(\qd -1) - 4 (\qf -1)(n - \tb) - 4(\qd-1)^2 & > 0, \\
        \alpha(\qd -1) -  (\qf -1)(n - \tb) - (\qd-1)^2 & > 0,
    \end{align*}
    which is true by \sfklassum~\ref{assum:r1}.
\end{itemize}

We now know that $r_0 \leq \qd-1-\tb$ and that $r_1 > \qd-1$.
In addition, as $x \leq r_0 \lor x \geq r_1$, we have $x \leq \qd-\tb-1 \lor x > \qd-1$.
But Lemma~\ref{lem:polynom-if-one} states that $x \geq \qd-\tb$, which is incompatible with $x \leq \qd-\tb-1$.
So we are left with $x > \qd - 1$, which implies, as $\qd$ and $x$ are integers that $x \geq \qd$, thus proving the lemma for $c=n-\tb$. 

Let us now consider the set $E_0$ of all executions in which $\tb$ processes are Byzantine, and therefore $c=n-\tb$, and a set $E_c$ of executions in which there are fewer Byzantine processes, and thus $c > n-\tb$ correct processes. 
We show that $E_c \subseteq E_0$ in that a Byzantine process can always simulate the behavior of a correct process.
In particular, if the simulated correct process is not subject to the message adversary, the simulating Byzantine process simply operates like a correct process.
If, on the other hand, the simulated correct process misses some messages as a result of the message adversary, the Byzantine process can also simulate missing such messages.
As a result, the executions that can happen when $c > n-\tb$ can also happen when $c=n - \tb$.
Thus our result proven for $c=n-\tb$ can be extended to all possible values of $c$.
\end{proof}

\ellifenough*

\begin{proof}
As $\knfp=0$ and $\kur+\kf\geq \qd$, we can rewrite the inequality of Lemma~\ref{lem:l-fact-min} into:
\begin{align*}
    \lef\times(\kur+\kf-\qd+1) \geq (\kur+\kf)(c-\tm-\qd+\qf) - c(\qf-1).
\end{align*}

From $\kur+\kf \geq \qd$ we derive $\kur+\kf-\qd+1>0$, and we transform the above inequality into:
\begin{align*}
    \lef
    &\geq \frac{(\kur+\kf)(c-\tm-\qd+\qf) - c(\qf-1)}{\kur+\kf-\qd+1}.
\end{align*}
Let us now focus on the case in which $c=n - \tb$, we obtain:
\begin{align*}
    \lef
    &\geq \frac{(\kur+\kf)(n-\tb-\tm-\qd+\qf) - (n-\tb)(\qf-1)}{\kur+\kf-\qd+1}.
\end{align*}
The right side of the inequality is of the form: 
\begin{align}
    \lef &\geq \frac{\phi x - \beta}{x-\gamma} = \phi + \frac{\phi\gamma - \beta}{x-\gamma}
    \label{eq:case1:ell:ge}
\end{align}
with:
\begin{align*}
    x &= \kur+\kf, \\
    \gamma &= \qd-1, \\
    \alpha &= n-\tb-\tm+\qf-1, \\
    \phi &= n-\tb-\tm-\qd+\qf, \\
    \beta &= c(\qf-1).
\end{align*}
Since, by hypothesis, $x = \kur+\kf \geq \qd$, we have:
\begin{equation}
    x-\gamma = \kur+\kf-\qd+1 > 0. \label{eq:xmgamma:gz}
\end{equation}
We also have:
\begin{align}
    \phi\gamma - \beta &= (\alpha -\gamma)\gamma - c(\qf-1) = \alpha\gamma-\gamma^2 - c(\qf-1), \nonumber\\
    &= \alpha(\qd-1)-(\qd - 1)^2 - (n-\tb)(\qf -1) > 0, \tag{by \sfklassum~\ref{assum:r1}} \nonumber\\
    \phi\gamma-\beta &> 0. \label{eq:phigammambeta:gz}
\end{align}

Injecting (\ref{eq:xmgamma:gz}) and (\ref{eq:phigammambeta:gz}) into (\ref{eq:case1:ell:ge}), we conclude that $\phi + \frac{\phi\gamma - \beta}{x-\gamma}$ is a {\it decreasing  hyperbole} defined over $x \in ]\gamma,\infty]$ with {\it asymptotic value} $\phi$ when $x \rightarrow \infty$. As $x$ is a number of correct processes, $x\leq c$. The decreasing nature of the right-hand side of (\ref{eq:case1:ell:ge}) leads us to:
$ \lef \geq \phi + \frac{\phi\gamma - \beta}{c-\gamma}
= \frac{\phi c - \beta}{c-\gamma}
\geq \frac{c(c-\tm-\qd+\qf) - c(\qf-1)}{c-\qd+1}
\geq c\times\frac{c-\tm-\qd+1}{c-\qd+1}
= c\left(1-\frac{\tm}{c-\qd+1}\right)$.

Since \lef is a positive integer, we conclude that at least $\ell_{\mathsf{min}}=\lval$ correct processes receive at least \qd message $\Endorsem{}(m,id)$ at line~\ref{SFKL-cond-dlv}.
As each of these processes either \kl-delivers $(m,\id)$ when this first happens, or has already \kl-delivered another \app $m' \neq m$ with identity \id, we conclude that at least $\ell_{\mathsf{min}}$ correct processes \kl-deliver some \app (whether it be $m$ or $m' \neq m$) with identity \id when $c=n - \tb$.
The reasoning for extending this result to any value of $c \in [n-\tb,n]$ is identical to the one at the end of the proof of Lemma~\ref{lemma:ifone:then:enough:internal} just above.
\end{proof}

\sfklweakglobaldelivery*

\begin{proof}
Let us assume $\single = \ffalse$, and one correct process \kl-delivers $(m,\id)$.
By Lemma~\ref{lem:single-if-knfp}, $\knfp = 0$.
The prerequisites for Lemma~\ref{lemma:ifone:then:enough:internal} are verified, and therefore $\kur+\kf\geq \qd$.
This provides the prerequisites for Lemma~\ref{lemma:if:enough:internal:then:enough:ell}, from which we conclude that at least $\lgd = \lval$ correct processes \kl-deliver an \app $m'$ with identity \id, which concludes the proof of the lemma.
\end{proof}

\sfklstrongglobaldelivery*

\begin{proof}
Let us assume that \textit{(i)} $\single = \ttrue$, \textit{(ii)} no correct process \kl-casts $(m',\id)$ with $m' \neq m$, and \textit{(iii)} one correct process \kl-delivers $(m,\id)$.
Lemma~\ref{lem:klcast-if-fwd} holds and implies that $\knfp = 0$.
From there, as above, Lemmas~\ref{lemma:ifone:then:enough:internal} and~\ref{lemma:if:enough:internal:then:enough:ell} hold, and at least $\lgd=\lval$ correct processes \kl-deliver an \app for identity \id.
  
By hypothesis, no correct process ur-broadcasts $\Endorsem{}(m',\id)$ at line~\ref{SFKL-bcast} with $m'\neq m$. Similarly, because of Lemma~\ref{lem:klcast-if-fwd}, no correct process ur-broadcasts $\Endorsem{}(m',\id)$ at line~\ref{SFKL-fwd} with $m'\neq m$.
As a result, a correct process can receive at most receive \tb messages $\Endorsem{}(m',\id)$ at line~\ref{SFKL-cond-dlv} (all from Byzantine processes).
As $\qd>\tb$ (by \sfklassum~\ref{assum:base}), the condition of line~\ref{SFKL-cond-dlv} never becomes true for $m'\neq m$, and as result no correct process delivers an \app $m' \neq m$ with identity \id.
All processes that \kl-deliver an \app with identity \id, therefore, \kl-deliver $m$, which concludes the lemma.
\end{proof}

\section{Proof of the Signature-Free MBRB Implementations}
\label{sec:proofsMBRB}
The proofs that follow use integer arithmetic. Given a real number $x$ and an integer $i$, let us recall that $x-1 < \lfloor x \rfloor \leq x \leq \lceil x \rceil < x+1$, $\lfloor x+i\rfloor = \lfloor x\rfloor +i$, $\lceil x+i\rceil = \lceil x\rceil +i$, $\lfloor -x \rfloor = -\lceil x \rceil$, $(i > x) \iff (i \geq \lfloor x \rfloor+1)$, $(i < x) \iff (i \leq \lceil x \rceil-1)$.
\ft{I don't think the inequalities on the rations are true. Removing them.}%

\subsection{Proof of MBRB with Bracha's reconstructed algorithm (Algorithm~\ref{alg:b-mbrb})} \label{sec-proof-b-mbrb}

\subsubsection{Instantiating the parameters of the \texorpdfstring{\kl}{k2l}-cast objects}
\label{sec:inst-param-klcast}
In Alg.~\ref{alg:b-mbrb} (page~\pageref{alg:b-mbrb}), we instantiate the \kl-cast objects \obje and \objr using the signature-free implementation presented in Section \ref{sec-klcast-sf}. Let us mention that, given that $\obje.\single = \objr.\single = \ttrue$, then we use the strong variant of the global-delivery property of \kl-cast (\klSGDprop) for both objects \obje and \objr.
Moreover, according to the definitions of $k'$, $k$, $\ell$ and $\delta$ (page~\pageref{def-kl-parameters}) and their values stated in
Theorem~\ref{theo:sf-kl-correctness}, we have:
\begin{itemize}
    \item $\obje.\kv = \obje.\qf-n+c = \tb+1-n+c \geq \tb+1-\tb = 1$,
    
    \item $\begin{aligned}[t]
    \obje.\kld &=
    \left\lfloor \frac{c(\obje.\qf-1)}{c-\tm-\obje.\qd+\obje.\qf} \right\rfloor + 1
    = \left\lfloor \frac{c(\tb+1-1)}{c-\tm-\lfloor \frac{n+\tb}{2} \rfloor-1+\tb+1} \right\rfloor + 1 \\
    &= \left\lfloor \frac{c\tb}{c-\tm-\lfloor \frac{n-\tb}{2} \rfloor} \right\rfloor + 1,
    \end{aligned}$
    
    \item $\begin{aligned}[t]
    \obje.\lgd &=
    \left\lceil c\left(1-\frac{\tm}{c-\obje.\qd+1}\right) \right\rceil
    = \left\lceil c\left(1-\frac{\tm}{c-\lfloor \frac{n+\tb}{2} \rfloor-1+1}\right) \right\rceil \\
    &= \left\lceil c\left(1-\frac{\tm}{c-\lfloor \frac{n+\tb}{2} \rfloor}\right) \right\rceil,
    \end{aligned}$
    
    \item $\begin{aligned}[t]
    \obje.\nodpty &= \left(\left(\obje.\qf > \frac{n+\tb}{2}\right) \lor \left(\obje.\single \land \obje.\qd > \frac{n+\tb}{2}\right)\right) \\
    &= \left(\left(\tb+1 > \frac{n+\tb}{2}\right) \lor \left(\ttrue \land \left\lfloor \frac{n+\tb}{2} \right\rfloor +1 > \frac{n+\tb}{2} \right)\right) \\
    &= (\ffalse \lor (\ttrue \land \ttrue))
    = \ttrue,
    \end{aligned}$
    
    \item $\objr.\kv = \objr.\qf-n+c = \tb+1-n+c \geq \tb+1-\tb = 1$,
    
    \item $\begin{aligned}[t]
    \objr.\kld
    &= \left\lfloor \frac{c(\objr.\qf-1)}{c-\tm-\objr.\qd+\objr.\qf} \right\rfloor + 1
    = \left\lfloor \frac{c(\tb+1-1)}{c-\tm-2\tb-\tm-1+\tb+1} \right\rfloor + 1 \\
    &= \left\lfloor \frac{c\tb}{c-2\tm-\tb} \right\rfloor + 1,
    \end{aligned}$
    
    \item $\begin{aligned}[t]
    \objr.\lgd
    &= \left\lceil c\left(1-\frac{\tm}{c-\objr.\qd+1}\right) \right\rceil
    = \left\lceil c\left(1-\frac{\tm}{c-2\tb-\tm-1+1}\right) \right\rceil \\
    &= \left\lceil c\left(1-\frac{\tm}{c-2\tb-\tm}\right) \right\rceil,
    \end{aligned}$
    
    \item $\begin{aligned}[t]
    \objr.\nodpty
    &= \left(\left( \objr.\qf > \frac{n+\tb}{2} \right) \lor \left( \objr.\single \land \objr.\qd > \frac{n+\tb}{2} \right)\right)\\
    &= \left(\left( \tb+1 > \frac{n+\tb}{2} \right) \lor \left( \ttrue \land 2\tb+\tm+1 > \frac{n+\tb}{2} \right)\right)
    \in \{\ttrue,\ffalse\}
    \end{aligned}$
\end{itemize}

We recall that parameter \nodpty controls the conditional no-duplicity property.
The value for $\obje.\nodpty$ is \ttrue, but that of value for $\objr.\nodpty$ may be either \ttrue or \ffalse depending on the values of $n$, \tb, and \tm.
This is fine because, in Bracha's reconstructed algorithm (Alg.~\ref{alg:b-mbrb}), it is the first round (\obje) that ensures no-duplicity. Once this has happened, the second round (\objr) does not need to provide no-duplicity but only needs to guarantee the termination properties of local and global delivery.
This observation allows \objr to operate with lower values of \qd and \qf.

Finally, we observe that for Alg.~\ref{alg:b-mbrb}, \sfklassum~\ref{assum:base} through~\ref{assum:r0} are all satisfied by \bassum $n > \brBoundN$. 
We prove this fact in Appendix~\ref{lem:proof-objE}.
In the following, we prove that $\brBoundN \geq \brBound \geq 3\tb+2\tm$.
\df{I reworked this piece of text after the itemize, removing what was there and adding these two paragraphs.}



%


%
    

\begin{observation}
\label{obs:boundBracha}
For $\tm, \tb \in \mathbb{N}_0$ non-negative integers\df{It actually works for real numbers too. What shall we say?}, we have: 
\begin{align*}
    \brBoundN \geq \brBound \geq 3\tb+2\tm.
\end{align*}
\end{observation}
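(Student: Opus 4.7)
The observation consists of two inequalities. The plan is to treat them separately, in each case reducing to a nonnegativity statement by squaring (after checking that both sides are nonnegative, which is automatic since $\tb, \tm \geq 0$).

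For the lower bound $\brBound \geq 3\tb+2\tm$, I would isolate the radical and reduce to showing
\[
\sqrt{\tb^2 + 6\tb\tm + \tm^2} \;\geq\; \tb + \tm.
\]
Squaring both sides, this is equivalent to $\tb^2 + 6\tb\tm + \tm^2 \geq \tb^2 + 2\tb\tm + \tm^2$, i.e.\ $4\tb\tm \geq 0$, which holds.

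For the upper inequality $\brBoundN \geq \brBound$, after cancelling $2\tb+\tm$ on each side, it suffices to show
\[
\tb + \tm + 2\sqrt{\tb\tm} \;\geq\; \sqrt{\tb^2 + 6\tb\tm + \tm^2}.
\]
The left-hand side is $(\sqrt{\tb}+\sqrt{\tm})^2 \geq 0$, so I may square. The left-hand side squared expands to
\[
\tb^2 + \tm^2 + 2\tb\tm + 4(\tb+\tm)\sqrt{\tb\tm} + 4\tb\tm
\;=\; \tb^2 + 6\tb\tm + \tm^2 + 4(\tb+\tm)\sqrt{\tb\tm},
\]
so the inequality reduces to $4(\tb+\tm)\sqrt{\tb\tm} \geq 0$, which again holds trivially.

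The only subtle point (and arguably the ``hard'' part, though mild) is to make sure both sides are nonnegative before squaring, so that the squaring step is reversible; both checks are immediate from $\tb,\tm\geq 0$. No further work is needed, and the two inequalities combine to yield the observation.
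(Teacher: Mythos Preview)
Your proof is correct and follows essentially the same approach as the paper's own proof: isolate the radical, check nonnegativity, square, and reduce to $4\tb\tm \geq 0$ (for the lower bound) and $4(\tb+\tm)\sqrt{\tb\tm} \geq 0$ (for the upper bound). The only cosmetic difference is that the paper presents the chain of inequalities in the forward direction (starting from the trivially true statement and deriving the desired bound), whereas you work backward by reduction; the underlying argument is identical.
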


\begin{proof}
Let us start by proving the first inequality. 
\begin{align*}
    \tb^2 + 6\tb\tm + \tm^2 + 4\sqrt{\tb\tm}(\tb+\tm) & \geq \tb^2 + 6\tb\tm + \tm^2, \\
    \tb^2 + \tm^2 + 4\tb\tm + 4\tb\sqrt{\tb\tm}+4\tm\sqrt{\tb\tm}+2\tb\tm & \geq \tb^2 + 6\tb\tm + \tm^2, \\
    (\tb + \tm +2\sqrt{\tb\tm})^2 & \geq \tb^2 + 6\tb\tm + \tm^2, \\
    \tb + \tm +2\sqrt{\tb\tm} & \geq \sqrt{\tb^2 + 6\tb\tm + \tm^2}, \\
    3\tb + 2\tm +2\sqrt{\tb\tm} & \geq 2\tb+\tm + \sqrt{\tb^2 + 6\tb\tm + \tm^2}.
\end{align*}
Let us then prove the second inequality:
\begin{align*}
    \tb^2+6\tb\tm+\tm^2 &\geq \tb^2+2\tb\tm+\tm^2 = (\tb+\tm)^2, \\
    \sqrt{\tb^2 + 6\tb\tm + \tm^2} &\geq \tb+\tm, \\
    2\tb+\tm + \sqrt{\tb^2 + 6\tb\tm + \tm^2} &\geq 3\tb+2\tm. \qedhere
\end{align*}
\end{proof}

\subsubsection{Proof of satisfaction of the assumptions of Algorithm~\ref{alg:sf-klcast}}
In this section, we prove that all the assumptions of the signature-free \kl-cast implementation presented in Alg.~\ref{alg:sf-klcast} (page~\pageref{alg:sf-klcast}) are well respected for the two \kl-cast instances used in Alg.~\ref{alg:b-mbrb} (\obje and \objr).

\begin{lemma}
\label{lem:proof-objE}
Alg.~{\em \ref{alg:sf-klcast}}'s assumptions are well respected for \obje.
\end{lemma}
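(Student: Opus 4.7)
The plan is to substitute the parameters of $\obje$ into each of the four \sfklassums and show that \bassum ($n > \brBoundN$), combined with Observation~\ref{obs:boundBracha}, is strong enough to discharge each of them. Writing $Q := \lfloor \frac{n+\tb}{2}\rfloor$, the parameters of $\obje$ are $\qf = \tb+1$, $\qd = Q+1$, so $\qf - 1 = \tb$, $\qd - 1 = Q$, and $\alpha = n + \qf - \tb - \tm - 1 = n - \tm$. With these substitutions, the four assumptions simplify, respectively, to
\begin{align*}
&c - \tm \;\geq\; Q + 1 \;\geq\; 2\tb + 1,\\
&(n - \tm)^2 - 4\tb(n - \tb) \;\geq\; 0,\\
&(n - \tm)\,Q - \tb(n - \tb) - Q^2 \;>\; 0,\\
&(n - \tm)(Q - \tb) - \tb(n - \tb) - (Q - \tb)^2 \;\geq\; 0.
\end{align*}

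First, I would dispose of \sfklassum~\ref{assum:base}. The inequality $Q + 1 \geq 2\tb + 1$ amounts to $n \geq 3\tb$, which is immediate from \bassum via Observation~\ref{obs:boundBracha}. For $c - \tm \geq Q + 1$, using $c \geq n - \tb$ it suffices to show $n - \tb - \tm \geq Q + 1$, i.e., roughly $n \geq 3\tb + 2\tm + 2$. This holds by a small case analysis on whether $\tb$ and $\tm$ are zero or positive: when both are positive, $2\sqrt{\tb\tm} \geq 2$, so \bassum gives $n \geq 3\tb + 2\tm + 3$; the boundary cases $\tb = 0$ or $\tm = 0$ are checked directly.

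Next, I would handle \sfklassum~\ref{assum:disc} by viewing $f(n) := (n - \tm)^2 - 4\tb(n - \tb)$ as a quadratic in $n$. Its discriminant (in $n$) is $4\cdot 4\tb\tm$, so its roots are $n = (2\tb + \tm) \pm 2\sqrt{\tb\tm}$. Since \bassum gives $n > 3\tb + 2\tm + 2\sqrt{\tb\tm} > 2\tb + \tm + 2\sqrt{\tb\tm}$ (using $\tb + \tm > 0$; the degenerate case $\tb = \tm = 0$ is trivial), we have $f(n) > 0$, which is stronger than needed.

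Finally, for \sfklassums~\ref{assum:r1} and~\ref{assum:r0}, I would eliminate the floor by using the bounds $\frac{n + \tb - 1}{2} \leq Q \leq \frac{n + \tb}{2}$. Writing $\phi(Q) := (n - \tm)Q - Q^2$ (which is concave in $Q$) and similarly $\psi(Q) := (n - \tm)(Q - \tb) - (Q - \tb)^2$, one shows that under \bassum both $Q$ and $Q - \tb$ lie to the left of the vertex of the corresponding parabola, so the lower bound on $Q$ yields a lower bound on $\phi(Q)$ and $\psi(Q)$. Substituting and clearing denominators reduces each inequality to a polynomial inequality in $n$, $\tb$, $\tm$ that is implied by the single quadratic condition $(n - \tm)^2 \geq 4\tb(n - \tb)$ already established for \sfklassum~\ref{assum:disc}, together with the strict refinement $n > (2\tb + \tm) + 2\sqrt{\tb\tm}$ coming from Observation~\ref{obs:boundBracha}. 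The strictness in \sfklassum~\ref{assum:r1} comes from the fact that \bassum is itself a strict inequality.

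The main obstacle will be the bookkeeping around the floor function in \sfklassums~\ref{assum:r1} and~\ref{assum:r0}: the parity of $n + \tb$ affects $Q$ by $\tfrac12$, which matters because the target inequalities are themselves quadratic in $Q$. Treating the even and odd cases separately (or systematically using $Q \geq \frac{n + \tb - 1}{2}$ as a universal lower bound) should suffice, but requires careful algebraic manipulation to avoid losing the slack provided by the $2\sqrt{\tb\tm}$ term of \bassum.
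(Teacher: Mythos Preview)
Your plan for \sfklassums~\ref{assum:base} and~\ref{assum:disc} is fine and matches the paper's approach. The trouble is with \sfklassums~\ref{assum:r1} and~\ref{assum:r0}.

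First, the concavity bookkeeping is backwards. The vertex of $\phi(Q)=(n-\tm)Q-Q^2$ sits at $Q=(n-\tm)/2$, whereas $Q=\lfloor\frac{n+\tb}{2}\rfloor$ satisfies $Q\geq\frac{n+\tb-1}{2}\geq\frac{n-\tm}{2}$ whenever $\tb+\tm\geq 1$. So $Q$ lies to the \emph{right} of the vertex, $\phi$ is decreasing there, and it is the \emph{upper} bound $Q\leq\frac{n+\tb}{2}$ (i.e.\ the $\epsilon=0$ case) that controls the worst case, not the lower bound. For $\psi$, whether $Q-\tb$ lies left or right of its vertex $(n-\tm)/2$ depends on the sign of $\tb-\tm$, so the monotonicity direction is not uniform either.

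Second, and more seriously, the claim that \sfklassum~\ref{assum:r1} follows from \sfklassum~\ref{assum:disc} together with $n>2\tb+\tm+2\sqrt{\tb\tm}$ is false. Carrying out the substitution in the worst case $\epsilon=0$ gives, after multiplying by $4$,
\[
n^2-2n(2\tb+\tm)+3\tb^2-2\tb\tm>0,
\]
whose larger root is $2\tb+\tm+\sqrt{\tb^2+6\tb\tm+\tm^2}=\brBound$, strictly larger than $2\tb+\tm+2\sqrt{\tb\tm}$ whenever $\tb+\tm>0$. Concretely, take $\tb=1$, $\tm=10$, $n=20$: then $n>2\tb+\tm+2\sqrt{\tb\tm}\approx 18.3$ and \sfklassum~\ref{assum:disc} holds ($(n-\tm)^2-4\tb(n-\tb)=100-76=24\geq 0$), yet with $Q=10$ and $\alpha=10$ one gets $\alpha Q-\tb(n-\tb)-Q^2=100-19-100=-19<0$, so \sfklassum~\ref{assum:r1} fails.

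What actually rescues the argument is the \emph{full} strength of \bassum via Observation~\ref{obs:boundBracha}: $n>\brBoundN\geq\brBound$, and the threshold for \sfklassum~\ref{assum:r1} is exactly $\brBound$. The paper handles this by writing $Q=\frac{n+\tb-\epsilon}{2}$ with $\epsilon\in\{0,1\}$, expanding explicitly, and solving the resulting quadratic in $n$; the answer is precisely $n>\brBound$ (up to an $\epsilon$-dependent correction that only helps). Your plan will go through if you replace the claimed weaker bound by $n>\brBound$ and appeal to Observation~\ref{obs:boundBracha} for that.
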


\begin{proof}
Let us recall that 
$\qf = \tb+1$ and $\qd= \lfloor \frac{n+\tb}{2} \rfloor+1$ for \obje.

\begin{itemize}
\item \textit{Proof of satisfaction of \sfklassum~{\em\ref{assum:base}}} ($c-\tm \geq \obje.\qd \geq \obje.\qf+\tb \geq 2\tb+1$):

By \bassum and Observation~\ref{obs:boundBracha}, we have the following:
\begin{align}
    c-\tm &\geq n-\tb-\tm = \frac{2n-2\tb-2\tm}{2}, \tag{by definition of $c$}\\
    &> \frac{n+3\tb+2\tm-2\tb-2\tm}{2} = \frac{n+\tb}{2}, \tag{as $n > 3\tb+2\tm$}\\
    &\geq \left\lfloor \frac{n+\tb}{2} \right\rfloor+1. \label{b-echo-asmp-1-1}\\
\intertext{We also have:}
    \left\lfloor \frac{n+\tb}{2} \right\rfloor+1 &\geq \left\lfloor \frac{3\tb+2\tm+1+\tb}{2} \right\rfloor+1, \tag{as $n > 3\tb+2\tm$}\\
    &\geq \lfloor 2\tb+\tm+1/2 \rfloor+1 = 2\tb+\tm+1 \geq 2\tb+1. \label{b-echo-asmp-1-2}\\
\intertext{By combining (\ref{b-echo-asmp-1-1}) and (\ref{b-echo-asmp-1-2}), we get:}
    & c-\tm \geq \left\lfloor \frac{n+\tb}{2} \right\rfloor +1 \geq 2\tb+1 \geq 2\tb+1, \nonumber\\
    & c-\tm \geq \obje.\qd \geq \obje.\qf+\tb \geq 2\tb+1. \tag{\sfklassum~\ref{assum:base}}
\end{align}

\item \textit{Proof of satisfaction of \sfklassum~{\em\ref{assum:disc}}} ($\alpha^2-4(\obje.\qf-1)(n-\tb) \geq 0$):

Let us recall that, for object \obje, we have $\qf=\tb+1$ and $\qd= \lfloor \frac{n+\tb}{2} \rfloor + 1$.
We therefore have $\alpha= n+\qf-\tb-\tm-1 = n-\tm$.
Let us now consider the quantity:
\begin{align}
    \Delta &= \alpha^{2} - 4 (\qf - 1)(n - \tb) = (n - \tm)^2 - 4 \tb (n - \tb)\nonumber\\
    & = 4 \tb^{2} + \tm^{2} + n^{2} + n \left(- 4 \tb - 2 \tm\right)\nonumber
\end{align}
The inequality is satisfied if $n>2\sqrt{\tb\tm} + 2\tb+\tm$, which is clearly the case as $n>\brBoundN$.
This proves \sfklassum~\ref{assum:disc}.

\item \textit{Proof of satisfaction of \sfklassum~{\em\ref{assum:r1}}} ($\alpha(\obje.\qd-1)-(\obje.\qf-1)(n-\tb)-(\obje.\qd-1)^2 > 0$):

Let us consider the quantity on the left-hand side of \sfklassum~\ref{assum:r1} and substitute 
$\qf=\tb+1$, $\qd= \lfloor \frac{n+\tb}{2} \rfloor + 1$:
\begin{align}
    & ~~~~ \alpha (\qd - 1) - (\qf-1) (n-\tb) - (\qd - 1)^{2}, \nonumber\\
    & =(n + \qf -\tb -\tm - 1)(\qd - 1) - (\qf-1) (n-\tb) - (\qd - 1)^{2}, \nonumber\\
    & =(n  -\tm )\left(\left \lfloor \frac{n+\tb}{2} \right\rfloor\right) - \tb (n-\tb) - \left(\left\lfloor \frac{n+\tb}{2} \right\rfloor\right)^{2}. \label{eq:brEr1-beforeCases}
\end{align}

We now observe that $\left(\left \lfloor \frac{n+\tb}{2} \right\rfloor\right) = \left(\frac{n+\tb-\epsilon}{2} \right)$ with $\epsilon=0$ if $n+\tb=2k$ is even, and $\epsilon=1$ if $n+\tb=2k+1$ is odd.
We thus rewrite (\ref{eq:brEr1-beforeCases}) as follows:
\begin{align}
    & ~~~~ (n  -\tm )\left(\frac{n+\tb-\epsilon}{2} \right) - \tb (n-\tb) - \left( \frac{n+\tb-\epsilon}{2} \right)^{2}, \nonumber\\
    & =\frac{n+\tb - \epsilon}{2} \times \frac{2n-2\tm-n-\tb+\epsilon}{2} - \tb (n-\tb), \nonumber\\
    & =\frac{(n+\tb - \epsilon)( n-2\tm-\tb+\epsilon) - 4\tb (n-\tb)}{4}, \nonumber\\
    & =\frac{n^{2} - \tb^{2} - 2 \tb \tm + 2 \tb \epsilon - 2n \tm  + 2 \tm \epsilon  - \epsilon^{2} - 4n\tb +4\tb^2}{4}, \nonumber\\
    & =\frac{n^{2} +3 \tb^{2} - 2 \tb \tm - 2 n(\tm +2\tb) + \epsilon(2 \tb   + 2 \tm   - \epsilon) }{4}. \nonumber
\end{align}

As we want to show that the above quantity is positive, the result will not change if we multiply it by 4:
{\allowdisplaybreaks
\begin{align}
    & ~~~~ n^{2} +3 \tb^{2} - 2 \tb \tm - 2 n(\tm +2\tb) + \epsilon(2 \tb   + 2 \tm   - \epsilon)>0. \label{eq:assum:r1brEbeforeN}
\end{align}
}
We now solve the inequality to obtain: 
\begin{align*}
    n &> 2 \tb + \tm + \sqrt{\tb^{2} + 6 \tb \tm  + \tm^{2} - \epsilon(2 \tb + 2 \tm - \epsilon)}.
\end{align*}

We observe that, for $\tb+\tm\geq 1$, the quantity $- \epsilon(2 \tb + 2 \tm - \epsilon)$ is strictly negative if $\epsilon=1$, therefore if $\epsilon=1\lor\tb+\tm\geq 1$:
\begin{align*}
    n & > \brBoundN, \\
    & \geq \tb + \tm + \sqrt{\tb^{2} + 6 \tb \tm  + \tm^{2}}, \tag{by Observation~\ref{obs:boundBracha}}\\
    & \geq 2 \tb + \tm + \sqrt{\tb^{2} + 6 \tb \tm  + \tm^{2} - \epsilon(2 \tb + 2 \tm - \epsilon)}.
\end{align*}

This leaves out the case $(\tb=\tm=0) \land (n=2k+1 \text{ is odd})$, for which we can show that (\ref{eq:assum:r1brEbeforeN}) is positive or null for $n \geq 1$:
\begin{align*}
    (\ref{eq:assum:r1brEbeforeN}):& ~~~~ n^{2} +3 \tb^{2} - 2 \tb \tm - 2 n(\tm +2\tb) + \epsilon(2 \tb   + 2 \tm   - \epsilon), \\
    &=  n^{2} - 1 \geq 0 \text{ for } n \geq 1.
\end{align*}
This completes the proof of \sfklassum~\ref{assum:r1}.

\item \textit{Proof of satisfaction of \sfklassum~{\em\ref{assum:r0}}} ($\alpha(\obje.\qd-1-\tb)-(\obje.\qf-1)(n-\tb)-(\obje.\qd-1-\tb)^2 \geq 0$):

Let us consider the quantity on the left-hand side of \sfklassum~\ref{assum:r0} and substitute 
$\qf=tb+1$, $\qd= \lfloor \frac{n+\tb}{2} \rfloor + 1$:
\begin{align}
    & ~~~~ \alpha(\qd -1 -\tb) - (\qf -1)(n - \tb)  - (\qd -1 - \tb)^2,  \nonumber\\
    & = (n + \qf -\tb -\tm - 1)(\qd -1 -\tb) - (\qf -1)(n - \tb)  - (\qd -1 - \tb)^2, \nonumber\\
    & = (n  -\tm)\left(\left\lfloor \frac{n+\tb}{2} \right \rfloor  -\tb\right) - \tb (n - \tb)  - \left(\left\lfloor \frac{n+\tb}{2} \right \rfloor - \tb\right)^2. \label{eq:brEr0-beforeCases}
\end{align}

Like before, we observe that $\left(\left \lfloor \frac{n+\tb}{2} \right\rfloor\right) = \left(\frac{n+\tb-\epsilon}{2} \right)$ with $\epsilon=0$ if $n+\tb=2k$ is even, and $\epsilon=1$ if $n+\tb=2k+1$ is odd.
We thus rewrite (\ref{eq:brEr0-beforeCases}) as follows:
\begin{align*}
    & ~~~~ (n  -\tm)\left( \frac{n+\tb-\epsilon}{2}  -\tb\right) - \tb (n - \tb)  - \left( \frac{n+\tb-\epsilon}{2}  - \tb\right)^2, \\
    & = (n  -\tm)\cdot \frac{n-\tb-\epsilon}{2}   - \tb (n - \tb)  - \left( \frac{n-\tb-\epsilon}{2} \right)^2, \\
    & = \frac{n-\tb-\epsilon}{2}  \cdot\frac{2n  -2\tm - n+\tb+\epsilon}{2} - \tb (n - \tb), \\
    & = \frac{(n-\tb-\epsilon)(n  -2\tm +\tb+\epsilon) -4n\tb +4\tb^2}{4}, \\
    & = \frac{- \tb^{2} + 2 \tb \tm - 2 \tb \epsilon + 2 \tm \epsilon - 2 \tm n - \epsilon^{2} + n^{2}}{4}.
\end{align*}

As we want to show that the above quantity is non-negative, the result will not change if we multiply it by 4:
\begin{align*}
    & ~~~~ - \tb^{2} + 2 \tb \tm - 2 \tb \epsilon + 2 \tm \epsilon - \epsilon^{2} - 2 \tm n  + n^{2}.
\end{align*}

We then solve the inequality to obtain: $n\geq \maxfn(\tb + \epsilon, - \tb + 2 \tm - \epsilon)$, which is clearly satisfied as $n\geq \brBoundN+1$. This proves all previous inequality and thus  \sfklassum~\ref{assum:r0}. \qedhere

\end{itemize}%
\end{proof}

\begin{lemma}
Alg.~{\em \ref{alg:sf-klcast}}'s assumptions are well respected for \objr.
\end{lemma}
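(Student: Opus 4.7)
The plan is to mirror the structure of the proof for \obje in Lemma~\ref{lem:proof-objE}, substituting the parameters of \objr, namely $\qf = \tb+1$ and $\qd = 2\tb+\tm+1$, which yields $\alpha = n + \qf - \tb - \tm - 1 = n - \tm$ (so, conveniently, $\alpha$ is the same as for \obje). I would then verify each of the four \sfklassums in turn, using \bassum ($n > \brBoundN$) and Observation~\ref{obs:boundBracha}, which implies in particular that $n > 3\tb + 2\tm$.

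For \sfklassum~\ref{assum:base}, I would use $c \geq n-\tb$ together with $n > 3\tb+2\tm$ to obtain $c-\tm \geq n-\tb-\tm > 2\tb+\tm$, hence $c-\tm \geq 2\tb+\tm+1 = \qd$; the remaining inequalities $\qd = 2\tb+\tm+1 \geq 2\tb+1 = \qf+\tb$ and $\qf+\tb \geq 2\tb+1$ follow directly from $\tm \geq 0$. For \sfklassum~\ref{assum:disc}, since $\alpha = n-\tm$ and $\qf = \tb+1$ are the same as for \obje, the inequality $\alpha^2 - 4(\qf-1)(n-\tb) = (n-\tm)^2 - 4\tb(n-\tb) \geq 0$ holds by exactly the same argument used for \obje (whose computation does not depend on $\qd$).

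For \sfklassum~\ref{assum:r0}, I would expand $(n-\tm)(\qd-1-\tb) - \tb(n-\tb) - (\qd-1-\tb)^2$ with $\qd-1-\tb = \tb+\tm$. A straightforward calculation collapses the expression to $\tm(n - 3\tb - 2\tm)$, which is $\geq 0$ because $\tm \geq 0$ and $n > 3\tb+2\tm$ by Observation~\ref{obs:boundBracha}.

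The main obstacle will be \sfklassum~\ref{assum:r1}, because this is the only place where the $\sqrt{\tb\tm}$ term in $\brBoundN$ is actually needed (the weaker bound $n > 3\tb+2\tm$ is insufficient, as can be checked by plugging into the expansion). Expanding $(n-\tm)(2\tb+\tm) - \tb(n-\tb) - (2\tb+\tm)^2$ and simplifying yields $n(\tb+\tm) - 3\tb^2 - 6\tb\tm - 2\tm^2$, so the assumption is equivalent to $n(\tb+\tm) > 3\tb^2 + 6\tb\tm + 2\tm^2$. Assuming $\tb+\tm>0$ (the degenerate case $\tb=\tm=0$ is trivial), I would use \bassum to bound $n(\tb+\tm) > (3\tb+2\tm+2\sqrt{\tb\tm})(\tb+\tm) = 3\tb^2 + 5\tb\tm + 2\tm^2 + 2(\tb+\tm)\sqrt{\tb\tm}$, and then reduce the remaining inequality $2(\tb+\tm)\sqrt{\tb\tm} \geq \tb\tm$ to the AM--GM inequality $\tb+\tm \geq 2\sqrt{\tb\tm}$, which gives $2(\tb+\tm)\sqrt{\tb\tm} \geq 4\tb\tm \geq \tb\tm$, closing the case $\tb\tm > 0$; the cases where exactly one of $\tb,\tm$ vanishes follow directly from $n > 3\tb$ or $n > 2\tm$, respectively.
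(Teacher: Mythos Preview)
Your proposal is correct and follows essentially the same structure as the paper's proof: same substitution $\alpha=n-\tm$, same observation that \sfklassum~\ref{assum:disc} is unchanged from \obje, and the same expansions of \sfklassums~\ref{assum:r1} and~\ref{assum:r0} to $n(\tb+\tm)-3\tb^2-6\tb\tm-2\tm^2$ and $\tm(n-3\tb-2\tm)$, respectively. The only difference is the endgame for \sfklassum~\ref{assum:r1}: the paper uses the intermediate bound $n>\brBound$ (writing $\rho=\tb^2+6\tb\tm+\tm^2$) and shows $\sqrt{\rho}(\tb+\tm)\geq \tb^2+3\tb\tm+\tm^2$ by squaring and simplifying, whereas you plug $n>\brBoundN$ directly and reduce to $2(\tb+\tm)\sqrt{\tb\tm}\geq \tb\tm$ via AM--GM; your route is slightly shorter and more elementary, but the two are otherwise interchangeable.
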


\begin{proof}
Let us recall that
$\qf= \tb+1$ and $\qd = 2\tb+\tm+1$ for \objr.
Let us observe that we have then $\qd-\qf-\tb-\tm=0$.
\begin{itemize}
\item \textit{Proof of satisfaction of \sfklassum~{\em\ref{assum:base}}} ($c-\tm \geq \objr.\qd \geq \objr.\qf+\tb \geq 2\tb+1$):

From Observation~\ref{obs:boundBracha}, we have:
\begin{align*}
    & c-\tm \geq n-\tb-\tm \geq 3\tb+2\tm+1-\tb-\tm \geq 2\tb+\tm+1, \tag{as $n > 3\tb+2\tm$}\\
    & c-\tm \geq 2\tb+\tm+1 \geq 2\tb+1 \geq 2\tb+1, \\
    & c-\tm \geq \objr.\qd \geq \objr.\qf+\tb \geq 2\tb+1. \tag{\sfklassum~\ref{assum:base}}
\end{align*}

\item \textit{Proof of satisfaction of \sfklassum~{\em\ref{assum:disc}}} ($\alpha^2-4(\objr.\qf-1)(n-\tb) \geq 0$):

Let us recall that, for object \objr, we have $\qf=\tb+1$ and $\qd=2\tb+\tm + 1$.
As \sfklassum~\ref{assum:disc} depends on \qd but not on \qf, and since $\obje.\qf = \objr.\qf$, we refer the reader to the proof we gave in Lemma~\ref{lem:proof-objE} for \obje.

\item \textit{Proof of satisfaction of \sfklassum~{\em\ref{assum:r1}}} ($\alpha(\objr.\qd-1)-(\objr.\qf-1)(n-\tb)-(\objr.\qd-1)^2 > 0$):

Let us consider the quantity on the left-hand side of \sfklassum~\ref{assum:r1}: 
\begin{align}
    & ~~~~ \alpha (\qd - 1) - (\qf-1) (n-\tb) - (\qd - 1)^2, \nonumber \\
    & =(n + \qf -\tb -\tm - 1)(\qd - 1) - (\qf-1) (n-\tb) - (\qd - 1)^2, \nonumber \\
    & =(n -\tm )(2\tb+\tm) - \tb (n-\tb) - (2\tb+\tm)^2, \nonumber \\
    & =2n\tb +n\tm -2\tb\tm -\tm^2 - n\tb +\tb^2 - 4\tb^2-\tm^2-4\tb\tm, \nonumber \\
    & =n(\tb +\tm) -6\tb\tm -2\tm^2 -3\tb^2, \nonumber \\
    & =n(\tb +\tm) -(6\tb\tm +2\tm^2 +3\tb^2). \label{eq:assum:r1brR:beforebound}
\end{align}

Then, we observe that we can lower bound the quantity on the left side of (\ref{eq:assum:r1brR:beforebound}) by substituting \bassum, i.e. $n>\brBoundN \geq \brBound$.
For convenience, in the following we write $\rho=\tb^2+6\tb\tm+\tm^2$, thus $n>2\tb+\tm+\sqrt{\rho}$.
We get:
\begin{align*}
    & ~~~~ n(\tb+\tm)-(3\tb^2+6\tb\tm+2\tm^2), \nonumber\\
    &> (2\tb+\tm+\sqrt{\rho})(\tb+\tm)-(3\tb^2+6\tb\tm+2\tm^2), \\
    &= \sqrt{\rho}(\tb+\tm) - \tm^2 - \tb^2 - 3 \tb \tm.
\end{align*}

We now want to show that the above quantity is positive or null, i.e.: 
\begin{align}
    & ~~~~ \sqrt{\rho}(\tb+\tm) - \tm^{2} - \tb^{2} - 3 \tb \tm \geq 0.
    \label{eq:assum:r1brR:afterbound2}
\end{align}
We now rewrite (\ref{eq:assum:r1brR:afterbound2}) as follows:
\begin{align*}
   \sqrt{\rho}(\tb+\tm) &\geq \tm^{2} + \tb^{2} +  2\tb \tm + \tb\tm, \\
   \sqrt{\rho}(\tb+\tm) &\geq (\tm + \tb)^{2} + \tb\tm, \\
   (\tb^2+6\tb\tm+\tm^2)(\tb+\tm)^2 & \geq ((\tm + \tb)^{2} + \tb\tm)^2, \tag{as $(\tm + \tb)^{2} + \tb\tm \geq 0$} \\
   ((\tb+\tm)^2+4\tb\tm)(\tb+\tm)^2 & \geq ((\tm + \tb)^{2} + \tb\tm)^2, \\
   (\tb+\tm)^4+4\tb\tm(\tb+\tm)^2 & \geq (\tm + \tb)^4 + (\tb\tm)^2 + 2\tb\tm(\tb+\tm)^2, \\
   2\tb\tm(\tb+\tm)^2 &\geq (\tb\tm)^2, \\
   2\tb\tm(\tb^2+\tm^2+2\tb\tm) &\geq (\tb\tm)^2, \\
   2\tb\tm(\tb^2+\tm^2)+4(\tb\tm)^2 &\geq  (\tb\tm)^2, \\
   2\tb\tm(\tb^2+\tm^2)+3(\tb\tm)^2 &\geq 0.
\end{align*}

This proves (\ref{eq:assum:r1brR:afterbound2}) and all previous inequalities and ultimately \sfklassum~\ref{assum:r1}.

\item \textit{Proof of satisfaction of \sfklassum~{\em\ref{assum:r0}}} ($\alpha(\objr.\qd-1-\tb)-(\objr.\qf-1)(n-\tb)-(\objr.\qd-1-\tb)^2 \geq 0$):

Let us consider the quantity on the left-hand side of \sfklassum~\ref{assum:r0}:
\begin{align}
    & ~~~~ \alpha(\qd -1 -\tb) - (\qf -1)(n - \tb)  - (\qd -1 - \tb)^2, \label{eq:assum:r0brR:goal} \\
    & = (n + \qf -\tb -\tm - 1)(\qd -1 -\tb) - (\qf -1)(n - \tb)  - (\qd -1 - \tb)^2, \nonumber\\
    & = (n +  -\tm)(\tb+\tm) - \tb(n - \tb)  - (\tb+\tm)^2, \nonumber\\
    & = (\tb+\tm)(n +  -2\tm -\tb) - \tb(n - \tb), \nonumber\\
    & = n\tb+n\tm   -2\tb\tm -2\tm^2 -\tb^2 -\tb\tm - n\tb + \tb^2, \nonumber\\
    & = n\tm   -3\tb\tm -2\tm^2, \nonumber\\
    & = n\tm   -3\tb\tm -2\tm^2, \nonumber\\
    & = \tm (n - 3 \tb -2 \tm). \label{eq:assum:r0brR:beforebound} 
\end{align}

Like before, we observe that we can lower bound the quantity on the left side of (\ref{eq:assum:r0brR:beforebound}) by substituting \bassum, i.e., $n>\brBoundN \geq 3\tb+2\tm$, so we have:
\begin{align}
(\ref{eq:assum:r0brR:beforebound}): & ~~~~ \tm (n - 3 \tb -2 \tm) \nonumber\\
& > \tm (3\tb+2\tm-2\tm-3\tb) = 0. \label{eq:assum:r0brR:final}
\end{align}
which recursively proves that (\ref{eq:assum:r0brR:goal}) is positive or zero and thus \sfklassum~\ref{assum:r0}. \qedhere
\end{itemize}
\end{proof}

\subsubsection{Correctness proof}
This section proves the following theorem:
\begin{theorem}[\mbrCtheo] \label{theo:b-mbrb-correctness}
If \bassum is verified, then Alg.~{\em \ref{alg:b-mbrb}} implements {\em MBRB} with the guarantee $\lmbr = \left\lceil c\left(1-\frac{\tm}{c-2\tb-\tm}\right) \right\rceil$.
\end{theorem}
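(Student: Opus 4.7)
The plan is to decompose each of the five MBRB properties into invocations of the \kl-cast guarantees delivered by Theorem~\ref{theo:sf-kl-correctness} for the two objects \obje and \objr, whose instantiated parameters have already been computed in Appendix~\ref{sec:inst-param-klcast}. The crucial asymmetry to exploit is that $\obje.\nodpty = \ttrue$ unconditionally, whereas $\objr.\nodpty$ may be false: therefore every argument involving no-duplicity must be routed through the echo phase, not the ready phase.

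First I would handle safety. \mbrVprop chains two applications of \klVprop: if correct $p_i$ mbrb-delivers $(m,\sn,j)$ with correct $p_j$, then $\objr.\klVprop$ (with $\objr.\kv \geq 1$) yields some correct process that \kl-cast $(\readym(m),(\sn,j))$, which by the code means it $\obje$-\kl-delivered $(\echom(m),(\sn,j))$; a further application of $\obje.\klVprop$ (with $\obje.\kv \geq 1$) exhibits some correct process that $\obje$-\kl-cast that echo, which in turn requires it to have received $\initm(m,\sn)$ from $p_j$ over an authenticated channel, so $p_j$ must have invoked $\mbrbroadcast(m,\sn)$. \mbrNDNprop is immediate from $\objr.\klNDNprop$. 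For \mbrNDYprop, I would suppose two correct processes mbrb-deliver $(m,\sn,j)$ and $(m',\sn,j)$: each triggers $\objr.\klVprop$ to produce correct $\objr$-\kl-casters of $(\readym(m),(\sn,j))$ and $(\readym(m'),(\sn,j))$, which in turn forces correct processes to $\obje$-\kl-deliver $(\echom(m),(\sn,j))$ and $(\echom(m'),(\sn,j))$; now $\obje.\nodpty = \ttrue$ forces $m=m'$.

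For liveness, \mbrLDprop starts from the ur-broadcast of $\initm(m,\sn)$ by correct $p_i$: at least $c-\tm$ correct processes receive it and $\obje$-\kl-cast $(\echom(m),(\sn,i))$, while no correct process \kl-casts a different echo for $(\sn,i)$ because the sole legitimate source is $p_i$. Applying $\obje.\klLDprop$ requires $c-\tm \geq \obje.\kld$, after which $\obje.\klSGDprop$ (justified by $\obje.\single=\ttrue$) delivers the echo to at least $\obje.\lgd$ correct processes, each of which then \kl-casts $(\readym(m),(\sn,i))$ into \objr; applying $\objr.\klLDprop$ then requires $\obje.\lgd \geq \objr.\kld$ and yields at least one mbrb-delivery. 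For \mbrGDprop, starting from one mbrb-delivery of $(m,\sn,j)$, I trace back via $\objr.\klVprop$ to an $\obje$-\kl-delivery of $(\echom(m),(\sn,j))$ and use $\obje.\nodpty=\ttrue$ to rule out any correct $\obje$-\kl-delivery (and hence any correct $\objr$-\kl-cast) of a conflicting message; then $\objr.\klSGDprop$ (using $\objr.\single=\ttrue$) produces $\objr.\lgd = \lceil c(1-\tfrac{\tm}{c-2\tb-\tm})\rceil = \lmbr$ correct mbrb-deliveries.

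The main obstacle is the numerical verification of the two threshold inequalities required by \mbrLDprop, namely $c-\tm \geq \obje.\kld$ and $\obje.\lgd \geq \objr.\kld$, under the sole hypothesis \bassum that $n > \brBoundN$. These are discrete inequalities involving floors and ceilings in the expressions given in Appendix~\ref{sec:inst-param-klcast}, and the worst case is $c=n-\tb$; the bound $n > 3\tb+2\tm+2\sqrt{\tb\tm}$ (recorded as the slack supplied by Observation~\ref{obs:boundBracha}) is precisely what is needed to make the algebra go through, in the same spirit as the quadratic arguments used to discharge \sfklassum~\ref{assum:disc}--\ref{assum:r0} in the proof of Lemma~\ref{lem:proof-objE}. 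Everything else in the proof is a mechanical chaining of previously established \kl-cast properties.
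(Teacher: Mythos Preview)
Your proposal is correct and follows essentially the same decomposition as the paper: the paper proves the five MBRB properties as Lemmas~\ref{lemma:b-mbrb-validity}--\ref{lemma:b-mbrb-global-delivery} using exactly the chaining of \kl-cast properties you describe, and isolates the two numerical obligations you identify as Lemmas~\ref{lemma:echo-sufficient} ($c-\tm \geq \obje.\kld$) and~\ref{lemma:ready-sufficient} ($\obje.\lgd \geq \objr.\kld$), each discharged under \bassum by the kind of discrete quadratic analysis you anticipate. Your observation that no-duplicity must be routed through \obje (and likewise that \mbrGDprop uses $\obje.\nodpty$ to enable $\objr.\klSGDprop$) is precisely the key structural point the paper exploits.
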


\noindent The proof follows from the next lemmas.

\begin{lemma} \label{lemma:echo-sufficient}
  $c-\tm \geq \obje.\kld$.
\end{lemma}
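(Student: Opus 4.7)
My plan is to unfold the definition of $\obje.\kld$ and reduce the claim to a concrete polynomial inequality in $n$, $\tb$, $\tm$ that follows from \bassum.

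First, I would substitute $\obje.\kld = \lfloor \frac{c\tb}{c-\tm-\lfloor (n-\tb)/2 \rfloor} \rfloor + 1$, as computed in Section~\ref{sec:inst-param-klcast}. Since $c-\tm$ is an integer, the target $c-\tm \geq \lfloor y\rfloor + 1$ is equivalent to $y < c-\tm$, where $y = \frac{c\tb}{c-\tm-\lfloor (n-\tb)/2 \rfloor}$. So I need to show that the denominator $D := c - \tm - \lfloor (n-\tb)/2 \rfloor$ is strictly positive and that $(c-\tm)\,D > c\tb$.

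For positivity of $D$, I would use $c \geq n-\tb$ and $\lfloor (n-\tb)/2 \rfloor \leq (n-\tb)/2$, giving $D \geq (n-\tb)/2 - \tm$, which is positive because $n > 3\tb+2\tm$ by \bassum (via Observation~\ref{obs:boundBracha}). For the main inequality $(c-\tm)D > c\tb$, I would again replace $\lfloor (n-\tb)/2 \rfloor$ by its upper bound $(n-\tb)/2$, and argue that the reduced expression is minimized when $c$ takes its smallest admissible value $c = n-\tb$ (by a short monotonicity check in $c$, since $\frac{\partial}{\partial c}[(c-\tm)D - c\tb] = 2c-2\tm-\lfloor (n-\tb)/2 \rfloor - \tb > 0$ under our bounds). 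Plugging $c=n-\tb$ in and clearing denominators reduces the statement to
\begin{equation*}
  n^{2} - (4\tb+3\tm)n + 3\tb^{2} + 3\tb\tm + 2\tm^{2} > 0.
\end{equation*}

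The main obstacle is verifying that this quadratic in $n$ is indeed positive for every $n > \brBoundN$. I would solve the quadratic: its larger root is $\tfrac{1}{2}(4\tb+3\tm+\sqrt{4\tb^{2}+12\tb\tm+\tm^{2}})$, and I would show that $\brBoundN = 3\tb+2\tm+2\sqrt{\tb\tm}$ is at least this root by squaring the equivalent inequality $2\tb+\tm+4\sqrt{\tb\tm} \geq \sqrt{4\tb^{2}+12\tb\tm+\tm^{2}}$ and checking the resulting polynomial identity (the difference simplifies to the non-negative expression $8\tb\tm + 16\tb\sqrt{\tb\tm} + 8\tm\sqrt{\tb\tm}$). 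The boundary cases $\tb = 0$ or $\tm = 0$ must be handled separately (they reduce to $n > 2\tm$ and $n > 3\tb$, which are exactly where the quadratic vanishes), but there \bassum gives strict inequality and hence strict positivity. This chain of reductions then yields $c - \tm \geq \obje.\kld$.
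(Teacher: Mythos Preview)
Your approach is correct and follows essentially the same route as the paper: reduce to $(c-\tm)D > c\tb$, specialize to $c = n-\tb$, and verify that the resulting quadratic in $n$ is positive under \bassum by comparing $\brBoundN$ to its larger root. Two minor simplifications relative to the paper are worth noting: you bound the inner floor by $(n-\tb)/2$ once and for all (the paper instead tracks the parity bit $\epsilon$ throughout), and you justify the worst case $c = n-\tb$ by a direct monotonicity check in $c$ rather than the paper's appeal to execution inclusion.
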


\begin{proof}
We want to show that:
\begin{align}
    c - \tm &\geq \left\lfloor \frac{c\tb}{c-\tm-\lfloor \frac{n-\tb}{2} \rfloor} \right\rfloor + 1 = \obje.\kld. \label{eq:proofEk:start}
\end{align}
As the left-hand side is also an integer, we can rewrite (\ref{eq:proofEk:start}) as follows:
\begin{align}
    c - \tm &> \frac{c\tb}{c-\tm-\lfloor \frac{n-\tb}{2} \rfloor}, \\
    (c - \tm)(c-\tm-\lfloor \frac{n-\tb}{2} \rfloor) &> c\tb. \tag{as $(c-\tm-\lfloor \frac{n-\tb}{2} \rfloor)>0$}
\end{align}

We now observe that $\left(\left \lfloor \frac{n+\tb}{2} \right\rfloor\right) = \left(\frac{n+\tb-\epsilon}{2} \right)$ with $\epsilon=0$ if $n+\tb=2k$ is even, and $\epsilon=1$ if $n+\tb=2k+1$ is odd, which leads us to:
\begin{align*}
    (c - \tm)(c-\tm- \frac{n-\tb-\epsilon}{2}) &> c\tb, \\
    (c - \tm)(2c-2\tm- n+\tb+\epsilon ) &> 2c\tb, \\
    (c - \tm)(2c-2\tm- n+\tb+\epsilon ) -2c\tb &> 0.
\end{align*}

Like for the proofs of Lemma~\ref{lemma:ifone:then:enough:internal} and Lemma~\ref{lemma:if:enough:internal:then:enough:ell}, we leverage the fact that the executions that can happen when $c > n-\tb$ can also occur when $c=n - \tb$.
We thus rewrite our inequality for $c=n-\tb$:
\begin{align*}
    (n-\tb - \tm)(n-\tb-2\tm+\epsilon ) - 2(n-\tb)\tb &> 0, \\
    (n-\tb)(n-\tb-2\tm+\epsilon -2\tb ) - \tm(n-\tb-2\tm+\epsilon ) &> 0, \\
    (n - \tb)^2 + (n-\tb)(-2\tm+\epsilon -2\tb ) - \tm(n-\tb-2\tm+\epsilon ) &> 0, \\
    n^2+\tb^2-2n\tb - 2 n\tm +n \epsilon  - 2n\tb + 2 \tb \tm - \tb \epsilon +   2 \tb^{2} -n\tm+\tb\tm+2\tm^2-\epsilon\tm &> 0, \\
    n^2+3\tb^2-4n\tb - 3 n\tm +n \epsilon + 3 \tb \tm - \tb \epsilon +2\tm^2-\epsilon\tm &> 0, \\
    n^2-n (4\tb + 3 \tm - \epsilon) +3\tb^2 + 3 \tb \tm +2\tm^2 - \epsilon (\tb+\tm) &> 0.
\end{align*}

We now solve the second-degree inequality with respect to $n$. It is easy to see that the discriminant is non-negative for non-negative values of $\tb$ and $\tm$. So we obtain: 
\begin{align*}
    & n > 2 \tb + \frac{3 \tm}{2} - \frac{\epsilon}{2} + \frac{\sqrt{4 \tb^{2} + 12 \tb \tm - 4 \tb \epsilon + \tm^{2} - 2 \tm \epsilon + \epsilon^{2}}}{2}, \\
    & - 4 \tb - 3 \tm + \epsilon + 2 n - \sqrt{4 \tb^{2} + 12 \tb \tm - 4 \tb \epsilon + \tm^{2} - 2 \tm \epsilon + \epsilon^{2}} > 0,
\end{align*}
which is implied by the following as $n\geq 3\tb+2\tm+2\sqrt{\tb\tm} + 1$:
\begin{align*}
    & 4 \sqrt{\tb\tm} + 2 \tb + \tm + 2 + \epsilon - \sqrt{4 \tb^{2} + 12 \tb \tm - 4 \tb \epsilon + \tm^{2} - 2 \tm \epsilon + \epsilon^{2}}  > 0, \\
    & 4 \sqrt{\tb\tm} + 2 \tb + \tm + 2 + \epsilon > \sqrt{4 \tb^{2} + 12 \tb \tm - 4 \tb \epsilon + \tm^{2} - 2 \tm \epsilon + \epsilon^{2}}.
\end{align*}

Taking the squares as both the argument of the square root and the left-hand side are non-negative leads to:
\begin{align*}
    & \left(4 \sqrt{\tb\tm} + 2 \tb + \tm + \epsilon + 2\right)^{2} > 4 \tb^{2} + 12 \tb \tm - 4 \tb \epsilon + \tm^{2} - 2 \tm \epsilon + \epsilon^{2}, \\
    & 16 \tb^{\frac{3}{2}} \sqrt{\tm} + 8 \sqrt{\tb} \tm^{\frac{3}{2}} + 8 \sqrt{\tb} \sqrt{\tm} \epsilon + 16 \sqrt{\tb} \sqrt{\tm} + 4 \tb^{2} + 20 \tb \tm + 4 \tb \epsilon + 8 \tb + \tm^{2} + 2 \tm \epsilon \\
    & ~~~~ + 4 \tm + \epsilon^{2} + 4 \epsilon + 4 > 4 \tb^{2} + 12 \tb \tm - 4 \tb \epsilon + \tm^{2} - 2 \tm \epsilon + \epsilon^{2},
\end{align*}
which simplifies to:
\begin{align}
    & 16 \tb^{\frac{3}{2}} \sqrt{\tm} + 8 \sqrt{\tb} \tm^{\frac{3}{2}} + 8 \sqrt{\tb} \sqrt{\tm} \epsilon + 16 \sqrt{\tb} \sqrt{\tm} + 8 \tb \tm + 8 \tb \epsilon + 8 \tb + 4 \tm \epsilon \nonumber\\
    & ~~~~ + 4 \tm + 4 \epsilon + 4 > 0. \label{eq:proofek:final}
\end{align}

We can then easily  observe that the left-hand side of (\ref{eq:proofek:final}) is strictly positive, thereby proving all previous inequalities and thus the lemma. 
\end{proof}

\begin{lemma}\label{lemma:ready-sufficient}
    $\obje.\lgd \geq \objr.\kld$.
\end{lemma}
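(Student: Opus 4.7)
The plan is to prove the inequality by reducing it to a purely algebraic statement free of floor and ceiling, using the sufficient condition that $\lceil x \rceil \geq \lfloor y \rfloor + 1$ whenever $x \geq y + 1$ (since $\lceil x \rceil \geq x \geq y+1 \geq \lfloor y \rfloor + 1$). With the values of $\obje.\lgd$ and $\objr.\kld$ computed in Section~\ref{sec:inst-param-klcast}, it therefore suffices to show
$$c\left(1 - \frac{\tm}{c - \left\lfloor \frac{n+\tb}{2} \right\rfloor}\right) \;\geq\; \frac{c\tb}{c - 2\tm - \tb} + 1. \qquad (\star)$$

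First, I would reduce to the worst case $c = n - \tb$ via the Byzantine-simulates-correct argument used at the end of the proof of Lemma~\ref{lemma:ifone:then:enough:internal}: any execution with $c > n - \tb$ can be replayed at $c = n - \tb$ by letting the extra Byzantine processes mimic correct ones, so establishing $(\star)$ at $c = n - \tb$ is enough. Second, I would remove the floor via the bound $c - \lfloor (n+\tb)/2 \rfloor \geq (2c - n - \tb)/2$, which after substituting $c = n - \tb$ becomes $(n - 3\tb)/2$ and is strictly positive by \bassum and Observation~\ref{obs:boundBracha}. This strengthens the left-hand side of $(\star)$ to $(n-\tb)\bigl(1 - 2\tm/(n-3\tb)\bigr)$, turning the goal into a rational inequality in $n, \tb, \tm$ alone.

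Third, after clearing the positive denominators $(n - 3\tb)$ and $(n - 2\tb - 2\tm)$ (both positive by \bassum), the inequality becomes a polynomial inequality of degree at most three in $n, \tb, \tm$. I would verify it using $n > \brBoundN$, either by evaluating the polynomial at the smallest admissible integer value of $n$ and checking that its derivative with respect to $n$ stays positive throughout the feasible region, or by direct algebraic manipulation analogous to the proofs of Lemma~\ref{lemma:echo-sufficient} and Lemma~\ref{lem:proof-objE}.

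The main obstacle will be the algebraic bookkeeping, particularly because $\brBoundN$ contains the radical $\sqrt{\tb\tm}$. To handle this cleanly, I intend to isolate the radical on one side of the reduced inequality, square both sides to eliminate it, and then verify the resulting polynomial statement on the non-negative integers $\tb$ and $\tm$, separately treating the boundary cases $\tb = 0$ or $\tm = 0$ (where the radical term vanishes and the remaining inequality is straightforward).
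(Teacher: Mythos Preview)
Your plan is sound and follows essentially the same route as the paper: strip the ceiling/floor, handle the inner floor, clear the two positive denominators, and verify the resulting polynomial inequality using \bassum (including isolating and squaring away the $\sqrt{\tb\tm}$). Two small remarks on tactics. First, your sufficient condition $x\geq y+1$ is stronger than needed: since $\lceil x\rceil$ and $\lfloor y\rfloor$ are integers, $x>y$ already gives $\lceil x\rceil \geq x > y \geq \lfloor y\rfloor$, hence $\lceil x\rceil \geq \lfloor y\rfloor+1$; the paper uses this sharper reduction, which spares you the extra ``$+1$'' in the subsequent algebra. Second, the paper treats the inner floor exactly via $\lfloor(n+\tb)/2\rfloor=(n+\tb-\epsilon)/2$ and splits on $\epsilon\in\{0,1\}$, whereas your bound $c-\lfloor(n+\tb)/2\rfloor\geq(2c-n-\tb)/2$ amounts to always taking the worse case $\epsilon=0$; this avoids the case split at the cost of a slightly weaker left-hand side, but still goes through under \bassum. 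Finally, your reduction to $c=n-\tb$ is correct (the paper uses the same simulation-style justification in Lemma~\ref{lemma:echo-sufficient}); if you want a purely algebraic justification, note that $\obje.\lgd$ is increasing in $c$ while $\objr.\kld$ is decreasing, so the worst case is indeed $c=n-\tb$.
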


\begin{proof}
We need to prove: 
\begin{align}
    \obje.\lgd &= \left\lceil c\left(1-\frac{\tm}{c-\lfloor \frac{n+\tb}{2} \rfloor}\right) \right\rceil \geq \left\lfloor \frac{c\tb}{c-2\tm-\tb} \right\rfloor+1 = \objr.\kld. \label{eq:proofel:start}
\end{align}

We observe that $x \geq \lfloor m \rfloor + 1$ if and only if $x > m$, and that $m \geq \lfloor m \rfloor $. Therefore (\ref{eq:proofel:start}) is implied by the following: 
\begin{align*}
    c \left(1 - \frac{\tm}{c - \frac{n +\tb -\epsilon}{2}} \right) & \geq \frac{c\tb}{c - \tb - 2 \tm }, \\
    c - \frac{2 \tm c}{ 2 c - \tb -n + \epsilon}  & > \frac{c \tb }{c - \tb - 2 \tm }.
\end{align*}
As both denominators are positive, we can solve: 
\begin{align*}
    & - \tb \left(- \tb + 2 c + \epsilon - n\right) - 2 \tm \left(- \tb - 2 \tm + c\right) + \left(- \tb - 2 \tm + c\right) \left(- \tb + 2 c + \epsilon - n\right) > 0, \\
    & - \tb \left(- \tb + 2 c + \epsilon - n\right) + \left(- \tb - 2 \tm + c\right) \left(- \tb - 2 \tm + 2 c + \epsilon - n\right) > 0, \\
    & - \tb \left(- 2 \tb + c + \epsilon\right) + \left(- \tb - 2 \tm + c\right) \left(- 2 \tb - 2 \tm + c + \epsilon\right) > 0, \tag{as $c\geq n - \tb$} \\
    & - \tb \left(- 2 \tb + c + \epsilon\right) + \left(- \tb - 2 \tm + c\right) \left(- 2 \tb - 2 \tm + c + \epsilon\right) > 0, \\ 
    & - \tb \left(- \tb + 2 c - n\right) + \epsilon \left(- 3 \tb - 2 \tm + c\right) + \left(- \tb - 2 \tm + c\right)^{2} > 0, \\
    & \tb^{2} - 2 \tb c + \tb n + \epsilon \left(- 3 \tb - 2 \tm + c\right) + \left(- \tb - 2 \tm + c\right)^{2} > 0, \\
    & \tb^{2} - 2 \tb c + \tb \left(2 \sqrt{\tb} \sqrt{\tm} + 3 \tb + 2 \tm + 1\right) + \epsilon \left(- 3 \tb - 2 \tm + c\right) + \left(- \tb - 2 \tm + c\right)^{2} > 0, \tag{as $n \geq  3 \tb + 2 \tm +2 \sqrt{\tb\tm}$}\\
    & 2 \tb^{\frac{3}{2}} \sqrt{\tm} + 4 \tb^{2} + 2 \tb \tm - 2 \tb c + \tb + \epsilon \left(- 3 \tb - 2 \tm + c\right) + \left(- \tb - 2 \tm + c\right)^{2} > 0, \\
    & 2 \tb^{\frac{3}{2}} \sqrt{\tm} + 5 \tb^{2} + 6 \tb \tm - 4 \tb c + \tb + 4 \tm^{2} - 4 \tm c + c^{2} + \epsilon \left(- 3 \tb - 2 \tm + c\right) > 0.
\end{align*}

We now consider the two possible values of $\epsilon$:
\begin{itemize}
\item $\epsilon = 0$:
\begin{align}
    & 2 \tb^{\frac{3}{2}} \sqrt{\tm} + 5 \tb^{2} + 6 \tb \tm - 4 \tb c + \tb + 4 \tm^{2} - 4 \tm c + c^{2} > 0\label{eq:proofel:tosolve}
\end{align}

We solve the inequality with respect to $c$ to obtain (when the discriminant is positive): 
\begin{align}
c &> 2 \tb + 2 \tm + \sqrt{- 2 \tb^{\frac{3}{2}} \sqrt{\tm} - \tb^{2} + 2 \tb \tm - \tb}\nonumber
\end{align}
which we prove by observing that $c \geq n-\tb \geq 2\tb +2\tm+2\sqrt{\tb\tm}+1$ and that: 
\begin{align*}
    & 2\tb +2\tm+2\sqrt{\tb\tm}+1 > 2 \tb + 2 \tm + \sqrt{- 2 \tb^{\frac{3}{2}} \sqrt{\tm} - \tb^{2} + 2 \tb \tm - \tb},
\end{align*}
as all terms except $2\tb\tm$ inside the square root are negative. 
When the discriminant is negative (e.g. for $\tm=0$),  inequality (\ref{eq:proofel:tosolve}) is satisfied for all values of $c$. 

\item $\epsilon = 1$:

In this case, we obtain:
\begin{align}
& 2 \tb^{\frac{3}{2}} \sqrt{\tm} + 5 \tb^{2} + 6 \tb \tm - 4 \tb c - 2 \tb + 4 \tm^{2} - 4 \tm c - 2 \tm + c^{2} + c > 0, \nonumber
\end{align}
which is implied by  a negative discriminant or by:
\begin{align*}
    & c > 2 \tb + 2 \tm + \sqrt{- 2 \tb^{\frac{3}{2}} \sqrt{\tm} -  \tb^{2} + 2 \tb \tm + 1/4} - \frac{1}{2}.
\end{align*}

Like before, we simply observe that:
\begin{align*}
    2\sqrt{\tb\tm}&  \geq \sqrt{2 \tb \tm} +\frac{1}{2} - \frac{1}{2}, \\
    & \geq \sqrt{2 \tb \tm + 1/4} - \frac{1}{2}, \\
    & \geq \sqrt{- 2 \tb^{\frac{3}{2}} \sqrt{\tm} -  \tb^{2} + 2 \tb \tm + 1/4} - \frac{1}{2},
\end{align*}
thereby proving the second case and the lemma. \qedhere
\end{itemize}
\end{proof}

\begin{lemma}[\mbrVprop] \label{lemma:b-mbrb-validity}
If a correct process $p_i$ mbrb-delivers an \app $m$ from a correct process $p_j$ with sequence number~\sn, then $p_j$ mbrb-broadcast $m$ with sequence number~\sn.
\end{lemma}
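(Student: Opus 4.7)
The plan is to unwind the three-phase structure of Algorithm~\ref{alg:b-mbrb} backwards, using the \klVprop property of each underlying \kl-cast object to justify every step. In essence, we will trace the $\mbrdeliver$ event back through the \readym phase, then the \echom phase, and finally down to the $\initm$ \imp that must have originated from the correct sender $p_j$.

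First I would start from the hypothesis that the correct process $p_i$ executes $\mbrdeliver(m,\sn,j)$ at line~\ref{BMBRB-mbrb-dlv}. By inspection of the code, this can happen only because $(\readym(m),(\sn,j))$ is \kl-delivered by the object $\objr$ at $p_i$. Applying \klVprop to $\objr$ (whose parameters we computed in Section~\ref{sec:inst-param-klcast}, yielding $\objr.\kv = \tb+1-n+c \geq 1$ under \bassum), at least one correct process $p_x$ must have invoked $\objr.\klcast(\readym(m),(\sn,j))$. Looking at line~\ref{BMBRB-klc-ready}, a correct process only invokes such a \kl-cast after $(\echom(m),(\sn,j))$ has been \kl-delivered at $\obje$.

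Next I would apply \klVprop to $\obje$ in the same way. Since $\obje.\kv = \tb+1-n+c \geq 1$, at least one correct process $p_y$ must have invoked $\obje.\klcast(\echom(m),(\sn,j))$ at line~\ref{BMBRB-klc-echo}. The guard for this line requires that $p_y$ received the \imp $\initm(m,\sn)$ from $p_j$. Because the point-to-point channels are authenticated (see the communication model in Section~\ref{sec-model}), the fact that $p_y$ received $\initm(m,\sn)$ tagged as coming from $p_j$ means that $p_j$ actually sent it.

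Finally, since $p_j$ is correct by assumption, the only place in the algorithm where $p_j$ ur-broadcasts an \imp of the form $\initm(m,\sn)$ is at line~\ref{BMBRB-mbrb}, inside the invocation of $\mbrbroadcast(m,\sn)$. This closes the argument and establishes that $p_j$ did indeed mbrb-broadcast $m$ with sequence number $\sn$. The proof is essentially a straightforward two-level application of \klVprop, with no delicate counting involved: the only subtlety is to ensure that $\kv \geq 1$ for both \kl-cast instances, which already follows from \bassum and the parameter instantiation, and this is the only point I expect to double-check carefully.
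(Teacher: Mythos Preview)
Your proposal is correct and follows essentially the same approach as the paper: both arguments unwind the delivery backwards through \objr and \obje, applying \klVprop at each stage (using $\kv \geq 1$) to reach a correct process that received $\initm(m,\sn)$ from $p_j$, and then invoke channel authentication plus the correctness of $p_j$ to conclude. The structure and key ideas are identical.
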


\begin{proof}
If $p_i$ mbrb-delivers $(m,\sn,j)$ at line~\ref{BMBRB-mbrb}, then it \kl-delivered $(\readym(m),(\sn,j))$ using \objr. From \klVprop, and as $\objr.\kv = 1$, we can assert that at least one correct process $p_x$ \kl-cast $(\readym(m),(\sn,j))$ at line~\ref{BMBRB-klc-ready}, after having \kl-delivered $(\echom(m),(sn,j))$ using \obje. Again, from \klVprop, we can assert that at least $\obje.\kv = 1$ correct process $p_y$ \kl-cast $(\echom(m),(\sn,j))$ at line~\ref{BMBRB-klc-echo}, after having received an $\initm(m,\sn)$ \imp from $p_j$. And as $p_j$ is correct and the network channels are authenticated, then $p_j$ has ur-broadcast $\initm(m,\sn)$ at line~\ref{BMBRB-mbrb}, during a $\mbrbroadcast(m,\sn)$ invocation.
\end{proof}

\begin{lemma}[\mbrNDNprop] \label{lemma:b-mbrb-no-duplication}
A correct process $p_i$ mbrb-delivers at most one \app from a process $p_j$ with sequence number~\sn.
\end{lemma}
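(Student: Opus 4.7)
The plan is to reduce this property directly to the \klNDNprop guarantee of the underlying \kl-cast instance $\objr$, so the proof should be very short.

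First, I would observe that in Algorithm~\ref{alg:b-mbrb} the only line that can trigger an mbrb-delivery of an \app from $p_j$ with sequence number \sn is line~\ref{BMBRB-mbrb-dlv}, and this line fires exactly when $p_i$ $\kl$-delivers a pair of the form $(\readym(m),(\sn,j))$ via the object $\objr$. Thus every distinct mbrb-delivery of an \app from $p_j$ with sequence number \sn at $p_i$ corresponds to a distinct \kl-delivery by $p_i$ through $\objr$ with the same identity $\id = (\sn,j)$.

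Second, I would invoke Theorem~\ref{theo:sf-kl-correctness} applied to $\objr$, which guarantees \klNDNprop: a correct process \kl-delivers at most one \app with a given identity. Since the identity associated with the \readym \imps is $(\sn,j)$, the correct process $p_i$ can \kl-deliver at most one such pair $(\readym(m),(\sn,j))$ through $\objr$, and therefore can execute line~\ref{BMBRB-mbrb-dlv} at most once for the parameters $(\sn,j)$. This immediately yields the claim.

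I do not anticipate any real obstacle here: the validity of $\objr$ as a \kl-cast instance is already established via Lemma~\ref{lem:proof-objE} and its counterpart for $\objr$ in Appendix~\ref{sec:proofsMBRB} (using \bassum), so \klNDNprop is available ``for free'' and the whole argument is a one-liner reduction to the no-duplication property of the underlying module.
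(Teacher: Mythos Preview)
Your proposal is correct and follows essentially the same approach as the paper: both reduce directly to the \klNDNprop guarantee of the underlying \kl-cast object $\objr$, noting that every mbrb-delivery at line~\ref{BMBRB-mbrb-dlv} corresponds to a \kl-delivery with identity $(\sn,j)$. Your version is slightly more explicit about where the property comes from (Theorem~\ref{theo:sf-kl-correctness} and the verification of the \sfklassums for $\objr$), but the core argument is identical.
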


\begin{proof}
By \klNDNprop, we know that a correct process $p_i$ can \kl-deliver at most one $\readym(-)$ with identity $(\sn,j)$. Therefore, $p_i$ can mbrb-deliver only one \app from $p_j$ with sequence number~\sn. 
\end{proof}

\begin{lemma}[\mbrNDYprop] \label{lemma:b-mbrb-no-duplicity}
No two different correct processes mbrb-deliver different \apps from a process $p_i$ with the same sequence number~\sn.
\end{lemma}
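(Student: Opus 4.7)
The plan is to proceed by contradiction, tracing the mbrb-delivery of the two alleged conflicting apps back through the two \kl-cast objects \objr (the \readym round) and \obje (the \echom round), and then to derive a contradiction from the no-duplicity property of \obje. The key observation driving the proof is that, as computed in Appendix~\ref{sec:inst-param-klcast}, $\obje.\nodpty = \ttrue$ while $\objr.\nodpty$ may or may not be true; hence duplicity must be ruled out via the first round (\obje), not the second.

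More concretely, suppose for contradiction that two correct processes $p_a$ and $p_b$ respectively mbrb-deliver distinct apps $m$ and $m'$, $m\neq m'$, both tagged with $(\sn, j)$. By the code at line~\ref{BMBRB-mbrb-dlv}, $p_a$ has \objr-\kl-delivered $(\readym(m),(\sn,j))$ and $p_b$ has \objr-\kl-delivered $(\readym(m'),(\sn,j))$. Applying $\objr$'s \klVprop (Theorem~\ref{theo:sf-kl-correctness}) with $\objr.\kv = 1$, there exists a correct process $p_x$ that \objr-\klcast $(\readym(m),(\sn,j))$ and a correct process $p_y$ that \objr-\klcast $(\readym(m'),(\sn,j))$. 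By the code at line~\ref{BMBRB-klc-ready}, $p_x$ could only invoke this \klcast after \obje-\kl-delivering $(\echom(m),(\sn,j))$, and similarly $p_y$ must have \obje-\kl-delivered $(\echom(m'),(\sn,j))$.

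It then suffices to split on whether $p_x = p_y$ or not. If $p_x \neq p_y$, then two different correct processes have \obje-\kl-delivered different apps (\echom($m$) and \echom($m'$)) with the same identity $(\sn, j)$, contradicting $\obje$'s \klNDYprop, which applies because $\obje.\nodpty = \ttrue$ (see Appendix~\ref{sec:inst-param-klcast}). If instead $p_x = p_y$, then the same correct process has \obje-\kl-delivered two distinct apps with identity $(\sn, j)$, contradicting $\obje$'s \klNDNprop. Either way we reach a contradiction, so $m=m'$, which proves the lemma. The only nontrivial ingredient is the fact that $\obje.\nodpty$ evaluates to \ttrue under \bassum; this is already established in Appendix~\ref{sec:inst-param-klcast}, so no further calculation is needed here.
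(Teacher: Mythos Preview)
Your proof is correct and follows essentially the same route as the paper's: contradiction, trace the two conflicting mbrb-deliveries through \objr's \klVprop back to two correct \obje-\kl-deliveries, then invoke $\obje.\nodpty=\ttrue$. The only difference is that you explicitly case-split on $p_x=p_y$ versus $p_x\neq p_y$ (using \klNDNprop in the former case), whereas the paper simply names ``two correct processes $p_y$ and $p_z$'' and applies \klNDYprop directly; your version is slightly more careful, since \klNDYprop is stated for two \emph{different} correct processes.
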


\begin{proof}
We proceed by contradiction. Let us consider two correct processes $p_w$ and $p_x$ that respectively mbrb-deliver~$(m,\sn,i)$ and $(m',\sn,i)$ at line~\ref{BMBRB-mbrb-dlv}, such that $m \neq m'$. It follows that $p_w$ and $p_x$ respectively \kl-delivered $(\readym(m),(\sn,i))$ and $(\readym(m'),(\sn,i))$ using \objr.

From \klVprop, and as $\objr.\kv \geq 1$, we can assert that two correct processes $p_y$ and $p_z$ respectively \kl-cast $(\readym(m),(\sn,i))$ and $(\readym(m'),(\sn,i))$ at line~\ref{BMBRB-klc-ready}, after having respectively \kl-delivered $(\echom(m),(\sn,i))$ and $(\echom(m'),(\sn,i))$ using \obje. But as $\obje.\nodpty = \ttrue$, then, by \klNDYprop, we know that $m = m'$. There is a contradiction.
\end{proof}

\begin{lemma}[\mbrLDprop] \label{lemma:b-mbrb-local-delivery}
If a correct process $p_i$ mbrb-broadcasts an \app $m$ with sequence number~\sn, then at least one correct process $p_j$ eventually mbrb-delivers $m$ from $p_i$ with sequence number~\sn.
\end{lemma}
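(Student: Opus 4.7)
The plan is to chain the liveness guarantees of the two underlying \kl-cast instances (\obje then \objr) by following the pipeline of Algorithm~\ref{alg:b-mbrb}, using Lemmas~\ref{lemma:echo-sufficient} and~\ref{lemma:ready-sufficient} as the bridge between successive phases. First I would observe that, since $p_i$ is correct and invokes $\mbrbroadcast(m,\sn)$, it executes $\urbroadcast(\initm(m,\sn))$ at line~\ref{BMBRB-mbrb}. By definition of the message adversary, at most \tm correct processes may fail to receive this \imp, so at least $c-\tm$ correct processes receive $\initm(m,\sn)$ from $p_i$ and, at line~\ref{BMBRB-klc-echo}, each of them invokes $\obje.\klcast(\echom(m),(\sn,i))$.

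Next I would feed this into \obje. Because $p_i$ is correct and uses \sn only once, no correct process ever receives an $\initm(m',\sn)$ with $m'\neq m$ from $p_i$, so no correct process \kl-casts an \obje-\app with identity $(\sn,i)$ and content distinct from $\echom(m)$. By Lemma~\ref{lemma:echo-sufficient}, $c-\tm \geq \obje.\kld$, so the prerequisites of \klLDprop for \obje are met, and at least one correct process \kl-delivers $(\echom(m),(\sn,i))$. Since $\obje.\single = \ttrue$ and no competing \kl-cast for identity $(\sn,i)$ exists, \klSGDprop then lifts this to at least $\obje.\lgd$ correct processes \kl-delivering $(\echom(m),(\sn,i))$, each of which will \kl-cast $(\readym(m),(\sn,i))$ at line~\ref{BMBRB-klc-ready}.

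The same pattern then gets applied to \objr. To invoke \klLDprop for \objr I have to check that no correct process \kl-casts a competing $\readym(m')$ with $m'\neq m$ and identity $(\sn,i)$; this is exactly where $\obje.\nodpty = \ttrue$ is used, via \klNDYprop on \obje, to guarantee that all correct processes that \kl-deliver an \obje-\app for $(\sn,i)$ \kl-deliver the \emph{same} content $m$. Combined with Lemma~\ref{lemma:ready-sufficient} ($\obje.\lgd \geq \objr.\kld$), this shows that at least $\objr.\kld$ correct processes \kl-cast $(\readym(m),(\sn,i))$ and that no correct process \kl-casts a conflicting one, so \klLDprop on \objr yields at least one correct process $p_j$ that \kl-delivers $(\readym(m),(\sn,i))$, and therefore mbrb-delivers $(m,\sn,i)$ at line~\ref{BMBRB-mbrb-dlv}.

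I expect the only non-trivial step to be the book-keeping around conflicting \klcast invocations: checking, at each of the two rounds, that the hypotheses of the monotonic ``Local-delivery'' form (rather than the weaker variant) are satisfied. The hard part is really packaged inside Lemmas~\ref{lemma:echo-sufficient} and~\ref{lemma:ready-sufficient}, which convert the quantitative bounds ($c-\tm \geq \obje.\kld$ and $\obje.\lgd \geq \objr.\kld$) guaranteed by \bassum into the thresholds needed to trigger the \kl-cast liveness at each stage; once those inequalities are invoked, the proof is a straightforward three-step chase through the algorithm.
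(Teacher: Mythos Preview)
Your proposal is correct and follows essentially the same pipeline as the paper: establish $c-\tm$ correct \echom \kl-casts, apply \klLDprop and \klSGDprop on \obje via Lemma~\ref{lemma:echo-sufficient}, then apply \klLDprop on \objr via Lemma~\ref{lemma:ready-sufficient}. The one minor divergence is in how you rule out conflicting \readym \kl-casts: you invoke \klNDYprop on \obje (using $\obje.\nodpty=\ttrue$), whereas the paper uses \klVprop on \obje (with $\obje.\kv\geq 1$) to argue that a conflicting \kl-delivery of $\echom(m')$ would force some correct process to have \kl-cast it, which was already excluded; both routes are valid and of equal length.
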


\begin{proof}
If $p_i$ mbrb-broadcasts $(m,\sn)$ at line~\ref{BMBRB-mbrb}, then it invokes ur-broadcasts $\initm(m,\sn)$.
By the definition of the MA, the \imp $\initm(m,\sn)$ is then received by at least $c-\tm$ correct processes at line~\ref{BMBRB-klc-echo}, which then \kl-cast $(\echom(m),\sn,i)$.
As $p_i$ is correct and ur-broadcasts only one \imp $\initm(-,\sn)$, then no correct process \kl-casts any different $(\echom(-),\sn,i)$.
Moreover, thanks to Lemma~\ref{lemma:echo-sufficient}, we know that:
\begin{align*}
    c-\tm \geq \obje.\kld = \left\lfloor \frac{c\tb}{c-\tm-\left\lfloor \frac{n-\tb}{2} \right\rfloor} \right\rfloor + 1.
\end{align*}

\sloppy Hence, from \klLDprop and \klSGDprop, at least $\obje.\lgd = \left\lceil c\left(1-\frac{\tm}{c-\lfloor \frac{n+\tb}{2} \rfloor}\right) \right\rceil$ correct processes eventually \kl-deliver $(\echom(m),(\sn,i))$ using \obje and then \kl-cast $(\readym(m),(\sn,i))$ using \objr at line~\ref{BMBRB-klc-ready}.
By \klVprop, and as $\objr.\kv \geq 1$, then no correct process can \kl-cast a different $(\readym(-),(\sn,i))$, because otherwise it would mean that at least one correct process would have \kl-cast a different $(\echom(-),(\sn,i))$, which is impossible (see before).
Moreover, thanks to Lemma~\ref{lemma:ready-sufficient}, we know that:
\begin{align*}
    \left\lceil c\left(1-\frac{\tm}{c-\lfloor \frac{n+\tb}{2} \rfloor}\right) \right\rceil = \obje.\lgd \geq \objr.\kld = \left\lfloor \frac{c\tb}{c-2\tm-\tb} \right\rfloor + 1.
\end{align*}

\sloppy Therefore, \klLDprop applies and we know that at least one correct processes eventually \kl-delivers~$(\readym(m),(\sn,i))$ using \objr and then mbrb-delivers~$(m,\sn,i)$ at line~\ref{BMBRB-mbrb-dlv}.
\end{proof}

\begin{lemma}[\mbrGDprop] \label{lemma:b-mbrb-global-delivery}
\sloppy
If a correct process $p_i$ mbrb-delivers an \app $m$ from a process $p_j$ with sequence number~\sn, then at least \linebreak
$\lmbr = \left\lceil c\left(1-\frac{\tm}{c-2\tb-\tm}\right) \right\rceil$ correct processes mbrb-deliver $m$ from $p_j$ with sequence number~\sn.
\end{lemma}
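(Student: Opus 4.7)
My plan is to unwind the algorithm from the final mbrb-delivery backwards through the two \kl-cast layers, using the fact that \obje guarantees no-duplicity while \objr is responsible for spreading the decision. Since $p_i$ mbrb-delivers $(m,\sn,j)$ at line~\ref{BMBRB-mbrb-dlv}, it must have \kl-delivered $(\readym(m),(\sn,j))$ via \objr. I will show that, under the hypotheses, no correct process ever \kl-casts $(\readym(m'),(\sn,j))$ with $m'\neq m$ via \objr; this will unlock the \klSGDprop property of \objr (applicable because $\objr.\single = \ttrue$) and hence force $\objr.\lgd$ correct processes to \kl-deliver $(\readym(m),(\sn,j))$, each of which then mbrb-delivers $(m,\sn,j)$ at line~\ref{BMBRB-mbrb-dlv}.

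The key step is the uniqueness argument on \readym. Suppose for contradiction that some correct process \kl-casts $(\readym(m'),(\sn,j))$ with $m'\neq m$. By the code at line~\ref{BMBRB-klc-ready}, this requires it to have previously \kl-delivered $(\echom(m'),(\sn,j))$ via \obje. On the other hand, because $p_i$ \kl-delivered $(\readym(m),(\sn,j))$ via \objr and $\objr.\kv \geq 1$ (by \klVprop), at least one correct process has \kl-cast $(\readym(m),(\sn,j))$, hence \kl-delivered $(\echom(m),(\sn,j))$ via \obje. We then have two correct processes \kl-delivering \echom-messages with identical id $(\sn,j)$ but distinct content, which directly contradicts \klNDYprop applied to \obje (since $\obje.\nodpty = \ttrue$, as computed in Appendix~\ref{sec:inst-param-klcast}).

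With the uniqueness established, the strong global-delivery property of \objr applies: since one correct process \kl-delivers $(\readym(m),(\sn,j))$ via \objr and no correct process \kl-casts a competing $(\readym(m'),(\sn,j))$, at least $\objr.\lgd = \left\lceil c\left(1-\frac{\tm}{c-2\tb-\tm}\right)\right\rceil$ correct processes \kl-deliver $(\readym(m),(\sn,j))$ via \objr. Each such process triggers $\mbrdeliver(m,\sn,j)$ at line~\ref{BMBRB-mbrb-dlv}, yielding the claimed bound $\lmbr$. I expect the main subtlety to be making sure the validity-then-no-duplicity chain is threaded cleanly in the right order (first invoking \objr's validity on $p_i$'s delivery to produce a correct \readym-sender, then invoking \obje's no-duplicity to rule out divergent \readym-senders); once the uniqueness on \readym is in place, the quantitative conclusion is an immediate invocation of \klSGDprop for \objr, with no additional arithmetic needed beyond the parameter substitution already carried out in Appendix~\ref{sec:inst-param-klcast}.
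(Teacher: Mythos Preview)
Your proposal is correct and follows essentially the same route as the paper's proof: use \klVprop on \objr to obtain a correct process that \kl-delivered $(\echom(m),(\sn,j))$ via \obje, invoke \klNDYprop of \obje (since $\obje.\nodpty=\ttrue$) to rule out any correct process \kl-delivering a different \echom, deduce that no correct process \kl-casts a competing \readym, and then apply \klSGDprop of \objr to obtain the $\lmbr$ bound. The only difference is presentational: you phrase the uniqueness step as an explicit contradiction, whereas the paper states it directly.
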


\begin{proof}
If $p_i$ mbrb-delivers $(m,\sn,j)$ at line~\ref{BMBRB-mbrb-dlv}, then it has \kl-delivered $(\readym(m),(\sn,j))$ using \objr. From \klVprop, we know that at least $\objr.\kv \geq 1$ correct process \kl-cast $(\readym(m),(\sn,j))$ using \objr at line~\ref{BMBRB-klc-ready} and thus \kl-delivered $(\echom(m),(\sn,j))$ using \obje.
From \klNDYprop, and as $\obje.\nodpty = \ttrue$, we can state that no correct process \kl-delivers any $(\echom(m'),(\sn,j))$ where $m' \neq m$ using \obje, so no correct process \kl-casts any $(\readym(m'),(\sn,j))$ where $m' \neq m$ using \objr at line~\ref{BMBRB-klc-ready}.
It means that \klSGDprop applies, and we can assert that at least $\objr.\lgd = \left\lceil c\left(1-\frac{\tm}{c-2\tb-\tm}\right) \right\rceil = \lmbr$ correct processes eventually \kl-deliver~$(\readym(m),(\sn,j))$ 
using \objr and thus mbrb-deliver $(m,\sn,j)$ at line~\ref{BMBRB-mbrb-dlv}.
\end{proof}

\subsection{Proof of MBRB with Imbs and Raynal's reconstructed algorithm (Algorithm~\ref{alg:ir-mbrb})} \label{sec-proof-ir-mbrb}

\subsubsection{Instantiating the parameters of the \texorpdfstring{\kl}{k2l}-cast object}
\label{sec:inst-param-klcast-ibr}
In Alg.~\ref{alg:ir-mbrb} (page~\pageref{alg:ir-mbrb}), we instantiate the \kl-cast object \objw using the signature-free implementation presented in Section \ref{sec-klcast-sf} with parameters $\qd=\irqdval$, $\qf=\irqfval$, and $\single=\ffalse$.
Based on Theorem~\ref{theo:sf-kl-correctness} (page~\pageref{theo:sf-kl-correctness}), these parameters lead to the following values for  \kv, \kld, \lgd and \nodpty.
\begin{itemize}
    \item $\begin{aligned}[t]
    \objw.\kv &= \objw.\qf-n+c
    = \irqfval -n+c \\
    &\geq \irqfval -n+ n-\tb = \left\lfloor \frac{n-\tb}{2} \right\rfloor+1,
    \end{aligned}$
    
    \item $\begin{aligned}[t]
    \objw.\kld
    &= \kval[\objw.] \\
    &= \left\lfloor \frac{c(\irqfval -1)}{c-\tm-(\irqdval) + \irqfval} \right\rfloor + 1 \\
    &= \kir,
    \end{aligned}$
    
    \item $\begin{aligned}[t]
    \objw.\lgd
    &= \lval[\objw.]
    = \left\lceil c\left(1-\frac{\tm}{c- (\irqdval) +1}\right) \right\rceil \\
    &= \lir,
    \end{aligned}$
    
    \item $\begin{aligned}[t]
    \objw.\nodpty
    &= \left(\left(\objw.\qf > \frac{n+\tb}{2} \right) \lor \left(\objw.\single \land \objw.\qd > \frac{n+\tb}{2} \right)\right) \\
    &= \left(\left( \irqfval > \frac{n+\tb}{2} \right) \lor \left(\ffalse \land \irqdval > \frac{n+\tb}{2} \right)\right) \\
    %
    &= (\ttrue \lor (\ffalse \land \ttrue)) = \ttrue.
  \end{aligned}$
\end{itemize}
%

Finally, we observe that for Alg.~\ref{alg:ir-mbrb}, \sfklassum~\ref{assum:base} through~\ref{assum:r0} are all satisfied by \irassum ($n > \irBound$), as we prove in Appendix~\ref{sec:proof-satisf-object-W}. 

\subsubsection{Proof of satisfaction of the assumptions of Algorithm~\ref{alg:sf-klcast}}
\label{sec:proof-satisf-object-W}
This section proves that all the assumptions of the signature-free \kl-cast implementation presented in Alg.~\ref{alg:sf-klcast} (page~\pageref{alg:sf-klcast}) are well respected for the \kl-cast instance used in Alg.~\ref{alg:ir-mbrb} (\objw).

\begin{lemma}
Alg.~{\em \ref{alg:sf-klcast}}'s \sfklassums are well respected for \objw.
\end{lemma}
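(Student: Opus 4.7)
The plan is to verify each of the four \sfklassums in turn for $\objw$ by substituting $\qf = \irqfval$ and $\qd = \irqdval$, and reducing each inequality to a condition on $n, \tb, \tm$ implied by \irassum ($n > \irBoundN$). The structure mirrors the proof given above for \obje and \objr in the Bracha case: each floor $\lfloor (n+\tb)/2 \rfloor$ or $\lfloor (n+3\tb)/2 \rfloor$ is rewritten as $(n+\tb-\epsilon)/2$ or $(n+3\tb-\epsilon)/2$ with $\epsilon \in \{0,1\}$ depending on parity, so that the inequalities become polynomial relations among $n, \tb, \tm, \epsilon$. As before, since the set of executions with $c > n-\tb$ is contained in the set of executions with $c = n-\tb$ (a Byzantine process can simulate a correct one), it is enough to verify each inequality in the worst case $c = n-\tb$.

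For \sfklassum~\ref{assum:base}, I would first handle the easy chain $\qd \geq \qf+\tb \geq 2\tb+1$. Substituting, $\qd - \qf - \tb = \lfloor(n+3\tb)/2\rfloor - \lfloor(n+\tb)/2\rfloor + 3\tm - \tb$, which is non-negative as soon as $n > 5\tb$, and $\qf - \tb - 1 = \lfloor(n+\tb)/2\rfloor - \tb$ is non-negative as soon as $n \geq \tb$; both hold under \irassum. The remaining inequality $c - \tm \geq \qd$ becomes, at $c = n-\tb$, roughly $n - \tb - \tm \geq \lfloor(n+3\tb)/2\rfloor + 3\tm + 1$, i.e.\ $n \geq 5\tb + 8\tm + 2 + \epsilon$, which is comfortably implied by $n > 5\tb + 12\tm$.

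For \sfklassum~\ref{assum:disc}, I would compute $\alpha = n + \qf - \tb - \tm - 1 = n + \lfloor(n+\tb)/2\rfloor - \tb - \tm$ and expand $\alpha^2 - 4(\qf-1)(n-\tb)$; this is a quadratic in $n$ whose dominant term is $(3n/2)^2 = 9n^2/4$, while the subtracted term is only $O(n \tb)$, so the inequality holds by a wide margin under \irassum.  The main technical work sits in \sfklassums~\ref{assum:r1} and~\ref{assum:r0}. Using $\qd - 1 = \lfloor(n+3\tb)/2\rfloor + 3\tm$ and $\qd - 1 - \tb = \lfloor(n+\tb)/2\rfloor + 3\tm$, each expression $\alpha(\qd-1) - (\qf-1)(n-\tb) - (\qd-1)^2$ and $\alpha(\qd-1-\tb) - (\qf-1)(n-\tb) - (\qd-1-\tb)^2$ expands into a quadratic polynomial in $n$ with coefficients depending on $\tb, \tm, \epsilon$. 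Solving for $n$ yields a threshold of the form $n \geq \lambda\tb + \xi\tm + \sqrt{\cdots}$, which must be shown to lie below $\irBoundN$.

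The hard part will be managing the rational term $\frac{2\tb\tm}{\tb+2\tm}$ in \irassum: this term is precisely what couples the two adversary dimensions, and verifying that the quadratic thresholds produced by \sfklassums~\ref{assum:r1}--\ref{assum:r0} are majorised by $5\tb + 12\tm + \frac{2\tb\tm}{\tb+2\tm}$ requires clearing denominators and separately tracking the cross-term $\tb\tm$ that arises after squaring. I expect this step to be handled as in the Bracha proof, by isolating a square root, squaring both sides (after checking signs), and then reducing the resulting inequality to an evidently non-negative sum of monomials in $\tb, \tm$; the parity parameter $\epsilon$ contributes only $O(1)$ corrections that are absorbed by the strict inequality in \irassum.
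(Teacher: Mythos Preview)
Your overall strategy---substitute the \objw parameters, write the floors as $(n+\tb-\epsilon)/2$ and $(n+3\tb-\epsilon)/2$, and reduce each assumption to a condition on $n,\tb,\tm$---is exactly what the paper does. Two concrete mischaracterisations in your plan will mislead you if you carry them through as written.

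First, in \sfklassum~\ref{assum:disc} the term $4(\qf-1)(n-\tb)$ is not ``only $O(n\tb)$'': since $\qf-1 = \lfloor(n+\tb)/2\rfloor \approx n/2$, the subtracted term is $\approx 2n^2$, and the discriminant reduces to roughly $(n-3\tb)^2 - 12\tm n + \dots$. The paper solves this explicitly to obtain $n \geq 3\tb + 6\tm + 4\sqrt{\tm(2\tb+2\tm)}$ and then checks that \irassum dominates this bound; this is not trivially satisfied ``by a wide margin'' when $\tm>0$.

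Second, for \sfklassums~\ref{assum:r1} and~\ref{assum:r0} the expressions are \emph{linear} in $n$, not quadratic: after expansion the $n^2$ contributions from $\alpha(\qd-1)$, $(\qf-1)(n-\tb)$, and $(\qd-1)^2$ cancel exactly. So there is no square root to isolate and square. For \sfklassum~\ref{assum:r1} the paper obtains $\tfrac{1}{2}\big(n(\tb+2\tm) - 5\tb^2 - 24\tb\tm - 24\tm^2\big) + \tfrac{\epsilon}{2}(\tb+4\tm)$, and substituting $n = 5\tb+12\tm+\tfrac{2\tb\tm}{\tb+2\tm}$ collapses this to $\tfrac{\epsilon}{2}(\tb+8\tm) \geq 0$. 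In other words, the rational term $\tfrac{2\tb\tm}{\tb+2\tm}$ in \irassum is not a nuisance to be cleared away: it is precisely the threshold at which \sfklassum~\ref{assum:r1} becomes tight (modulo the $\epsilon$ slack). \sfklassum~\ref{assum:r0} similarly reduces to $\tm(n-5\tb-12\tm)/2 + 2\tm\epsilon \geq 0$, for which $n>5\tb+12\tm$ already suffices.
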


\begin{proof}
Let us recall that $\qf =\irqfval$
and $\qd =\irqdval$ for object \objw.

\begin{itemize}
\item \textit{Proof of satisfaction of \sfklassum~{\em\ref{assum:base}}} ($c-\tm \geq \objw.\qd \geq \objw.\qf+\tb \geq 2\tb+1$):

From \irassum ($n>\irBound$), we get that $n > 5\tb+8\tm$, which yields:
\begin{align}
    c-\tm &\geq n-\tb-\tm = \frac{2n-2\tb-2\tm}{2}, \tag{by definition of $c$}\\
    &> \frac{n+5\tb+8\tm-2\tb-2\tm}{2} = \frac{n+3\tb}{2}, \tag{as $n > 5\tb+8\tm$}\\
    &\geq \left\lfloor \frac{n+3\tb+6\tm}{2} \right\rfloor+1 = \left\lfloor \frac{n+3\tb}{2} \right\rfloor+3\tm+1. \label{eq:satisf-objw-1-1}
\intertext{We also have:}
    \left\lfloor \frac{n+3\tb}{2} \right\rfloor+1 &> \left\lfloor \frac{5\tb+8\tm+3\tb}{2} \right\rfloor+1 = 4\tb+4\tm+1, \tag{as $n > 5\tb+8\tm$}\\
    &\geq 2\tb+1. \label{eq:satisf-objw-1-2}\\
\intertext{By combining (\ref{eq:satisf-objw-1-1}) and (\ref{eq:satisf-objw-1-2}), we obtain:}
    c-\tm &\geq \left\lfloor \frac{n+3\tb}{2} \right\rfloor+3\tm+1 \geq \left\lfloor \frac{n+3\tb}{2} \right\rfloor+1 \geq 2\tb+1, \nonumber\\
    c-\tm &\geq \objw.\qd \geq \objw.\qf+\tb \geq 2\tb+1. \nonumber
\end{align}

\item \textit{Proof of satisfaction of \sfklassum~{\em\ref{assum:disc}}} ($\alpha^2-4(\objw.\qf-1)(n-\tb) \geq 0$):

Let us recall that for object \objw we have $\qf=\irqfval$ and $\qd=\irqdval$.
We therefore have $\alpha= \left\lfloor \frac{3n-\tb}{2}\right\rfloor -\tm $.
Let us now consider the following quantity:
\begin{align}
    \Delta &= \alpha^{2} - 4 (\qf - 1)(n - \tb), \nonumber\\
    & = \left( \left\lfloor \frac{3n-\tb}{2} \right\rfloor -\tm \right)^2 - 4 \left\lfloor \frac{n+\tb}{2} \right\rfloor (n-\tb). \label{eq:irD-beforecases}
\end{align}

We now observe that $\left(\left \lfloor \frac{m}{2} \right\rfloor\right) = \left(\frac{m-\epsilon}{2} \right)$ with $\epsilon=0$ if $m=2k$ is even, and $\epsilon=1$ if $m=2k+1$ is odd.
We thus rewrite (\ref{eq:irD-beforecases}) as follows:
\begin{align*}
    & ~~~~ \left( \frac{3n-\tb-\epsilon}{2} -\tm \right)^2 - 4  \frac{n+\tb-\epsilon}{2}  (n-\tb), \\
    &= \left(  \frac{3n-\tb-\epsilon-2\tm}{2}\right)^2 - 4  \frac{n+\tb-\epsilon}{2}  (n-\tb), \\
    &= \frac{\tb^{2} + 4 \tb \tm + 2 \tb \epsilon - 6 \tb n + 4 \tm^{2} + 4 \tm \epsilon - 12 \tm n + \epsilon^{2} - 6 \epsilon n + 9 n^2}{4} \\
    & ~~~~ +  \frac{8\tb^2 -8\tb \epsilon +8 \epsilon n -8 n^2}{4}, \\
    &= \frac{9 \tb^{2} + 4 \tb \tm - 6 \tb \epsilon - 6 \tb n + 4 \tm^{2} + 4 \tm \epsilon - 12 \tm n + \epsilon^{2} + 2 \epsilon n + n^{2}}{4}, \\
    &= \frac{9 \tb^{2} - 6 \tb n + n^2 + 4\tb\tm - 12 \tm n + 4 \tm^{2} + 4 \tm \epsilon - 6 \tb \epsilon + \epsilon^{2} + 2 \epsilon n }{4}, \\
    &= \frac{(n-3\tb)^2 + 4 \tm (\tb   - 3 n +  \tm) + \epsilon(4 \tm - 6 \tb + \epsilon + 2  n)}{4}.
\end{align*}

We now multiply by $4$ and solve the inequality: 
\begin{align}
    &  n^{2}  - 6n( \tb+ 2 \tm) +9 \tb^{2} + 4 \tb \tm + 4 \tm^{2}  + \epsilon \left(- 6 \tb + 4 \tm + \epsilon + 2 n\right) \geq 0, \nonumber\\
    & n \geq 3 \tb + 4 \sqrt{\tm} \sqrt{2 \tb + 2 \tm - \epsilon} + 6 \tm - \epsilon. \label{eq:intermediate}
\end{align}

By \irassum we have $n > \irBoundN$.
To prove (\ref{eq:intermediate}), we therefore show that  $\irBound \geq 3 \tb + 4 \sqrt{\tm} \sqrt{2 \tb + 2 \tm } + 6 \tm$:
\begin{align}
    & \irBound  \geq 3 \tb + 4 \sqrt{\tm} \sqrt{2 \tb + 2 \tm } + 6 \tm \nonumber\\
    \iff& 2\tb+6\tm+\frac{2\tb\tm}{\tb+2\tm}  \geq 4 \sqrt{\tm} \sqrt{2 \tb + 2 \tm} \nonumber\\
    \iff& \left(2\tb+6\tm+\frac{2\tb\tm}{\tb+2\tm}\right)^2  \geq 16 \tm(2 \tb + 2 \tm ) \nonumber\\
    \iff& - 16 \tm \left(\tb + 2 \tm\right) \left(2 \tb + 2 \tm\right) + \left(2 \tb \tm + 2 \tb \left(\tb + 2 \tm\right) + 6 \tm \left(\tb + 2 \tm\right)\right)^{2} \geq 0 \nonumber\\
    \iff& 4 \tb^{4} + 48 \tb^{3} \tm + 192 \tb^{2} \tm^{2} - 32 \tb^{2} \tm + 288 \tb \tm^{3} - 96 \tb \tm^{2} + 144 \tm^{4} - 64 \tm^{3} \geq 0. \label{eq:proof:ir:disc:final}
\end{align}

We observe that (\ref{eq:proof:ir:disc:final}) holds as $144 \tm^{4} \geq 64 \tm^{3}$,  $288 \tb \tm^{3} \geq 96 \tb \tm^{2}$, and $192 \tb^{2} \tm^{2} \geq 32 \tb^{2} \tm$, therefore proving \sfklassum~\ref{assum:disc}.

\item \textit{Proof of satisfaction of \sfklassum~{\em\ref{assum:r1}}} ($\alpha(\objw.\qd-1)-(\objw.\qf-1)(n-\tb)-(\objw.\qd-1)^2 > 0$):

Let us consider the quantity on the left-hand side of \sfklassum~\ref{assum:r1} and substitute 
$\qf=\irqfval$, $\qd= \irqdval$, and $\alpha=\left\lfloor \frac{3n-\tb}{2}\right\rfloor -\tm$:
\begin{align*}
    & ~~~~ \alpha (\qd - 1) - (\qf-1) (n-\tb) - (\qd - 1)^2, \\
    & = \left(\left\lfloor \frac{3n-\tb}{2}\right\rfloor -\tm\right)\left(\left\lfloor \frac{ n+3\tb}{2}\right\rfloor +3\tm\right) - \left( \left\lfloor \frac{n+\tb}{2}\right\rfloor\right) (n-\tb) \\
    & ~~~~- \left(\left\lfloor \frac{n+3\tb}{2}\right\rfloor +3\tm\right)^2.
\end{align*}

We now observe that $\left(\left \lfloor \frac{m}{2} \right\rfloor\right) = \left(\frac{m-\epsilon}{2} \right)$ with $\epsilon=0$ if $m=2k$ is even, and $\epsilon=1$ if $m=2k+1$ is odd, and rewrite the expression accordingly:
\begin{align*}
    & ~~~~ \frac{3n-\tb-2\tm-\epsilon}{2}\cdot \frac{n+3\tb+6\tm-\epsilon}{2} - \frac{(n+\tb-\epsilon)(n-\tb)}{2} \\
    & ~~~~ - \left(\frac{n+3\tb+6\tm-\epsilon}{2}\right)^{2}, \\
    &=  \frac{(n+3\tb+6\tm-\epsilon)(3n-\tb-2\tm-\epsilon-n-3\tb-6\tm+\epsilon)}{4}  - \frac{(n+\tb-\epsilon)(n-\tb)}{2}, \\
    &= \frac{(n+3\tb+6\tm-\epsilon)(2n-4\tb-8\tm)}{4} - \frac{(n+\tb-\epsilon)(n-\tb)}{2}, \\
    &=  \frac{- 12 \tb^{2} - 48 \tb \tm + 4 \tb \epsilon + 2 \tb n - 48 \tm^{2} + 8 \tm \epsilon + 4 \tm n - 2 \epsilon n + 2 n^{2} + 2 \tb^{2} - 2 \tb \epsilon + 2 \epsilon n - 2 n^{2}}{4}, \\
    &= \frac{- 10 \tb^{2} - 48 \tb \tm + 2 \tb \epsilon + 2 \tb n - 48 \tm^{2} + 8 \tm \epsilon + 4 \tm n}{4}.
\end{align*}

As the coefficients of $n$ are all positive, we can lower-bound the quantity using $n>\irBound$:
\begin{align*}
    & ~~~~ \frac{- 10 \tb^{2} - 48 \tb \tm - 48 \tm^{2} +  2n(\tb + 2 \tm)  + 2 \epsilon(\tb  + 8 \tm) }{4}, \\
    &= \frac{- 10 \tb^{2} - 48 \tb \tm - 48 \tm^{2} +  2(5\tb+12\tm+\frac{2\tb\tm}{\tb+2\tm}) (\tb + 2 \tm)  + 2  \epsilon(\tb  + 8 \tm) }{4}, \\
    &= \frac{- 10 \tb^{2} - 48 \tb \tm - 48 \tm^{2} + 10 \tb^{2} + 44 \tb \tm + 48 \tm^{2} + 4\tb\tm + 2  \epsilon(\tb  + 8 \tm) }{4}, \\
    & = \frac{ \epsilon(\tb  + 8 \tm) }{2} \geq 0,
\end{align*}
which proves all previous inequalities and thus \sfklassum~\ref{assum:r1}.

\item \textit{Proof of satisfaction of \sfklassum~{\em\ref{assum:r0}}} ($\alpha(\objw.\qd-1-\tb)-(\objw.\qf-1)(n-\tb)-(\objw.\qd-1-\tb)^2 \geq 0$):

Let us consider the quantity on the left-hand side of \sfklassum~\ref{assum:r0} and substitute $\qf=\irqfval$, $\qd= \irqdval$, and $\alpha=\left\lfloor \frac{3n-\tb}{2}\right\rfloor -\tm$:
\begin{align*}
    & ~~~~ \alpha(\qd -1 -\tb) - (\qf -1)(n - \tb)  - (\qd -1 - \tb)^2, \\             
    & = \left(\left\lfloor \frac{3n-\tb}{2}\right\rfloor -\tm\right)\left(\left\lfloor \frac{ n+3\tb}{2}\right\rfloor +3\tm- \tb \right) - \left( \left\lfloor \frac{n+\tb}{2}\right\rfloor\right) (n-\tb) \\
    & ~~~~ - \left(\left\lfloor \frac{ n+3\tb}{2}\right\rfloor +3\tm- \tb \right)^2.
\end{align*}

We now observe that $\left(\left \lfloor \frac{m}{2} \right\rfloor\right) = \left(\frac{m-\epsilon}{2} \right)$ with $\epsilon=0$ if $m=2k$ is even, and $\epsilon=1$ if $m=2k+1$ is odd, and rewrite the expression accordingly:
\begin{align*}
    & = \left( \frac{3n-\tb-\epsilon}{2} -\tm\right)\left( \frac{ n+3\tb-\epsilon}{2} +3\tm- \tb \right) - \left( \frac{n+\tb-\epsilon}{2}\right) (n-\tb) \\
    & ~~~~ - \left( \frac{ n+3\tb-\epsilon}{2} +3\tm- \tb \right)^2, \\
    & = \left( \frac{3n-\tb-2\tm-\epsilon}{2} \right)\left( \frac{ n+\tb+6\tm-\epsilon}{2}  \right) - \left( \frac{n+\tb-\epsilon}{2}\right) (n-\tb) \\
    & ~~~~ - \left( \frac{ n+\tb+6\tm-\epsilon}{2} \right)^2, \\
    & = \frac{ (n+\tb+6\tm-\epsilon)(3n-\tb-2\tm-\epsilon-n-\tb-6\tm+\epsilon)}{4} - \left( \frac{(n+\tb-\epsilon)(n-\tb)}{2}\right), \\
    & = \frac{ (n+\tb+6\tm-\epsilon)(2n-2\tb-8\tm)}{4} - \left( \frac{(n+\tb-\epsilon)(n-\tb)}{2}\right), \\
    & = \frac{ (n+\tb+6\tm-\epsilon)(n-\tb-4\tm)-(n+\tb-\epsilon)(n-\tb)}{2}, \\
    & = \frac{- 10 \tb \tm - 24 \tm^{2} + 4 \tm \epsilon + 2 \tm n}{2}.
\end{align*}

As the coefficients of $n$ are all positive, we can lower bound using $n>\irBound>5\tb+12\tm$ to obtain:
\begin{align*}
    & = \frac{- 10 \tb \tm - 24 \tm^{2} + 4 \tm \epsilon + 2 \tm (5\tb+12\tm)}{2}, \\
    & = \frac{- 10 \tb \tm - 24 \tm^{2} + 4 \tm \epsilon + 10 \tb\tm +24\tm^2}{2}, \\
    & = 2 \tm \epsilon \geq 0,
\end{align*}
which proves \sfklassum~\ref{assum:r0}. \qedhere
\end{itemize}
\end{proof}

\subsubsection{Correctness proof}

This section proves the following theorem:
\begin{theorem}[\mbrCtheo] \label{theo:ir-mbrb-correctness}
If \irassum is verified, then Alg.~{\em \ref{alg:ir-mbrb}} implements {\em MBRB} with the guarantee $\lmbr = \lir$.
\end{theorem}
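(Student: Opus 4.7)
\textbf{Proof plan for Theorem~\ref{theo:ir-mbrb-correctness}.}

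The plan is to establish each of the five MBRB properties by directly invoking the \kl-cast guarantees provided by the single instance \objw, whose parameter values were computed in Section~\ref{sec:inst-param-klcast-ibr} under \irassum. Because Algorithm~\ref{algo:ir-mbrb} uses a single \kl-cast object (rather than two chained objects as in Bracha's revisited algorithm), the reductions are all one-step and considerably simpler than the proof of Theorem~\ref{theo:b-mbrb-correctness}. The key numerical facts we will rely on are: $\objw.\kv \geq 1$, $\objw.\nodpty = \ttrue$, $\objw.\single = \ffalse$, $\objw.\kld = \kir$, and $\objw.\lgd = \lir$.

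The three safety properties are handled as follows. For \mbrVprop, if $p_i$ mbrb-delivers $(m,\sn,j)$ then it has \objw-\kl-delivered $(\witnessm(m),(\sn,j))$; applying \klVprop with $\objw.\kv \geq 1$ and the authenticated-channel assumption traces the \imp back to $p_j$'s $\urbroadcast(\initm(m,\sn))$ at line~\ref{IRMBRB-mbrb}. \mbrNDNprop is immediate from \klNDNprop of \objw. For \mbrNDYprop, since $\objw.\nodpty = \ttrue$, \klNDYprop directly forbids two correct processes from \kl-delivering different \witnessm values for the same identity $(\sn,i)$, which is exactly \mbrNDYprop.

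The liveness properties use \klLDprop and \klWGDprop. For \mbrLDprop, when a correct $p_i$ invokes $\mbrbroadcast(m,\sn)$, the MA ensures that at least $c-\tm$ correct processes receive $\initm(m,\sn)$ and hence \kl-cast $(\witnessm(m),(\sn,i))$; since $p_i$ is correct and sends only one \initm for \sn, no correct process \kl-casts any $(\witnessm(m'),(\sn,i))$ with $m'\neq m$, so \klLDprop applies and yields at least one correct process that \kl-delivers, and then mbrb-delivers, $(m,\sn,i)$, \emph{provided} we can establish the inequality $c-\tm \geq \objw.\kld = \kir$. For \mbrGDprop, if a correct $p_i$ mbrb-delivers $(m,\sn,j)$ then it \kl-delivered $(\witnessm(m),(\sn,j))$; since $\objw.\single=\ffalse$, \klWGDprop gives at least $\objw.\lgd = \lir$ correct processes that \kl-deliver some $(\witnessm(m'),(\sn,j))$. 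Here $m'$ is \emph{a priori} not necessarily equal to $m$, but because $\objw.\nodpty = \ttrue$, \klNDYprop forces $m'=m$ for every such process; each therefore mbrb-delivers $(m,\sn,j)$, establishing $\lmbr = \lir$.

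The main technical obstacle is the auxiliary arithmetic lemma needed for \mbrLDprop, namely $c-\tm \geq \kir$, which is the analogue of Lemma~\ref{lemma:echo-sufficient} used in the Bracha proof. This reduces to a second-degree inequality in $n$ that must be derived from \irassum ($n > \irBoundN$). I would attack it by the same template as Lemma~\ref{lemma:echo-sufficient}: eliminate the floor using $\lfloor x \rfloor = (x-\epsilon)/2$ with $\epsilon\in\{0,1\}$, move to the worst case $c=n-\tb$ (justified by the Byzantine-simulates-correct argument already used in Lemmas~\ref{lemma:ifone:then:enough:internal} and \ref{lemma:if:enough:internal:then:enough:ell}), clear denominators using $c-\tb-4\tm>0$ (itself a consequence of \irassum, already part of the \sfklassums verified for \objw), and finally substitute $n \geq \irBoundN + 1$ to check that the resulting polynomial in $\tb,\tm,\epsilon$ is non-negative. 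No new conceptual ingredient is needed beyond what was used to verify \sfklassums~\ref{assum:base}--\ref{assum:r0} for \objw in Appendix~\ref{sec:proof-satisf-object-W}; the calculation is tedious but routine.
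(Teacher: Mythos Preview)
Your proposal is correct and mirrors the paper's proof almost exactly: the five MBRB properties are each reduced in one step to the corresponding \kl-cast property of \objw, and the only non-trivial ingredient is the auxiliary inequality $c-\tm \geq \objw.\kld$ (the paper's Lemma~\ref{lemma:witness-sufficient}). One small remark on that lemma: the paper does not invoke the Byzantine-simulates-correct argument there but keeps $c$ as a free variable and uses $n \leq c+\tb$; your reduction to the worst case $c=n-\tb$ also works, but the proper justification is the monotonicity of $c-\tm - \objw.\kld(c)$ in $c$, not the simulation argument, which applies to execution-level properties rather than to arithmetic inequalities.
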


\noindent The proof follows from the next lemmas.

\begin{lemma} \label{lemma:witness-sufficient}
  $c-\tm \geq \objw.\kld$.
\end{lemma}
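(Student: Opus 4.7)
The plan is to reduce the claim $c-\tm \geq \kir$ to a polynomial inequality in $n$, $\tb$, $\tm$, and $c$, and then to show that this inequality is implied by \irassum ($n > \irBoundN$). The approach closely mirrors the proof of Lemma~\ref{lemma:echo-sufficient}.

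First, since $c-\tm$ is an integer, the stated inequality is equivalent to
\begin{align*}
c-\tm > \frac{c\,\bigl\lfloor \frac{n+\tb}{2} \bigr\rfloor}{c-\tb-4\tm}.
\end{align*}
I would first verify, using \irassum, that $c-\tb-4\tm > 0$ (this follows because $n > 5\tb + 12\tm$ implies $c \geq n-\tb > 4\tb + 12\tm \geq \tb + 4\tm$), so we can multiply through without changing the direction of the inequality and obtain
\begin{align*}
(c-\tm)(c-\tb-4\tm) > c\,\Bigl\lfloor \tfrac{n+\tb}{2} \Bigr\rfloor.
\end{align*}
Since $\bigl\lfloor \tfrac{n+\tb}{2} \bigr\rfloor \leq \tfrac{n+\tb}{2}$, it suffices to prove
\begin{align*}
2(c-\tm)(c-\tb-4\tm) > c(n+\tb).
\end{align*}

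The next step is to observe, as in the proofs of Lemmas~\ref{lemma:ifone:then:enough:internal} and~\ref{lemma:if:enough:internal:then:enough:ell}, that the worst case is $c=n-\tb$: a Byzantine process can always simulate an additional correct process subject to the message adversary, so it is enough to establish the inequality at $c=n-\tb$. Substituting this value yields a polynomial inequality purely in $n$, $\tb$, and $\tm$, namely
\begin{align*}
2(n-\tb-\tm)(n-2\tb-4\tm) - (n-\tb)(n+\tb) > 0,
\end{align*}
which expands to a quadratic in $n$ with positive leading coefficient.

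The main obstacle — and the step I expect to require the most careful algebra — is showing that the resulting quadratic inequality in $n$ is implied by $n > 5\tb + 12\tm + \frac{2\tb\tm}{\tb+2\tm}$. I would solve the quadratic for $n$ using the standard formula, obtain a threshold of the form $n > f(\tb,\tm)$ (the larger root), and then verify that $\irBoundN \geq f(\tb,\tm)$ by squaring both sides (both are non-negative) and clearing denominators. This last check reduces to a polynomial inequality in $\tb$ and $\tm$ whose terms group into products of non-negative monomials (as in the proof of \sfklassum~\ref{assum:disc} for \objw), thus establishing the claim. Together with the degenerate case $\tb + \tm = 0$ (where $\tm = 0$ trivially gives $c-\tm = c \geq 1 \geq \kir$ since the numerator vanishes), this completes the proof.
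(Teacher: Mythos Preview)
Your proposal is correct and follows essentially the same route as the paper: reduce the floor inequality to a strict rational inequality, clear denominators (after checking $c-\tb-4\tm>0$), pass to a quadratic, and verify its larger root lies below the bound implied by \irassum. The only noteworthy differences are cosmetic: the paper eliminates $n$ via $n\leq c+\tb$ and solves a quadratic in $c$, whereas you eliminate $c$ via $c=n-\tb$ and solve in $n$ (these are the same inequality under the substitution); and the paper simplifies the final check by using only the weaker integer consequence $n\geq 5\tb+12\tm+1$ (dropping the $\tfrac{2\tb\tm}{\tb+2\tm}$ term), which spares you the denominator-clearing step you anticipate.
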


\begin{proof}
This proof is presented in reverse order: we start with the result we want to prove and finish with a proposition we know to be true.
In this manner, given two consecutive propositions, we only need that the latter implies the former and not necessarily the converse.
\todo{repeat this for every proof of the same kind in the appendices?}
We want to show that:
\begin{align*}
    c-\tm &\geq \kir = \objw.\kld, \\
    c-\tm &> \frac{c\lfloor\frac{n+\tb}{2}\rfloor}{c-\tb-4\tm}, \tag{as $x \geq \lfloor y \rfloor+1 \iff x>y$}\\
    c-\tm &> \frac{c\frac{n+\tb}{2}}{c-\tb-4\tm}, \\
    c-\tm &> \frac{c(n+\tb)}{2 (c-\tb-4\tm)}, \\
    c-\tm &> \frac{c(n+\tb)}{2c-2\tb-8\tm}, \\
    (c-\tm)(2c-2\tb-8\tm) &> c(n+\tb), \tag{as $2c-2\tb-8\tm > 0$ by \irassum}\\
    (c-\tm)(2c-2\tb-8\tm) &> c(c-2\tb) \geq c(n+\tb), \tag{as $n \leq c+\tb$}\\
    (c-\tm)(2c-2\tb-8\tm)-c(c-2\tb) &> 0, \\
    c^{2} + 2\tb\tm -4\tb c + 8\tm^{2} - 10\tm c &> 0, \\
    2 \tb \tm + 8 \tm^{2} + c^{2} + c \left(- 4 \tb - 10 \tm\right) &> 0.
\end{align*}

The left-hand side of the above inequality is a second-degree polynomial, whose roots we can solve:
\begin{align*}
    \left[ 2 \tb + 5 \tm - \sqrt{4 \tb^{2} + 18 \tb \tm + 17 \tm^{2}}, 2 \tb + 5 \tm + \sqrt{4 \tb^{2} + 18 \tb \tm + 17 \tm^{2}}\right].
\end{align*}
We now need to show that:
\begin{align*}
    c > 2 \tb + 5 \tm + \sqrt{4 \tb^{2} + 18 \tb \tm + 17 \tm^{2}}.
\end{align*}
By \irassum, we know that:
\begin{align*}
    n \geq 5 \tb + 12 \tm + \frac{2 \tb \tm}{\tb + 2 \tm} + 1,
\end{align*}
and thus that:
\begin{align*}
    n &\geq 5 \tb + 12 \tm + 1, \\
    c &\geq 4 \tb + 12 \tm + 1.
\end{align*}
So we want to show that:
\begin{align*}
    4 \tb + 12 \tm + 1 &> 2 \tb + 5 \tm + \sqrt{4 \tb^{2} + 18 \tb \tm + 17 \tm^{2}}, \\
    2 \tb + 7 \tm + 1 &> \sqrt{4 \tb^{2} + 18 \tb \tm + 17 \tm^{2}}.
\end{align*}
It is easy to see that the right-hand side of the above inequality is non-negative, so we get:
\begin{align*}
    \left(2 \tb + 7 \tm + 1\right)^{2} &> 4 \tb^{2} + 18 \tb \tm + 17 \tm^{2}, \\
    4 \tb^{2} + 28 \tb \tm + 4 \tb + 49 \tm^{2} + 14 \tm + 1 &> 4 \tb^{2} + 18 \tb \tm + 17 \tm^{2}, \\
    10 \tb \tm + 4 \tb + 32 \tm^{2} + 14 \tm + 1 &> 0.
\end{align*}
This concludes the proof.
\end{proof}

\begin{lemma}[\mbrVprop] \label{lemma:ir-mbrb-validity}
If a correct process $p_i$ mbrb-delivers an \app $m$ from a correct process $p_j$ with sequence number \sn, then $p_j$ mbrb-broadcast $m$ with sequence number \sn.
\end{lemma}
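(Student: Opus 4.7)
The plan is to trace the mbrb-delivery of $(m, \sn, j)$ by $p_i$ back through the layers of Algorithm~\ref{algo:ir-mbrb} until we reach the initial \urbroadcast performed by $p_j$. The proof is essentially a two-hop argument: one hop uses the \klVprop of the signature-free \kl-cast object \objw, and the second hop uses the authenticated nature of the point-to-point channels.

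First, I would observe that since $p_i$ executes line~\ref{IRMBRB-mbrb-dlv} with parameters $(m, \sn, j)$, the guard of that line guarantees that $p_i$ has \kl-delivered $(\witnessm(m), (\sn, j))$ via \objw. As computed in Appendix~\ref{sec:inst-param-klcast-ibr}, the instantiation of \objw yields $\objw.\kv \geq \lfloor \frac{n-\tb}{2} \rfloor + 1 \geq 1$ (under \irassum), so \klVprop applies and tells us that at least one correct process, call it $p_x$, has \kl-cast $(\witnessm(m), (\sn, j))$.

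Next, I would inspect the only line where $p_x$ could have \kl-cast $(\witnessm(m),(\sn,j))$, namely line~\ref{IRMBRB-klcast-witness}. The guard of this line requires that $p_x$ has received the \imp $\initm(m, \sn)$ from $p_j$. Since the point-to-point channels are authenticated (as assumed in Section~\ref{sec-model}) and $p_j$ is correct by hypothesis, $p_j$ must itself have ur-broadcast $\initm(m, \sn)$. The only line of Algorithm~\ref{algo:ir-mbrb} where a correct process ur-broadcasts such an \imp is line~\ref{IRMBRB-mbrb}, which is executed precisely during an invocation of $\mbrbroadcast(m, \sn)$. Hence $p_j$ mbrb-broadcast $m$ with sequence number \sn, as required.

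There is no real obstacle here: the proof is a straightforward chaining of the \klVprop of \objw (which is the non-trivial guarantee doing the work) with the authenticity of the underlying channels. The only subtle point is checking that $\objw.\kv \geq 1$ under the instantiation chosen in Algorithm~\ref{algo:ir-mbrb}, which follows directly from the value $\objw.\kv = \objw.\qf - n + c$ and the fact that $\objw.\qf = \irqfval$ combined with $c \geq n - \tb$ gives a strictly positive value.
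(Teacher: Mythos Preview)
Your proof is correct and follows essentially the same approach as the paper's: trace the mbrb-delivery back through \klVprop of \objw (using $\objw.\kv\geq 1$) to a correct \kl-caster, then use channel authentication and the correctness of $p_j$ to conclude that $p_j$ invoked $\mbrbroadcast(m,\sn)$. Your write-up is actually slightly more careful than the paper's, which contains a couple of minor typos (it writes $\objr.\kv$ instead of $\objw.\kv$ and cites the wrong line for the delivery step).
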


\begin{proof}
If $p_i$ mbrb-delivers $(m,\sn,j)$ at line~\ref{IRMBRB-mbrb}, then it \kl-delivered $(\witnessm(m),(\sn,j))$ using \objw.
From \klVprop, and as $\objw.\kv \geq 1$, we can assert that at least one correct process $p_i'$ \kl-cast $(\witnessm(m),(\sn,j))$ at line~\ref{IRMBRB-klcast-witness}, after having received an $\initm(m,\sn)$ \imp from $p_j$.
And as $p_j$ is correct and the network channels are authenticated, then $p_j$ has ur-broadcast $\initm(m,\sn)$ at line~\ref{IRMBRB-mbrb}, during a $\mbrbroadcast(m,\sn)$ invocation.
\end{proof}

\begin{lemma}[\mbrNDNprop] \label{lemma:ir-mbrb-no-duplication}
A correct process $p_i$ mbrb-delivers at most one \app from a process $p_j$ with sequence number \sn.
\end{lemma}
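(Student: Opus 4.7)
The plan is to derive this lemma directly from the \klNDNprop (\kl-No-duplication) guarantee provided by the underlying \kl-cast object \objw, exactly as was done for the analogous Lemma~\ref{lemma:b-mbrb-no-duplication} in the revisited Bracha algorithm. The architecture of Algorithm~\ref{algo:ir-mbrb} makes this particularly short: the only place where $p_i$ can mbrb-deliver an \app from $p_j$ with sequence number \sn is line~\ref{IRMBRB-mbrb-dlv}, and that line fires exclusively upon the $\objw.\kldelivered$ event for a witness message whose identity is the pair $(\sn,j)$.

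Concretely, I would first observe that, by Theorem~\ref{theo:sf-kl-correctness} (applied to the instantiation of \objw described in Section~\ref{sec:inst-param-klcast-ibr}), the object \objw satisfies \klNDNprop: a correct process \kldeliver s at most one \app per identity. I would then note that if $p_i$ mbrb-delivers $(m,\sn,j)$ and $(m',\sn,j)$ at line~\ref{IRMBRB-mbrb-dlv}, then $p_i$ must have $\objw.\kldelivered$ both $(\witnessm(m),(\sn,j))$ and $(\witnessm(m'),(\sn,j))$, i.e. two \apps sharing the common \kl-cast identity $(\sn,j)$. By \klNDNprop applied to \objw, the two \kl-delivery events must coincide, so the two mbrb-delivery events also coincide.

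I do not anticipate any real obstacle here: unlike the no-duplicity lemma (which requires reasoning about two distinct correct processes and therefore leans on \klNDYprop, with non-trivial constraints on \nodpty), no-duplication concerns a single correct process and is a direct single-line consequence of \klNDNprop. The only subtlety worth mentioning is that the identity at the \kl-cast layer is the pair $(\sn,j)$ rather than \sn alone; this is precisely what lets the one-per-identity guarantee of \objw translate into a one-per-$(\sn,j)$ guarantee at the MBRB layer, which is what the statement requires.
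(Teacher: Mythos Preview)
Your proposal is correct and follows exactly the same approach as the paper: reduce \mbrNDNprop to the \klNDNprop guarantee of the underlying \kl-cast object, using the fact that mbrb-delivery at line~\ref{IRMBRB-mbrb-dlv} is triggered solely by a \kl-delivery with identity $(\sn,j)$. Your write-up is in fact slightly cleaner than the paper's, which contains a harmless copy-paste slip (it refers to $\readym(-)$ rather than $\witnessm(-)$).
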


\begin{proof}
By \klNDNprop, we know that a correct process $p_i$ can \kl-deliver at most one $\readym(-)$ with identity $(\sn,j)$. Therefore, $p_i$ can mbrb-deliver only one \app from $p_j$ with sequence number \sn. 
\end{proof}

\begin{lemma}[\mbrNDYprop] \label{lemma:ir-mbrb-no-duplicity}
No two distinct correct processes mbrb-deliver different \apps from a process $p_i$ with the same sequence number \sn.
\end{lemma}

\begin{proof}
As $\objw.\nodpty = \ttrue$, then, by \klNDYprop, we know that no two correct processes can \kl-deliver two different \apps with the same identity using \objw at line~\ref{IRMBRB-mbrb-dlv}.
Hence, no two correct processes mbrb-deliver different \apps for a given sequence number \sn and sender $p_i$.
\end{proof}

\begin{lemma}[\mbrLDprop] \label{lemma:ir-mbrb-local-delivery}
If a correct process $p_i$ mbrb-broadcasts an \app $m$ with sequence number \sn, then at least one correct process $p_j$ eventually mbrb-delivers $m$ from $p_i$ with sequence number~\sn.
\end{lemma}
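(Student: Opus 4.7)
The plan is to follow the same template as the proof of \mbrLDprop for the revisited Bracha algorithm (Lemma~\ref{lemma:b-mbrb-local-delivery}), but since Algorithm~\ref{algo:ir-mbrb} uses only a single \kl-cast instance \objw, the argument is simpler: we only need to invoke the local-delivery property of \objw once, after establishing that enough correct processes \kl-cast the same witness.

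First I would unfold the operational behavior: when the correct process $p_i$ invokes $\mbrbroadcast(m,\sn)$ at line~\ref{IRMBRB-mbrb}, it ur-broadcasts $\initm(m,\sn)$. By the definition of the message adversary (which may suppress at most \tm copies of an \imp sent by a correct process), at least $c-\tm$ correct processes receive $\initm(m,\sn)$ and therefore execute $\objw.\klcast(\witnessm(m),(\sn,i))$ at line~\ref{IRMBRB-klcast-witness}. Second, I would observe that since $p_i$ is correct, it ur-broadcasts only one \imp of the form $\initm(-,\sn)$, and since correct processes \kl-cast their witness solely in response to such an \imp received from $p_i$, no correct process \kl-casts any $(\witnessm(m'),(\sn,i))$ with $m'\neq m$.

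Third, I would invoke Lemma~\ref{lemma:witness-sufficient}, which states that $c-\tm \geq \objw.\kld$. Combined with the previous two steps, this provides precisely the preconditions of the \klLDprop property of \objw: at least $\objw.\kld$ correct processes \kl-cast $(\witnessm(m),(\sn,i))$ and none \kl-casts a different witness for identity $(\sn,i)$. Consequently, at least one correct process $p_j$ eventually \kl-delivers $(\witnessm(m),(\sn,i))$ using \objw, and then mbrb-delivers $(m,\sn,i)$ at line~\ref{IRMBRB-mbrb-dlv}.

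The main obstacle, which is already discharged by Lemma~\ref{lemma:witness-sufficient}, is the quantitative check that the pool of correct receivers not silenced by the MA ($c-\tm$) is large enough to trigger the local-delivery threshold \objw.\kld of the underlying \kl-cast object; all remaining steps are straightforward plumbing between the MBRB layer and the \kl-cast layer.
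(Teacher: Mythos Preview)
Your proposal is correct and follows essentially the same approach as the paper's own proof: both argue that the correct sender's \initm reaches at least $c-\tm$ correct processes that all \kl-cast the same witness, invoke Lemma~\ref{lemma:witness-sufficient} to ensure $c-\tm\geq\objw.\kld$, observe that no correct process \kl-casts a conflicting witness since $p_i$ sends a single \initm, and then apply \klLDprop of \objw to obtain the mbrb-delivery. The only difference is the order in which the ``no conflicting witness'' observation and Lemma~\ref{lemma:witness-sufficient} are stated, which is immaterial.
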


\begin{proof}
If $p_i$ mbrb-broadcasts $(m,\sn)$ at line~\ref{IRMBRB-mbrb}, then it invokes ur-broadcasts $\initm(m,\sn)$.
By the definition of the MA, the \imp $\initm(m,\sn)$ is then received by at least $c-\tm$ correct processes at line~\ref{IRMBRB-klcast-witness}, which then \kl-cast $(\witnessm(m),\sn,i)$.
But thanks to Lemma~\ref{lemma:witness-sufficient}, we know that:
$$    c-\tm \geq \objw.\kld = \kir.$$

As $p_i$ is correct and ur-broadcasts only one \imp $\initm(-,\sn)$, then no correct process \kl-casts any different $(\witnessm(-),\sn,i)$, \klLDprop applies and at least one correct processes eventually \kl-delivers $(\witnessm(m),(\sn,i))$ using \objw and thus mbrb-delivers $(m,\sn,i)$ at line~\ref{IRMBRB-mbrb-dlv}.
\end{proof}

\begin{lemma}[\mbrGDprop] \label{lemma:ir-mbrb-global-delivery}
If a correct process $p_i$ mbrb-delivers an \app $m$ from a process $p_j$ with sequence number \sn, then at least \linebreak
$\lmbr = \lir $ correct processes mbrb-deliver $m$ from $p_j$ with sequence number~\sn.
\end{lemma}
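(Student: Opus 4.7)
The plan is to exploit the single \kl-cast instance \objw on which Algorithm~\ref{algo:ir-mbrb} relies, together with the parameters computed for it in Section~\ref{sec:inst-param-klcast-ibr}. Since the algorithm instantiates \objw with $\single = \ffalse$, the only global-delivery guarantee directly available is the weak one, \klWGDprop, which alone is not enough: it only ensures that many correct processes \kl-deliver \emph{some} \app $m'$ for the identity $(\sn,j)$, possibly with $m' \neq m$. The key observation is that the instantiation in Section~\ref{sec:inst-param-klcast-ibr} also yields $\objw.\nodpty = \ttrue$, because the forwarding threshold $\qf = \irqfval$ already satisfies $\qf > \tfrac{n+\tb}{2}$. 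Hence \klNDYprop applies to \objw, and no two correct processes can \kl-deliver distinct \apps for the same identity.

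Concretely, I would proceed as follows. First, from the hypothesis that $p_i$ mbrb-delivers $(m,\sn,j)$ at line~\ref{IRMBRB-mbrb-dlv}, infer that $p_i$ has \kl-delivered $(\witnessm(m),(\sn,j))$ through \objw. Second, invoke \klWGDprop on \objw to obtain that at least $\objw.\lgd = \lir$ correct processes \kl-deliver some pair $(\witnessm(m'),(\sn,j))$ via \objw. Third, apply \klNDYprop on \objw together with the fact that $p_i$ has already \kl-delivered $(\witnessm(m),(\sn,j))$: every correct process that \kl-delivers a pair with identity $(\sn,j)$ must \kl-deliver the same \app, namely $m$. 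Finally, by the code at line~\ref{IRMBRB-mbrb-dlv}, each of these correct processes mbrb-delivers $(m,\sn,j)$, which gives the required bound $\lmbr = \lir$.

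The argument is mostly a direct composition of the \kl-cast properties, so there is no real algorithmic obstacle. The only delicate point is to make sure we are entitled to apply \klNDYprop, i.e. that the parameter $\nodpty$ of \objw is indeed \ttrue. This is where I would invoke the computation carried out in Section~\ref{sec:inst-param-klcast-ibr}, which shows that $\objw.\qf = \irqfval > \tfrac{n+\tb}{2}$, making the first disjunct of the expression for \nodpty true independently of the value of $\single$. Once this is established, chaining \klWGDprop and \klNDYprop yields the conclusion without further computation.
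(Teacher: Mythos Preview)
Your proof is correct and follows essentially the same approach as the paper: both combine \klWGDprop with \klNDYprop (using $\objw.\nodpty=\ttrue$, which holds because $\objw.\qf=\irqfval>\tfrac{n+\tb}{2}$) to upgrade the weak global-delivery guarantee into one where all the $\objw.\lgd=\lir$ correct processes \kl-deliver the same \app $m$, and hence mbrb-deliver $(m,\sn,j)$. Your sketch is in fact more explicit than the paper's proof about why \klNDYprop is applicable despite $\single=\ffalse$, which is the only non-immediate step.
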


\begin{proof}
\sloppy If $p_i$ mbrb-delivers $(m,\sn,j)$ at line~\ref{IRMBRB-mbrb-dlv}, then it has \kl-delivered $(\witnessm(m),(\sn,j))$ using \objw. As $\objw.\nodpty = \ttrue$, we can assert from \klWGDprop and \klNDYprop that at least $\objw.\lgd = \lval  $ correct processes eventually \kl-deliver $(\witnessm(m),(\sn,j))$ using \objw and thus mbrb-deliver $(m,\sn,j)$ at line~\ref{IRMBRB-mbrb-dlv}. By substituting the values of \qf and \qd, we obtain
$\objw.\lgd  = \lir = \lmbr  $  thus proving the lemma. 
\end{proof}


\section{Proof of the Signature-Based \texorpdfstring{\kl}{k2l}-Cast Implementation (Algorithm~\ref{alg:sb-klcast})}
\label{sec:sb-klcast-proofs}
For the proofs provided in this section, let us remind that, given two sets $A$ and $B$, we have $|A \cap B| = |A|+|B|-|A \cup B|$.
Moreover, the number of correct processes $c$ is superior or equal to $n-\tb$. 
Additionally, if $A$ and $B$ are both sets containing a majority of correct processes, we have $|A \cup B| \leq c$, which implies that $|A \cap B| \geq |A|+|B|-c$.
Furthermore, let us remind the assumptions of Alg.~\ref{alg:sb-klcast}:
\begin{itemize}
    \item \sbklassum~\ref{assum:no-partition}: $c > 2\tm$,
    
    \item \sbklassum~\ref{assum:dlv-tshld}: 
    $c-\tm \geq \qd \geq \tb+1$.
\end{itemize}

\subsection{Safety Proof}
\label{sec:sb-klcast-safety}

\begin{lemma} \label{lemma:n-sign-if-kldv}
If a correct process $p_i$ \kl-delivers $(m,\id)$, then at least $\qd-n+c$ correct processes have signed $(m,\id)$ at line~{\em \ref{SBKL-klc-sign}}.
\end{lemma}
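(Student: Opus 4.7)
The plan is to exploit the unforgeability of signatures together with the fact that each process can produce at most one valid signature per pair $(m,\id)$. The starting observation is that, by the delivery condition at line~\ref{SBKL-cond-dlv}, for $p_i$ to \kl-deliver $(m,\id)$ it must have ur-broadcast at least \qd valid signatures for $(m,\id)$. Because the signature scheme is assumed unforgeable and each process owns a unique private key, these \qd signatures must originate from \qd \emph{distinct} processes that actually executed line~\ref{SBKL-klc-sign} (or from a Byzantine process impersonating itself, which still counts as one signer).

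Next I would partition the \qd signers into correct and Byzantine processes. The number of Byzantine processes in the execution is exactly $n-c$, with $n-c \leq \tb$, so the Byzantine contribution to the set of valid signers is bounded by $n-c$. Subtracting gives that at least $\qd - (n-c) = \qd - n + c$ of the \qd signers must be correct, and each such correct signer can only have produced its signature at line~\ref{SBKL-klc-sign} (since this is the only place in the algorithm where a correct process generates a signature, as opposed to merely relaying those of others in a $\bundlem$ \imp at lines~\ref{SBKL-klc-bcast} and~\ref{SBKL-rcv-bcast}). This directly yields the desired bound.

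The final step is a sanity check that this bound is meaningful: by \sbklassum~\ref{assum:dlv-tshld} we have $\qd \geq \tb+1$, and since $n-c \leq \tb$, we get $\qd - n + c \geq 1$, so the lemma is non-vacuous.

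The only subtle point---and what I would expect to be the main (modest) obstacle---is to be careful that ``valid signatures'' are correctly attributed: a Byzantine process could reuse, duplicate, or repackage a correct process's signature in several $\bundlem$ \imps, but the counting at line~\ref{SBKL-cond-dlv} is over distinct valid signatures for the pair $(m,\id)$, and each valid signature is uniquely tied to its owner's public key. So the mapping from ``valid signatures counted by $p_i$'' to ``signers that executed line~\ref{SBKL-klc-sign}'' is well-defined and injective, which is exactly what makes the pigeonhole argument go through.
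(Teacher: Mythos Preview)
Your argument is correct and follows essentially the same route as the paper: from the delivery predicate at line~\ref{SBKL-cond-dlv} you get \qd valid signatures, bound the Byzantine contribution by $n-c$, and conclude that at least $\qd-n+c$ correct processes signed at line~\ref{SBKL-klc-sign}, noting via \sbklassum~\ref{assum:dlv-tshld} that this quantity is positive. Your extra remarks on unforgeability and the injectivity of the signature-to-signer mapping make explicit what the paper leaves implicit, but the structure is the same.
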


\begin{proof}
If $p_i$ \kl-delivers $(m,\id)$ at line~\ref{SBKL-dlv}, then it sent \qd valid signatures for $(m,\id)$ (because of the predicate at line~\ref{SBKL-cond-dlv}).
The effective number of Byzantine processes in the system is $n-c$, such that $0 \leq n-c \leq \tb$.
Therefore, $p_i$ must have sent at least $\qd-n+c$ (which, due to \sbklassum~\ref{assum:dlv-tshld}, is strictly positive because $\qd > \tb \geq n-c$) valid distinct signatures for $(m,\id)$ that correct processes made at line~\ref{SBKL-klc-sign}, during a $\klcast(m,\id)$ invocation.
\end{proof}

\begin{lemma}[\klVprop] \label{lemma:sb-kl-validity}
If a correct process $p_i$ \kl-delivers an \app $m$ with identity \id, then at least $\kv = \qd-n+c$ correct processes \kl-cast $m$ with identity \id.
\end{lemma}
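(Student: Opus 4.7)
The plan is to leverage Lemma~\ref{lemma:n-sign-if-kldv} almost directly, and then connect the act of \emph{signing} to the act of \emph{\kl-casting}. Concretely, suppose a correct process $p_i$ \kl-delivers $(m,\id)$. By Lemma~\ref{lemma:n-sign-if-kldv}, at least $\qd-n+c$ correct processes produced a valid signature on $(m,\id)$ at line~\ref{SBKL-klc-sign}.

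Next, I would inspect the code of Algorithm~\ref{alg:sb-klcast} and observe that line~\ref{SBKL-klc-sign} sits inside the body of the \klcast operation (bounded by the \textbf{if} at line~\ref{SBKL-klc-cond} and the \textbf{end if} at line~\ref{SBKL-klc-end-cond}), and that this is the \emph{only} line at which a correct process ever produces a signature on a pair $(m,\id)$; in particular, the handler for incoming \bundlem messages (lines~\ref{SBKL-rcv}--\ref{SBKL-rcv-end-cond}) only collects and re-ur-broadcasts signatures produced elsewhere, never creates new ones. Consequently, every correct process that signed $(m,\id)$ did so while executing a $\klcast(m,\id)$ invocation, hence \kl-cast the pair $(m,\id)$.

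Combining the two observations, the same $\qd-n+c$ correct processes guaranteed by Lemma~\ref{lemma:n-sign-if-kldv} must have \kl-cast $m$ with identity \id, which gives $\kv = \qd - n + c$ as required.

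There is no real obstacle here: the unforgeability of signatures (implicit in the model of Section~\ref{sec:sb-klcast}) is what prevents Byzantine processes from fabricating the signatures counted in Lemma~\ref{lemma:n-sign-if-kldv} on behalf of correct processes, and the syntactic observation that signing is confined to the \klcast operation is immediate from the pseudocode. The proof is therefore essentially a one-line corollary of Lemma~\ref{lemma:n-sign-if-kldv} together with a code-inspection argument.
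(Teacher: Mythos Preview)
Your proposal is correct and follows essentially the same approach as the paper's proof: both invoke Lemma~\ref{lemma:n-sign-if-kldv} and then use the observation that the set of correct processes that signed $(m,\id)$ at line~\ref{SBKL-klc-sign} is contained in the set of correct processes that \kl-cast $(m,\id)$. The paper justifies this containment via the guard at line~\ref{SBKL-klc-cond}, while you justify it by a slightly more detailed code inspection (line~\ref{SBKL-klc-sign} lies inside the \klcast body and the \bundlem handler never creates signatures); the two justifications are equivalent.
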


\begin{proof}
The condition at line~\ref{SBKL-klc-cond} implies that the correct processes that \kl-cast $(m,\id)$ constitute a superset of those that signed $(m,\id)$ at line~\ref{SBKL-klc-sign}. Thus, by Lemma~\ref{lemma:n-sign-if-kldv}, their number is at least $\kv=\qd-n+c$.
\end{proof}

\begin{lemma}[\klNDNprop] \label{lemma:sb-kl-no-duplication}
A correct process \kl-delivers at most one \app $m$ with identity~\id.
\end{lemma}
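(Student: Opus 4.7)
The plan is to derive this property directly from the guard on the \kl-delivery step in Algorithm~\ref{alg:sb-klcast}. The key observation is that \kldeliver is invoked only inside the internal operation \checkdelivery at line~\ref{SBKL-dlv}, and this invocation is conditional on the predicate at line~\ref{SBKL-cond-dlv}, which explicitly requires that no pair $(-,\id)$ has already been \kl-delivered by $p_i$.

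First I would note that, since $p_i$ is correct, it executes the code of Algorithm~\ref{alg:sb-klcast} faithfully, so every \kl-delivery it performs passes through line~\ref{SBKL-dlv}. Suppose for contradiction that a correct process $p_i$ \kl-delivers two \apps $m$ and $m'$ (possibly equal) with the same identity \id. Consider the second such event, occurring at some point in the execution of $p_i$. By the time this event takes place, $p_i$ has already \kl-delivered the first pair $(m,\id)$ (or $(m',\id)$), so the conjunct ``$(-,\id)$ not already \kl-delivered'' in the guard at line~\ref{SBKL-cond-dlv} evaluates to \ffalse. Consequently, the branch containing line~\ref{SBKL-dlv} cannot be executed, contradicting the assumption that a second \kl-delivery for identity \id takes place.

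I do not anticipate any real obstacle here: the property is essentially an immediate consequence of the local guard, and no reasoning about cryptographic signatures, quorum intersections, or the Message Adversary is needed. The only subtlety worth mentioning is that \kldeliver is not invoked elsewhere in the code (in particular, neither at line~\ref{SBKL-klc-bcast} nor at line~\ref{SBKL-rcv-bcast}), so line~\ref{SBKL-dlv} is indeed the unique \kl-delivery site, and the local Boolean-like test on ``already \kl-delivered'' suffices to enforce at-most-once delivery per identity at each correct process.
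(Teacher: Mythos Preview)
Your proposal is correct and follows exactly the same approach as the paper: the paper's proof is the single sentence ``This property derives trivially from the predicate at line~\ref{SBKL-cond-dlv},'' and you have simply spelled out that argument in more detail.
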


\begin{proof}
This property derives trivially from the predicate at line~\ref{SBKL-cond-dlv}.
\end{proof}

\begin{lemma}[\klNDYprop] \label{lemma:sb-kl-conditional-no-duplicity}
If the Boolean 
$\nodpty = \qd > \frac{n+\tb}{2}$ is \ttrue, then no two different correct processes \kl-deliver different \apps with the same identity~\id.
\end{lemma}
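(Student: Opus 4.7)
The plan is to argue by contradiction, mimicking the first case of the corresponding signature-free proof (Lemma~\ref{lemma:sf-kl-conditional-no-duplicity}) but exploiting signatures rather than the \ur-broadcast-at-line-\ref{SFKL-bcast} sets. Suppose, contrary to the claim, that two correct processes $p_i$ and $p_j$ respectively \kl-deliver $(m,\id)$ and $(m',\id)$ with $m \neq m'$. The predicate at line~\ref{SBKL-cond-dlv} guarantees that, at the moment of \kl-delivery, $p_i$ has ur-broadcast a $\bundlem$ containing at least \qd valid signatures for $(m,\id)$, and symmetrically $p_j$ has ur-broadcast a $\bundlem$ containing at least \qd valid signatures for $(m',\id)$.

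Next I would identify, for each delivery, the set of \emph{signers} (the processes whose public keys verify the signatures in the corresponding bundle). Call these sets $S_m$ and $S_{m'}$. By the unforgeability of the asymmetric cryptosystem (each process is the only entity able to produce a valid signature under its own key, as posited in the model of Section~\ref{sec:sb-klcast}), the signatures in each bundle come from \qd distinct processes, so $|S_m| \geq \qd$ and $|S_{m'}| \geq \qd$. Since $S_m, S_{m'} \subseteq \{p_1,\dots,p_n\}$, a standard inclusion-exclusion bound yields
\[
|S_m \cap S_{m'}| \;\geq\; |S_m| + |S_{m'}| - n \;\geq\; 2\qd - n \;>\; 2 \cdot \tfrac{n+\tb}{2} - n \;=\; \tb,
\]
where the strict inequality uses the hypothesis $\qd > \frac{n+\tb}{2}$.

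Since at most \tb processes are Byzantine, the intersection $S_m \cap S_{m'}$ must therefore contain at least one correct process $p_k$. But a correct process produces signatures only at line~\ref{SBKL-klc-sign}, and this line is guarded by the condition at line~\ref{SBKL-klc-cond}: $p_k$ signs at most one pair of the form $(-,\id)$ in its entire execution. In particular $p_k$ cannot have signed both $(m,\id)$ and $(m',\id)$ with $m \neq m'$, contradicting $p_k \in S_m \cap S_{m'}$. Hence no such $p_i, p_j$ exist, which is the desired conclusion.

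The only delicate point is ensuring that the two signer sets really are drawn from the full pool of $n$ processes and that signatures are unambiguously attributable to their producers; both facts follow directly from the cryptographic assumptions recalled at the start of Section~\ref{sec:sb-klcast} (each process owns a unique key pair, and Byzantine processes cannot forge the signature of a correct process). Beyond that, the argument is a two-line counting step and does not require \sbklassum~\ref{assum:no-partition} or the lower bound $\qd \geq \tb+1$ of \sbklassum~\ref{assum:dlv-tshld} — only the stronger threshold $\qd > \frac{n+\tb}{2}$ encoded in \nodpty.
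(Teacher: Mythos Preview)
Your proof is correct and follows essentially the same approach as the paper's own argument: both identify the two sets of signers (the paper calls them $A$ and $B$, you call them $S_m$ and $S_{m'}$), bound their intersection by $2\qd-n>\tb$ via inclusion--exclusion, deduce that some correct process lies in the intersection, and then invoke the guard at line~\ref{SBKL-klc-cond} to derive a contradiction. The only cosmetic difference is that the paper frames the argument as showing $m=m'$ directly rather than as a contradiction, and it does not spell out your closing remarks about which assumptions are actually needed.
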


\begin{proof}
Let $p_i$ and $p_j$ be two correct processes that respectively \kl-deliver $(m,\id)$ and $(m',\id)$. We want to prove that, if the predicate $(\qd > \frac{n+\tb}{2})$ is satisfied, then $m=m'$.
    
Thanks to the predicate at line~\ref{SBKL-cond-dlv}, we can assert that $p_i$ and $p_j$ must have respectively sent at least \qd valid signatures for $(m,\id)$ and $(m',\id)$, made by two sets of processes, that we respectively denote $A$ and $B$, such that $|A| \geq \qd > \frac{n+\tb}{2}$ and $|B| \geq \qd > \frac{n+\tb}{2}$.
We have $|A \cap B| > 2\frac{n+\tb}{2}-n = \tb$. Hence, at least one correct process $p_x$ has signed both $(m,\id)$ and $(m',\id)$.
But because of the predicates at lines~\ref{SBKL-klc-cond}, $p_x$ signed at most one couple $(-,\id)$ during a $\klcast(m,\id)$ invocation at line~\ref{SBKL-klc-sign}.
We conclude that $m$ is necessarily equal to $m'$.
\end{proof}

\subsection{Liveness Proof}
\label{sec:sb-klcast-liveness}

\begin{lemma} \label{lemma:rcv-all-sigs}
All signatures made by correct processes at line~{\em\ref{SBKL-klc-sign}} are eventually received by at least $c-\tm$ correct processes at line~{\em\ref{SBKL-rcv}}.
\end{lemma}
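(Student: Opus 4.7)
The plan is to observe that this lemma follows almost immediately from the algorithm's structure and the definition of the message adversary. Concretely, I would first note that a correct process $p_i$ only produces a signature for $(m,\id)$ at line~\ref{SBKL-klc-sign}, and that this occurs exactly once per identifier \id, as enforced by the guard at line~\ref{SBKL-klc-cond}. After creating $\sigiv$, the same execution path deterministically proceeds through line~\ref{SBKL-klc-gather-sigs}, where $p_i$ builds $\sigsiv$ as a set that contains $\sigiv$, and then invokes $\urbroadcast(\bundlem(m,\id,\sigsiv))$ at line~\ref{SBKL-klc-bcast}.

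Next, I would invoke the message adversary model from Section~\ref{sec-model}: whenever a correct process executes an \urbroadcast of an \imp, the MA may suppress at most \tm of the \imp's copies intended for correct processes. Applying this to the ur-broadcast at line~\ref{SBKL-klc-bcast}, at least $c-\tm$ correct processes receive the \imp $\bundlem(m,\id,\sigsiv)$ at line~\ref{SBKL-rcv}. Since $\sigiv \in \sigsiv$ and the signature is valid (produced by the honest process $p_i$ in line~\ref{SBKL-klc-sign}), each of these correct recipients therefore receives $p_i$'s signature of $(m,\id)$ as required by the lemma statement.

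Finally, I would briefly justify that no additional argument (e.g., relay through other correct processes, use of \sbklassum~\ref{assum:no-partition} or \sbklassum~\ref{assum:dlv-tshld}) is needed here: the lemma only asks about reception at line~\ref{SBKL-rcv}, not about the condition at line~\ref{SBKL-rcv-cond} being satisfied or about triggering \kl-delivery. The main (minor) subtlety, and the only one worth spelling out, is that signatures of correct processes are produced uniquely at line~\ref{SBKL-klc-sign}, so that tracking the single broadcast immediately following that line suffices to cover every signature produced by every correct process; the rest is a one-step appeal to the MA's bound.
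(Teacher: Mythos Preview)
Your reading of the lemma is too weak. You show that \emph{each} signature made by a correct process is received by at least $c-\tm$ correct processes, but the paper's statement (and, crucially, its use in Lemma~\ref{lemma:sb-kl-local-delivery}) requires that a \emph{common} set of at least $c-\tm$ correct processes eventually receives \emph{all} the signatures for a given $(m,\id)$. These are not the same: the MA may target different correct receivers on each \urbroadcast, so the per-signature receiver sets need not intersect at all once several signatures are involved.

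The paper therefore argues by induction on the number of signatures, and relies precisely on the two ingredients you dismiss. Assuming a set $A$ of $\geq c-\tm$ correct processes already holds $\{s_1,\dots,s_z\}$, the broadcast at line~\ref{SBKL-klc-bcast} carrying $s_{z+1}$ reaches a set $B$ of $\geq c-\tm$ correct processes; \sbklassum~\ref{assum:no-partition} ($c>2\tm$) guarantees $|A\cap B|\geq c-2\tm>0$, so some correct $p_j$ now holds $\{s_1,\dots,s_{z+1}\}$. That $p_j$ then \emph{relays} via line~\ref{SBKL-rcv-bcast} a \bundlem containing all $z{+}1$ signatures, and this single ur-broadcast reaches $\geq c-\tm$ correct processes, closing the induction. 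Your claim that ``no additional argument (e.g., relay through other correct processes, use of \sbklassum~\ref{assum:no-partition} \dots) is needed'' is exactly where the proposal fails.
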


\begin{proof}
Let $\{s_1,s_2,...\}$ be the set of all signatures for $(m,\id)$ made by correct processes at line~\ref{SBKL-klc-sign}. We first show by induction that, for all $z$, at least $c-\tm$ correct processes receive all signatures $\{s_1,s_2,...,s_z\}$ at line~\ref{SBKL-rcv}.

Base case $z = 0$. As no correct process signed $(m,\id)$, the proposition is trivially satisfied.

Induction. We suppose that the proposition is verified at $z$: signatures $s_1,s_2,...,s_z$ are received by a set of at least $c-\tm$ correct processes that we denote $A$. We now show that the proposition is verified at $z+1$: at least $c-\tm$ correct processes eventually receive all signatures $s_1,s_2,...,s_{z+1}$.

The correct process that makes the signature $s_{z+1}$ ur-broadcasts a $\bundlem(m,\id,\sigsv)$ \imp (at line~\ref{SBKL-klc-bcast}) where \sigsv contains $s_{z+1}$.
\sloppy
From the definition of the MA, $\bundlem(m,\id,\sigsv)$ is eventually received by a set of at least $c-\tm$ correct processes that we denote $B$. We have $|A \cap B| = 2(c-\tm)-c = c-2\tm > 2\tm-2\tm = 0$ (from \sbklassum~\ref{assum:no-partition}).
Hence, at least one correct process $p_j$ eventually receives all signatures $s_1,s_2,...,s_{z+1}$, and thereafter ur-broadcasts $\bundlem(m,\id,\sigsv')$ where $\{s_1,s_2,...,s_{z+1}\} \subseteq \sigsv'$.
Again, from the definition of the MA, $\bundlem(m,\id,\sigsv')$ is eventually received by a set of at least $c-\tm$ correct processes at line~\ref{SBKL-rcv}.
\end{proof}

\begin{lemma} \label{lemma:no-dlv-if-no-klc}
If no correct process \kl-casts $(m,\id)$ at line~\ref{SBKL-klc}, then no correct process \kl-delivers $(m,\id)$ at line~\ref{SBKL-dlv}.
\end{lemma}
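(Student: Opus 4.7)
The plan is to exploit the unforgeability of digital signatures together with the lower bound on $\qd$ given by \sbklassum~\ref{assum:dlv-tshld}. The only place in Algorithm~\ref{alg:sb-klcast} where a correct process produces a signature of its own is line~\ref{SBKL-klc-sign}, and that line is guarded by the condition at line~\ref{SBKL-klc-cond}, which is only entered through an invocation of $\klcast(m,\id)$. So the key observation to formalize is: under the assumption of the lemma, no correct process ever signs $(m,\id)$, hence every valid signature for $(m,\id)$ that appears anywhere in the network must have been produced by a Byzantine process.

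First, I would state the contrapositive framing: suppose some correct process $p_i$ \kl-delivers $(m,\id)$ at line~\ref{SBKL-dlv}. By the predicate at line~\ref{SBKL-cond-dlv}, $p_i$ must have ur-broadcast (and therefore collected) at least \qd distinct valid signatures for $(m,\id)$. Each such signature can be verified against the public key of exactly one process, so these \qd signatures correspond to \qd distinct signers.

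Second, I would argue that under the lemma's hypothesis none of these signers can be correct. Indeed, if a correct process $p_j$ had produced a valid signature for $(m,\id)$, it could only have done so at line~\ref{SBKL-klc-sign}, which is executed inside the branch guarded at line~\ref{SBKL-klc-cond} of $\klcast(m,\id)$. This would contradict the assumption that no correct process \kl-casts $(m,\id)$. Here I rely on the cryptographic assumption stated at the beginning of Section~\ref{sec:sb-klcast}, namely that no process other than $p_j$ can forge a valid signature in $p_j$'s name.

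Finally, I would conclude by counting: the \qd valid signatures collected by $p_i$ must all come from the at most $\tb$ Byzantine processes, so $\qd \leq \tb$. But \sbklassum~\ref{assum:dlv-tshld} requires $\qd \geq \tb+1$, a contradiction. Hence no correct process \kl-delivers $(m,\id)$. There is no real technical obstacle here; the only subtlety is being explicit that the unforgeability of signatures (which is what makes line~\ref{SBKL-klc-sign} the \emph{unique} source of valid signatures by correct processes) is being invoked, since this is precisely what distinguishes the signature-based implementation from its signature-free counterpart.
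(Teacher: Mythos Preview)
Your proof is correct and follows essentially the same approach as the paper's: a contradiction argument counting the $\qd$ valid signatures collected before \kl-delivery, observing that none can come from a correct process under the hypothesis, and then invoking $\qd \geq \tb+1$ from \sbklassum~\ref{assum:dlv-tshld}. You are somewhat more explicit than the paper in spelling out why a valid signature for $(m,\id)$ from a correct process forces a \klcast of $(m,\id)$ (namely, unforgeability and the fact that line~\ref{SBKL-klc-sign} is the only signing step), but this is a matter of presentation rather than a difference in method.
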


\begin{proof}
Looking for a contradiction, let us suppose that a correct process $p_i$ \kl-delivers $(m,\id)$ while no correct process \kl-cast $(m,\id)$. Because of the condition at line~\ref{SBKL-cond-dlv}, $p_i$ must have ur-broadcast at least \qd valid signatures for $(m,\id)$, out of which at most \tb are made by Byzantine processes. As $\qd > \tb$ (\sbklassum~\ref{assum:dlv-tshld}), we know that $\qd-\tb > 0$. Hence, at least one correct process must have \kl-cast $(m,\id)$. Contradiction.
\end{proof}

\begin{lemma}[\klLDprop] \label{lemma:sb-kl-local-delivery}
If at least $\kld = \qd$ correct processes \kl-cast an \app $m$ with identity \id and no correct process \kl-casts an \app $m' \neq m$ with identity~\id, then at least one correct process $p_i$ \kl-delivers the \app $m$ with identity~\id.
\end{lemma}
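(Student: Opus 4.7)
The plan is to exhibit at least one correct process that ends up both ur-broadcasting \qd distinct valid signatures for $(m,\id)$ and passing the \checkdelivery condition at line~\ref{SBKL-cond-dlv}. First, I would note that each of the \qd correct processes that \kl-cast $(m,\id)$ signs $(m,\id)$ exactly once at line~\ref{SBKL-klc-sign} --- the guard at line~\ref{SBKL-klc-cond} allows at most one signature per identity, and by hypothesis no correct process signs any other value for~\id --- and then disseminates its signature via the bundle at line~\ref{SBKL-klc-bcast}. Applying Lemma~\ref{lemma:rcv-all-sigs} to these correct signatures yields that at least $c-\tm$ correct processes eventually receive every one of these \qd signatures at line~\ref{SBKL-rcv}; by \sbklassum~\ref{assum:dlv-tshld}, $c-\tm \geq \qd \geq 1$, so such a process $p_i$ exists.

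Second, I would argue that $p_i$ has also ur-broadcast all \qd correct-made signatures by the time the last of them reaches it. The forwarding rule at line~\ref{SBKL-rcv-bcast} re-emits the full accumulated set $\sigsiv$ computed at line~\ref{SBKL-rcv-gather-sigs} whenever at least one of the incoming signatures is new, and the symmetric role is played by line~\ref{SBKL-klc-bcast} if $p_i$ is itself a \kl-caster. A short induction on the order of reception then shows that, immediately after the last still-missing signature is received, $p_i$ has ur-broadcast all \qd distinct signatures and invokes \checkdelivery at line~\ref{SBKL-rcv-chk-dlv}.

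The subtle step, and what I expect to be the main obstacle, is discharging the ``$(-,\id)$ not already \kl-delivered'' clause at line~\ref{SBKL-cond-dlv}. If $p_i$ has already \kl-delivered $(m,\id)$, the lemma is immediate, so suppose instead that $p_i$ has previously \kl-delivered some $(m',\id)$ with $m' \neq m$. By the very same condition at line~\ref{SBKL-cond-dlv}, this would require $p_i$ to have earlier ur-broadcast at least \qd valid signatures for $(m',\id)$. However, signatures are only produced at line~\ref{SBKL-klc-sign}, each process produces at most one per identity, and by hypothesis no correct process \kl-casts $(m',\id)$; hence at most \tb distinct valid signatures for $(m',\id)$ can ever exist. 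Since \sbklassum~\ref{assum:dlv-tshld} gives $\qd \geq \tb+1$, this is impossible. Therefore $p_i$'s delivery condition is satisfied, $p_i$ \kl-delivers $(m,\id)$ at line~\ref{SBKL-dlv}, and the lemma follows.
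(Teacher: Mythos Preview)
Your proof is correct and follows essentially the same approach as the paper's: both establish that the \qd correct \kl-casters all produce a signature for $(m,\id)$, invoke Lemma~\ref{lemma:rcv-all-sigs} to obtain a correct process that accumulates all of them, argue this process has ur-broadcast the full set and calls \checkdelivery, and rule out prior delivery of a conflicting value. The only organizational difference is that the paper cites Lemma~\ref{lemma:no-dlv-if-no-klc} up front to dismiss delivery of any $(m',\id)$ with $m'\neq m$, whereas you re-derive that fact inline at the end; the underlying counting argument ($\qd\geq\tb+1$ against at most \tb Byzantine signatures) is identical.
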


\begin{proof}
As no correct process \kl-casts an \app $m' \neq m$ with identity~\id, then Lemma~\ref{lemma:no-dlv-if-no-klc} holds, and no correct process can \kl-deliver $(m',\id)$ where $m' \neq m$. Moreover, no correct process can sign $(m',\id)$ where $m' \neq m$ at line~\ref{SBKL-klc-sign}, and thus all $\kld = \qd$ correct processes that invoke $\klcast(m,\id)$ at line~\ref{SBKL-klc} also pass the condition at line~\ref{SBKL-klc-cond}, and then sign $(m,\id)$ at line~\ref{SBKL-klc-sign}. From Lemma~\ref{lemma:rcv-all-sigs}, we can assert that all \qd signatures are received at line~\ref{SBKL-rcv} by a set of at least $c-\tm$ correct processes, that we denote $A$. Let us consider $p_j$, one of the processes of $A$. There are two cases:
\begin{itemize}
    \item If $p_j$ passes the condition at line~\ref{SBKL-rcv-cond}, then it sends all \qd signatures at line~\ref{SBKL-rcv-bcast}, then invokes $\checkdelivery()$ at line~\ref{SBKL-rcv-chk-dlv}, passes the condition at line~\ref{SBKL-cond-dlv} (if it was not already done before) and \kl-delivers $(m,\id)$ at line~\ref{SBKL-dlv};
    
    \item If $p_j$ does not pass the condition at line~\ref{SBKL-rcv-cond}, then it means that it has already sent all \qd signatures before, whether it be at line~\ref{SBKL-klc-bcast} or~\ref{SBKL-rcv-bcast}, but after that, it necessarily invoked $\checkdelivery()$ (at line~\ref{SBKL-klc-chk-dlv} or~\ref{SBKL-rcv-chk-dlv}, respectively), passed the condition at line~\ref{SBKL-cond-dlv} (if it was not already done before) and \kl-delivered $(m,\id)$ at line~\ref{SBKL-dlv}. \qedhere
\end{itemize}
\end{proof}

\begin{lemma}[\klWGDprop] \label{lemma:sb-kl-weak-global-delivery}
If a correct process \kl-delivers an \app $m$ with identity \id, then at least $\lgd = c-\tm$ correct processes \kl-deliver an \app $m'$ with identity \id (each of them possibly different from $m$).
\end{lemma}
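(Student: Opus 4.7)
The plan is to trace the propagation of the $\qd$ signatures that $p_i$ used to \kl-deliver $(m,\id)$. By the condition of line~\ref{SBKL-cond-dlv}, at the moment $p_i$ invokes $\kldeliver(m,\id)$ at line~\ref{SBKL-dlv}, it has ur-broadcast at least $\qd$ valid signatures for $(m,\id)$. The first key observation is that the gathering steps at lines~\ref{SBKL-klc-gather-sigs} and~\ref{SBKL-rcv-gather-sigs} are \emph{monotone}: each new \bundlem that $p_i$ ur-broadcasts contains, as a subset, every signature $p_i$ has ever ur-broadcast before. Hence the most recent \bundlem ur-broadcast by $p_i$ for $(m,\id)$ contains all $\qd$ valid signatures in a single message.

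Next, I would invoke the message-adversary definition from Section~\ref{sec-model}: this \bundlem is received by at least $c-\tm$ correct processes at line~\ref{SBKL-rcv}. Fix such a correct receiver $p_j$. At line~\ref{SBKL-rcv-cond}, two cases arise. \textbf{(a)} If the predicate holds, $p_j$ aggregates the signatures at line~\ref{SBKL-rcv-gather-sigs}, ur-broadcasts the enlarged bundle at line~\ref{SBKL-rcv-bcast} (which contains at least $\qd$ valid signatures for $(m,\id)$), and calls $\checkdelivery()$ at line~\ref{SBKL-rcv-chk-dlv}. \textbf{(b)} If the predicate fails, then $p_j$ has already ur-broadcast every signature in the received bundle; by the same monotonicity argument applied to $p_j$, there is a prior bundle ur-broadcast by $p_j$ that contained all $\qd$ signatures simultaneously, immediately followed by an invocation of $\checkdelivery()$ (at line~\ref{SBKL-klc-chk-dlv} or~\ref{SBKL-rcv-chk-dlv}).

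In both cases, $p_j$ reaches a state in which the first conjunct of line~\ref{SBKL-cond-dlv} holds and $\checkdelivery()$ is executed. At that execution, either the second conjunct still holds and $p_j$ \kl-delivers $(m,\id)$ at line~\ref{SBKL-dlv}, or it has failed, meaning $p_j$ has already \kl-delivered some $(m',\id)$ (with possibly $m'\neq m$). In either situation, $p_j$ has \kl-delivered some \app with identity \id, which is precisely what the weaker global-delivery statement requires. Since this conclusion applies to at least $c-\tm$ distinct correct receivers of the latest bundle of $p_i$, the lemma follows with $\lgd = c-\tm$.

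The main obstacle is the bookkeeping of case~(b): one must verify that the cumulative behaviour of lines~\ref{SBKL-klc-gather-sigs} and~\ref{SBKL-rcv-gather-sigs} really forces the existence of an earlier bundle of $p_j$ that grouped all $\qd$ signatures together, so that the earlier $\checkdelivery()$ already saw at least $\qd$ valid signatures. This is a short argument from the algorithm's structure, but it is the only step where the proof is not entirely mechanical, and it is the reason why the signature-based scheme achieves the optimal $\lgd=c-\tm$ without needing a second forwarding threshold $\qf$.
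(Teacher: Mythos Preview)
The proposal is correct and follows essentially the same approach as the paper's proof. You make the monotonicity of the signature sets explicit (each new \bundlem for $(m,\id)$ contains all signatures previously ur-broadcast for that pair), which the paper relies on implicitly when it asserts that $p_i$ ``has necessarily ur-broadcast the $\bundlem(m,\id,\sigsv)$ \imp containing the \qd valid signatures''; otherwise the structure---propagation to $c-\tm$ correct receivers via the MA guarantee, the two-case split on line~\ref{SBKL-rcv-cond}, and the final appeal to the second conjunct of line~\ref{SBKL-cond-dlv}---is identical.
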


\begin{proof}
\sloppy
If $p_i$ \kl-delivers $(m,\id)$ at line \ref{SBKL-dlv}, then it has necessarily ur-broadcast the $\bundlem(m,\id,\sigsv)$ \imp containing the \qd valid signatures before, whether it be at line~\ref{SBKL-klc-bcast} or \ref{SBKL-rcv-bcast}.
From the definition of the MA, a set of at least $c-\tm$ correct processes, that we denote $A$, eventually receives this $\bundlem(m,\id,\sigsv)$ \imp at line~\ref{SBKL-rcv}. 
If some processes of $A$ do not pass the condition at line~\ref{SBKL-rcv-cond} upon receiving this $\bundlem(m,\id,\sigsv)$ \imp, it means that they already ur-broadcast all signatures of \sigsv.
Thus, in every scenario, all processes of $A$ eventually ur-broadcast all signatures of \sigsv at line~\ref{SBKL-klc-bcast} or~\ref{SBKL-rcv-bcast}.
After that, all processes of $A$ necessarily invoke the $\checkdelivery()$ operation at line~\ref{SBKL-klc-chk-dlv} or~\ref{SBKL-rcv-chk-dlv}, respectively, and then evaluate the condition at line~\ref{SBKL-cond-dlv}.
Hence, all correct processes of $A$, which are at least $c-\tm = \lgd$, \kl-deliver some \app for identity \id at line~\ref{SBKL-dlv}, whether it be $m$ or any other \app.
\end{proof}

\begin{lemma}[\klSGDprop] \label{lemma:sb-kl-strong-global-delivery}
If a correct process \kl-delivers an \app $m$ with identity \id, and no correct process \kl-casts an \app $m' \neq m$ with identity \id, then at least $\lgd = c-\tm$ correct processes \kl-deliver $m$ with identity \id.
\end{lemma}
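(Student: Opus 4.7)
The plan is to derive this strong global delivery property by combining the already-established weak global delivery property (Lemma~\ref{lemma:sb-kl-weak-global-delivery}) with the no-delivery-without-kl-cast result (Lemma~\ref{lemma:no-dlv-if-no-klc}). The weak version gives us a set of at least $c-\tm$ correct processes that each kl-deliver \emph{some} \app with identity \id, but possibly different ones. The extra hypothesis of this lemma---that no correct process kl-casts any $m' \neq m$ with identity \id---is precisely what is needed to rule out those alternative deliveries.

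Concretely, I would proceed in two steps. First, apply Lemma~\ref{lemma:sb-kl-weak-global-delivery} directly to the hypothesis that a correct process kl-delivers $(m,\id)$; this yields a set $A$ of at least $c-\tm$ correct processes, each of which kl-delivers some $(m',\id)$. Second, observe that by Lemma~\ref{lemma:no-dlv-if-no-klc}, if no correct process kl-casts $(m',\id)$ for any $m' \neq m$, then no correct process can kl-deliver such $(m',\id)$. Therefore, for every process in $A$, the \app it kl-delivers with identity \id must equal $m$, yielding the desired conclusion that at least $\lgd = c - \tm$ correct processes kl-deliver $(m,\id)$.

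No real obstacle is anticipated here: the lemma is a direct corollary of the two preceding results, and the only subtlety is confirming that Lemma~\ref{lemma:no-dlv-if-no-klc} applies uniformly to every $m' \neq m$ rather than merely to a single fixed one, which is immediate from its statement. The proof should be only a few lines long.
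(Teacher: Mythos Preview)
Your proposal is correct and follows essentially the same approach as the paper: apply Lemma~\ref{lemma:sb-kl-weak-global-delivery} to obtain at least $c-\tm$ correct processes that \kl-deliver some \app with identity \id, then use Lemma~\ref{lemma:no-dlv-if-no-klc} together with the hypothesis to rule out any $m' \neq m$, forcing all of them to \kl-deliver $m$.
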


\begin{proof}
If a correct process \kl-delivers $(m,\id)$ at line~\ref{SBKL-dlv}, then by Lemma~\ref{lemma:sb-kl-weak-global-delivery}, we can assert that at least $\lgd = c-\tm$ correct process eventually \kl-deliver some \app (not necessarily $m$) with identity \id.
Moreover, as no correct process \kl-casts $(m',\id)$ with $m' \neq m$, then Lemma~\ref{lemma:no-dlv-if-no-klc} holds, and we conclude that all \lgd correct processes \kl-deliver $(m,\id)$.
\end{proof}


\section{Numerical Evaluation}
\label{apx:numerical}
This section presents additional numerical results that complement those of Section~\ref{sec:numer-eval-mbrb}, and provides concrete lower-bound values for the \kld and \lgd parameters of the \kl-cast objects used in the reconstructed Bracha MBRB algorithm (Alg.~\ref{alg:b-mbrb}, page~\pageref{alg:b-mbrb}).
Results were obtained by considering a network with $n=100$ processes and varying
values of \tb and \tm.
Fig.~\ref{fig:BrachaE-heatmap} and Fig.~\ref{fig:BrachaR-heatmap} present the values of \kld and \lgd for the \obje and \objr of Alg.~\ref{alg:b-mbrb}.

The numbers in each cell show the value of \kld (Figs.~\ref{fig:BrachaEk-heatmap} and~\ref{fig:BrachaRk-heatmap}), resp. \lgd (Figs.~\ref{fig:BrachaEl-heatmap} and~\ref{fig:BrachaRl-heatmap}) that is required, resp. guaranteed, by the corresponding \kl-cast object.
The two plots show the two roles of the two \kl-cast objects.
The first, \obje, needs to provide agreement among the possibly different messages sent by Byzantine processes (Fig.~\ref{fig:BrachaE-heatmap}).
As a result, it can operate in a more limited region of the parameter space.
\objr, on the other hand, would, in principle, be able to support larger values of \tm and \tb, but it needs to operate in conjunction with \obje (Fig.~\ref{fig:BrachaR-heatmap}).

Fig.~\ref{fig:IR-full-heatmap} on page~\pageref{fig:IR-full-heatmap} already displays the values of $\ell$ provided by \objw in the Imbs-Raynal algorithm.
Fig.~\ref{fig:IR-k-heatmap} complements it by showing the required values of $k$ for \objw.
The extra constraint introduced by chaining the two objects suggests that a single \kl-cast algorithm could achieve better performance.
But this is not the case if we examine the performance of the reconstructed Imbs-Raynal algorithm depicted in Fig.~\ref{fig:IR-full-heatmap}. The reason lies in the need for higher quorum values in \objw due to $\single=\ffalse$. 
In the future, we plan to investigate if variants of this algorithm can achieve tighter bounds and explore the limits of signature-free \kl-cast-based broadcast in the presence of an MA and Byzantine processes.


\newcommand{\widthheatmap}{0.35}

\begin{figure}[ht]
\centering{
\begin{subfigure}[t]{0.45\textwidth}
\includegraphics[scale=\widthheatmap]{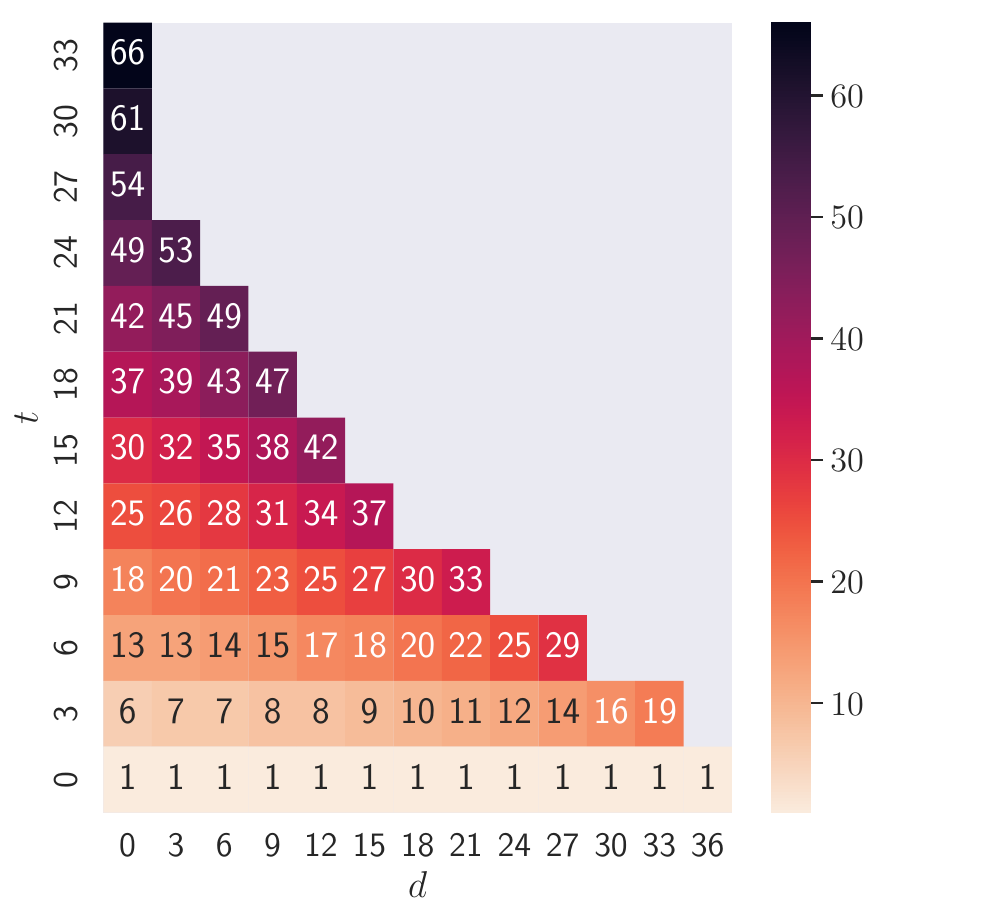}
\caption{Minimum required \kld}
\label{fig:BrachaEk-heatmap} 
\hspace{10ex}
\end{subfigure}
\begin{subfigure}[t]{0.45\textwidth}
\includegraphics[scale=\widthheatmap]{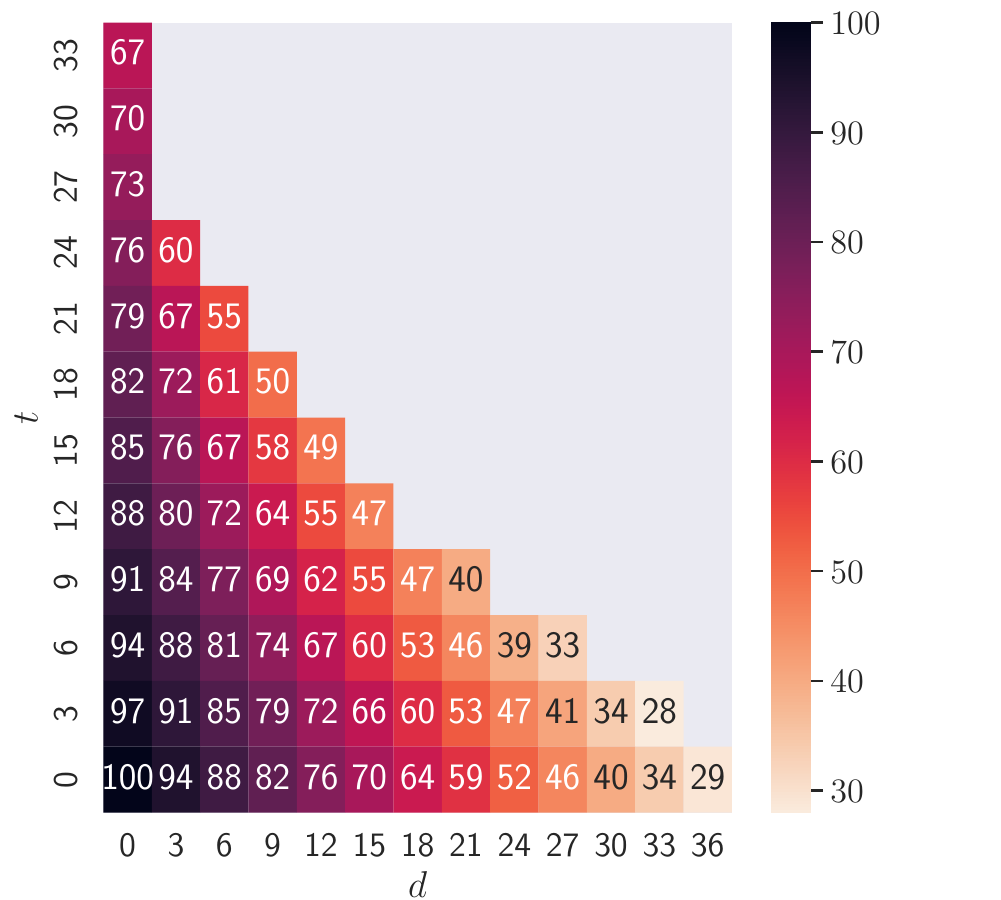}
\caption{Minimum provided \lgd}
\label{fig:BrachaEl-heatmap} 
\end{subfigure}
}
\caption{Required values of \kld and provided values of \lgd for \obje in the reconstructed Bracha BRB algorithm with varying values of \tb and \tm within the ranges that satisfy \bassum}
\label{fig:BrachaE-heatmap}
\end{figure}

\begin{figure}[ht]
\centering{
\begin{subfigure}[t]{0.45\textwidth}
\includegraphics[scale=\widthheatmap]{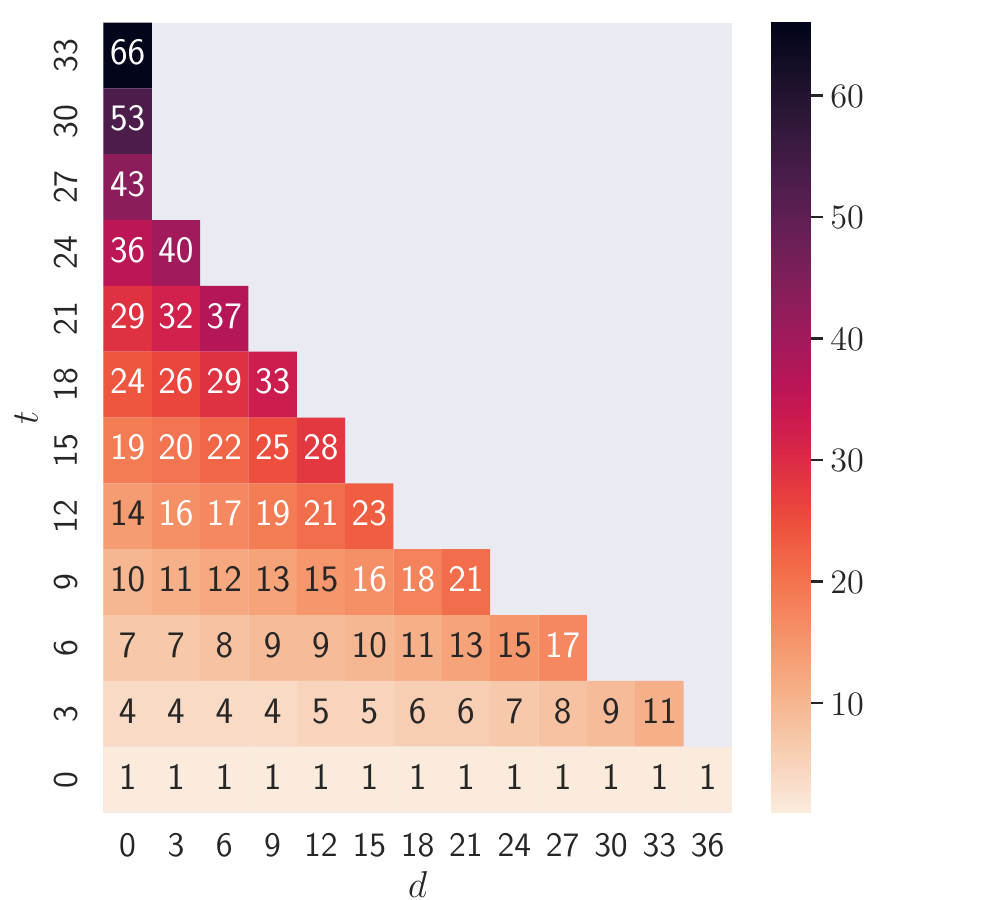}
\caption{Minimum required \kld}
\label{fig:BrachaRk-heatmap} 
\hspace{10ex}
\end{subfigure}
\begin{subfigure}[t]{0.45\textwidth}
\includegraphics[scale=\widthheatmap]{plots/klcast-kmin-OBJ_R-BRACHA-l-heatmap.pdf}
\caption{Minimum provided \lgd}
\label{fig:BrachaRl-heatmap} 
\end{subfigure}
}
\caption{Required values of \kld and provided values of \lgd for \objr in the reconstructed Bracha BRB algorithm with varying values of \tb and \tm within the ranges that satisfy \bassum}
\label{fig:BrachaR-heatmap}
\end{figure}

\begin{figure}[ht]
\centering{
\includegraphics[scale=\widthheatmap]{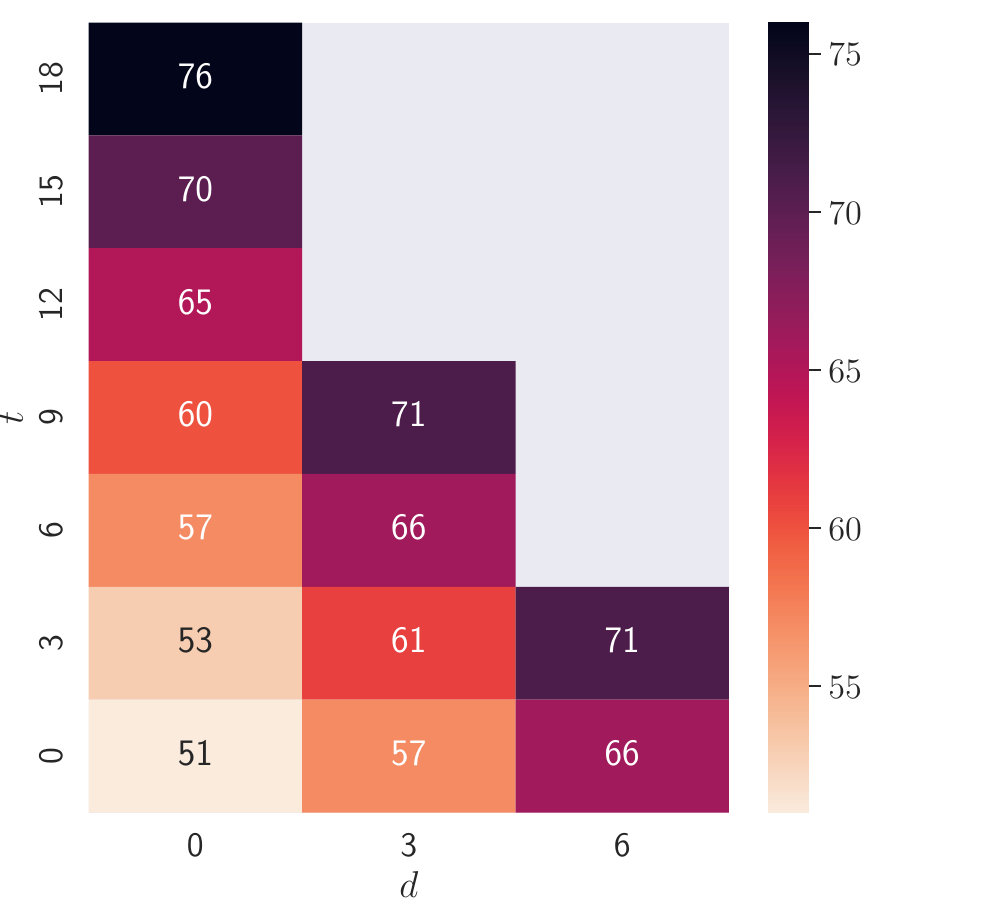}
}
\caption{Required values of \kld for \objw in the reconstructed Imbs \& Raynal BRB algorithm with varying values of \tb and \tm within the ranges that satisfy \irassum}
\label{fig:IR-k-heatmap}
\end{figure}


\end{document}

In this section, we prove that all the assumptions of the signature-free \kl-cast implementation presented in Alg.~\ref{obj:klcast_sf} (page~\pageref{obj:klcast_sf}) are well respected for the two \kl-cast instances used in Alg.~\ref{alg:b-mbrb} (\obje and \objr).

